\documentclass[11pt]{article}
\usepackage[margin=1in]{geometry} 
\usepackage[dvipsnames]{xcolor}
\usepackage[colorlinks=true,       
linkcolor=black,        
citecolor=blue,        
filecolor=black,     
urlcolor=black         
]{hyperref}

\usepackage[authoryear,round]{natbib}
\setcitestyle{citesep={;}}

\usepackage{url} 

\usepackage{amsthm,amsmath,amssymb,amsfonts} 

\usepackage[ruled, linesnumbered, vlined]{algorithm2e}
\SetKwInput{KwInput}{Input} 

\usepackage[font=small]{caption}

\usepackage[shortlabels]{enumitem}


\usepackage[title,toc]{appendix}

\usepackage[labelformat=simple]{subcaption}

\usepackage{graphicx}

\usepackage[export]{adjustbox}

\usepackage{booktabs} 
\usepackage{multirow} 
\usepackage{bbm}
\usepackage{authblk}


\newtheorem{theorem}{Theorem}
\newtheorem{proposition}{Proposition}
\newtheorem{lemma}{Lemma}

\usepackage{xpatch}
\xpatchcmd{\proof}{\itshape}{\normalfont\proofnamefont}{}{}
\newcommand{\proofnamefont}{\bfseries}

\newcommand{\RS}{\mathbb{R}}
\newcommand{\PS}{\mathcal{P}}
\newcommand{\QS}{\mathcal{Q}}

\newcommand{\Y}{Y}
\newcommand{\dbeta}{d_{S_0,r}}

\newcommand{\T}{ \top }
\def\qt#1{\qquad\text{#1}}

\def\argmax{\mathop{\rm argmax}}
\newcommand{\diff}{d} 

\newcommand{\ball}{\mathrm{B}}

\newcommand{\lipsz}{\mathfrak{L}}
\newcommand{\Log}{\mathrm{Log}~}

\newcommand{\fv}{\mathrm{f}}

\newcommand{\gv}{\mathrm{g}}

\newcommand{\dash}{-}
\newcommand{\pb}{\mathfrak{p}} 
\newcommand{\Expt}{\mathbb{E}} 
\newcommand{\E}{\mathbb{E}}
\newcommand{\PP}{\mathbb{P}}

\newcommand{\EB}{\mathrm{EB}}
\newcommand{\OB}{\mathrm{OB}}

\newcommand{\dlp}{\mathfrak{d}_{\mathrm{LP}}}


%
%

\usepackage{xcolor}


\newenvironment{revise}[0]{\color{black}}{} 

\begin{document}


\title{A Nonparametric Maximum Likelihood Approach to Mixture of Regression}
\author[$\ast$]{Hansheng Jiang}
\author[$\dagger$]{Adityanand Guntuboyina}
\date{}
\affil[$\ast$]{Rotman School of Management, University of Toronto}
\affil[$\dagger$]{Department of Statistics, University of California, Berkeley}
\maketitle
\renewcommand{\thefootnote}{\fnsymbol{footnote}}
\footnotetext[1]{\texttt{hansheng.jiang@rotman.utoronto.ca}}
\footnotetext[2]{\texttt{aditya@stat.berkeley.edu}}

\begin{abstract}
We study mixture of linear regression (random coefficient) models,
which capture population heterogeneity by allowing the regression
coefficients to follow an unknown distribution \(G^*\). In contrast to
common parametric methods that fix the mixing distribution form and
rely on the EM algorithm, we develop a fully nonparametric maximum
likelihood estimator (NPMLE). We show that this estimator exists under
broad conditions and can be computed via a discrete approximation
procedure inspired by the exemplar method. We further establish
theoretical guarantees demonstrating that the NPMLE achieves
near-parametric rates in estimating the conditional density of \(Y
\mid X\), both for fixed and random designs, when \(\sigma\) is known
and \(G^*\) has compact support. In the random design setting, we
also prove consistency of the estimated mixing distribution in the
L\'evy--Prokhorov distance. Numerical experiments indicate that our
approach performs well and 
additionally enables posterior-based individualized coefficient
inference through an empirical Bayes framework. 
\end{abstract}

\noindent%
{\small {\it Keywords:}  conditional gradient method, empirical Bayes,
Hellinger distance, nonparametric 
  maximum likelihood estimator (NPMLE), random coefficient
  regression. }

\section{Introduction}
\label{section:introduction}

Given a univariate response $Y$ and a $p$-dimensional
regressor $X$, the linear
regression model with homoscedastic Gaussian errors assumes that $Y
\mid X = x$ is normal with mean $x^\T \beta$ and variance $\sigma^2$
for some $\beta \in \mathbb{R}^p, \sigma > 0$. In
contrast, the mixture of linear regression model assumes $Y \mid
X = x$ has the \textit{mixture} density  (below $\phi$ denotes
the standard normal density)
\begin{equation}\label{cond.dens}
  y \mapsto f_x^{G^*}(y) := \int \frac{1}{\sigma} \phi \left(\frac{y -
      x^\T \beta}{\sigma} 
\right) \diff G^*(\beta)
\end{equation}
for some probability measure $G^*$ on $\mathbb{R}^p$ and $\sigma >0
$. Equivalently, given $X = x$, 
\begin{equation}\label{model:mixlin}
  Y = x^\T \beta + \sigma \epsilon \qt{where $\beta \sim G^*$ and
    $\epsilon \sim N(0, 1)$ are independent}.  
\end{equation}
Mixture of linear regression models, also known as random coefficient
regression models \citep{hildreth1968some, Longford1994RandomCM,
  beran1994minimum, beran1992estimating, beran1996nonparametric}, 
offer a simple way to capture population heterogeneity
\citep{quandt1958estimation, de1989mixtures, jordan1994hierarchical,
  faria2010fitting}. They have been widely used in numerous fields, including biology
\citep{martin2008chipmix}, economics 
\citep{battisti2008spatially}, engineering \citep{liem2015surrogate},
epidemiology \citep{turner2000estimating}, marketing
\citep{wedel2012market}, and transportation
\citep{kim2023finite}.

Suppose we observe $n$ independent observations $(x_1, y_1), \dots, (x_n,
y_n)$ from \eqref{model:mixlin}, i.e., 
\begin{equation}\label{model:mixlin_i_1}
  y_i = x_i^\T \beta^i + \sigma \epsilon_i \qt{for $i = 1, \dots, n$,}
\end{equation}
where $\beta^1, \dots, \beta^n, \epsilon_1, \dots, \epsilon_n$ are independent with
\begin{equation}\label{model:mixlin_i_2}
  \beta^1, \dots, \beta^n \overset{\text{i.i.d.}}{\sim} G^* \text{ and
  } \epsilon_1, \dots, \epsilon_n \overset{\text{i.i.d.}}{\sim} N(0, 1). 
\end{equation}
If $\sigma$ and $G^*$ are known, this is a
Bayesian model with parameters $\beta^i, i = 1, \dots, n$, and one can
perform individual inference on each
$\beta^i$ via its posterior distribution: 
\begin{equation}\label{fullpost}
  \mathbb{P} \left\{\beta^i \in A \mid x_i, y_i \right\} = \frac{\int_A
    \frac{1}{\sigma} \phi \left(\frac{y_i - x_i^{\T} \beta}{\sigma}
    \right) dG^*(\beta)}{\int
    \frac{1}{\sigma} \phi \left(\frac{y_i - x_i^{\T} \beta}{\sigma}
    \right) dG^*(\beta)}
\end{equation}
for subsets $A \subseteq \mathbb{R}^p$. Point
estimates for $\beta^i$ can be obtained via the
posterior mean: 
\begin{equation}\label{postmean}
\hat{\beta}^i_{\OB} :=  \mathbb{E} \left( \beta^i \mid x_i, y_i \right) = \frac{\int
    \frac{1}{\sigma} \phi \left(\frac{y_i - x_i^{\T} \beta}{\sigma}
    \right) \beta dG^*(\beta)}{\int
    \frac{1}{\sigma} \phi \left(\frac{y_i - x_i^{\T} \beta}{\sigma}
    \right) dG^*(\beta)}. 
\end{equation}
The subscript OB in $\hat{\beta}^i_{\OB}$ stands for ``Oracle
Bayes''; Oracle here is used to refer to the fact that $G^*$ is
typically unknown and thus known only to an Oracle. 

This ability to do individual inference on the regression coefficient
$\beta^i$ corresponding to each separate data point $(x_i, y_i)$ is
the main attractive feature of the mixture of linear regression
model. This would, of course, require knowledge of $G^*$ (as well as
$\sigma$). The goal of this paper is to study the problem of
estimating $G^*$ from the data $(x_1, y_1), \dots, (x_n, y_n)$. We
shall assume for most of the paper that $\sigma$ is known. In
practice, it is easy to estimate $\sigma$ by $\hat{\sigma}$ using a
simple cross-validation procedure as described in Subsection
\ref{subsection:select_sigma}. If 
$G^*$ is estimated by, say, a discrete probability measure $\hat{G} :=
\sum_{j=1}^{\hat{k}} \hat{\pi}_j \delta_{\{\hat{\beta}_j\}} $, then
the posterior distribution \eqref{fullpost} and the posterior mean
\eqref{postmean} will be estimated by
\begin{equation}
  \label{postmean.est}
  \sum_{j=1}^{\hat{k}} \left[\frac{\hat{\pi}_j \phi
    \left(\frac{y_i - x_i^{\T} \hat{\beta}_j}{\hat{\sigma}} \right)}{\sum_{l=1}^{\hat{k}}
    \hat{\pi}_l \phi \left(\frac{y_i - x_i^{\T}
        \hat{\beta}_l}{\hat{\sigma}} 
    \right)} \right] \delta_{\{\hat{\beta}_j\}} ~~\text{ and }~~ \hat{\beta}^i_{\EB} := \frac{\sum_{l=1}^{\hat{k}} \hat{\beta}_l \hat{\pi}_l \phi 
    \left(\frac{y_i - x_i^{\T} \hat{\beta}_l}{\hat{\sigma}} \right)}{\sum_{l=1}^{\hat{k}}
    \hat{\pi}_l \phi \left(\frac{y_i - x_i^{\T} \hat{\beta}_l}{\hat{\sigma}} \right)},
\end{equation}
respectively. These can be used for approximate individual inference
for $\beta^i, i = 1, \dots, n$. $\EB$ in
$\hat{\beta}^i_{\EB}$ denotes ``Empirical Bayes'' (empirical as
$G^*, \sigma$ are estimated from data).   

Most existing methods for estimating \(G^*\) assume a parametric form,
such as a discrete distribution with a known number of atoms, and then
use maximum likelihood via the EM algorithm
\citep{leisch2004flexmix,faria2010fitting}. In contrast, we take a
nonparametric approach with no parametric assumptions on \(G^*\)
but still relying on maximum likelihood. We thus use
\emph{nonparametric maximum likelihood estimation} (NPMLE). 
 

NPMLE for mixture models has a long history, beginning with
\citet{robbins1950generalization} and \citet{kiefer1956consistency}.
Comprehensive treatments include
\citet{lindsay1995mixture,groeneboom1992information,bohning2000computer,schlattmann2009medical},   
with renewed interest more recently for normal mixtures
\citep{zhang2009generalized,koenker2014convex,dicker2016high,saha2020nonparametric,
deb2021two,polyanskiy2020self}.
Beyond normal mixtures, \citet{gu2020nonparametric} study mixtures of
binary regression, and \citet{jagabathula2020conditional} deal with
mixtures of logit models. 


The likelihood function here is the conditional density of $y_1,
\dots, y_n$ given $x_1, \dots, x_n$ and its logarithm (the
log-likelihood function) is given by  
\begin{equation}\label{eq:defn_f_G}
G \mapsto  \sum_{i=1}^n \log f_{x_i}^G(y_i) \qt{where  
$f^G_{x_i}(y_i) = \frac{1}{\sigma}\int \phi\left( \frac{y_i -x_i^\T
    \beta}{\sigma} \right) \diff G(\beta), i=1,\dots,n.$}  
\end{equation}
We impose bounds on the support of $G$ in the maximization of the
likelihood. Specifically, we consider, for a given set $K \subseteq
\mathbb{R}^p$, the NPMLE: 
\begin{equation}\label{eq:maximize_G}
\hat{G}   \in \argmax \left\{ \sum_{i=1}^n \log f_{x_i}^G(y_i) : G
  \text{ is a probability supported on } K \right\}  
\end{equation}
If no information about the support of $G^*$ is available, then one
can either take $K$ to be $\mathbb{R}^p$ or a large compact set
such as a closed ball centered at the origin having a large radius. 

The optimization in \eqref{eq:maximize_G} is infinite-dimensional (as $K$
is usually uncountable) and convex as
the constraint set (the set of all probability measures on $K$) is
convex and the objective function is concave in $G$. In 
Section~\ref{section:existenceandcomputation}, we prove $\hat{G}$ exists
when $K$ is compact or when $K$ satisfies a technical condition which
holds when $K = \mathbb{R}^p$. We provide an  
iterative algorithm for computing  an approximate
 solution $\hat{G}$ that is discrete. This algorithm is inspired by
 the exemplar method that was previously used for computing
 approximate NPMLEs in Gaussian location mixture density estimation
 (see e.g., \citet{bohning1992computer}, \citet{lashkari2008convex},
 \citet{soloff2021multivariate}) but our setting  introduces
additional complications (especially in computing the exemplars)
detailed in Section \ref{section:existenceandcomputation}.  


Our estimator $\hat{G}$ performs well without excessive overfitting
despite being obtained by maximization over a very large class of
probability measures. We prove that the estimated conditional density
function $(x, y) \mapsto f_{x}^{\hat{G}}(y)$ approximates the true
conditional density $(x, y) \mapsto f_{x}^{G^*}(y)$ with high accuracy
when $G^*$ is compactly supported and $\sigma$ is known. Using loss
functions based on squared Hellinger distance, we establish 
theoretical guarantees in both fixed and random design settings
(Theorems \ref{mainthem} and
\ref{them:randomdesign_predictionerror}). These results demonstrate
that our fully nonparametric approach effectively estimates $G^*$
while achieving near-parametric rates for conditional density
estimation. For random designs, we also prove $\hat{G}$ is
consistent for $G^*$ with their Lévy-Prokhorov distance converging to
zero in probability as $n 
\rightarrow \infty$. 



The remainder of this paper 
is organized as
follows. Section~\ref{section:existenceandcomputation} discusses  
existence and computation of the NPMLE. Section~\ref{section:hellinger_distance_bound} contains our
theoretical results on the accuracy of the
NPMLE. Section~\ref{section:experimental} contains experimental
results  
including simulation studies  and real data
analysis. Section \ref{section:conclusions} discusses issues naturally
connected to our main results. Proofs of all our results are in Section
\ref{allproofs} of the supplement. The supplement also contains two
tables (see Section \ref{subsec:detailed_coef}) showing the results
of simulations in Sections 
\ref{sec:sinusoid_simulation} and \ref{sec:changepoint_simulation}.    

\section{Existence and Computation}
\label{section:existenceandcomputation}

Our first result is on existence, and it uses the following notation. For a probability measure $G$, let $\fv^G =
(f^G_{x_1}(y_1),\dots,f^G_{x_n}(y_n))^\T$ with $f_{x_i}^G(y_i)$ in
\eqref{eq:defn_f_G}.  When $G$ is the Dirac measure concentrated on
some $\beta \in \mathbb{R}^p$, we write $\fv^{\beta}$ for $\fv^G$. Let $\PS_K :=
\left\{\fv^{\beta} : \beta \in K \right\}$. 

\begin{theorem}\label{thm:existence}
  Assume $K$ is closed and satisfies
 one of the following two conditions:
  \begin{enumerate}
  \item $K$ is bounded (and hence compact).
  \item $P_V(x) \in K$ for every $x \in K$ and linear subspace $V$ of
     $\mathbb{R}^p$ (here $P_V(x)$ is the projection of $x$ onto the
    linear subspace $V$). 
  \end{enumerate}
  Then, for every dataset $(x_1, y_1), \dots, (x_n, y_n)$, the
  optimization problem in \eqref{eq:maximize_G} admits a solution
  $\hat{G}$ that is a probability measure supported 
  on at most $n$ points in $K$. Moreover the vector $\fv^{\hat{G}}$ is
  unique for every maximizer $\hat{G}$ and is the unique solution to: 
  \begin{equation}
\text{maximize} ~~~ L(\fv) := \frac{1}{n} \sum_{i=1}^n \log \fv(i) ~~~
\text{subject to} ~~ \fv = (\fv(1), \dots, \fv(n)) \in
\mathrm{conv}(\PS_K),  
\label{conProbVectorOpt}
\end{equation}
where $\mathrm{conv}(\PS_K)$ denotes the convex hull of the set
$\PS_K$. 
 \end{theorem}
When $K$ is compact, $\PS_K$ is also compact,
and the existence of $\hat{G}$ follows directly from \citet[Theorem
18]{lindsay1995mixture}. When $K$ is not compact (e.g., $K =
\mathbb{R}^p$), $\PS_K$ fails to be compact as $0 \notin \PS_K$ is a limit
point of $\PS_K$. Although \citet[Subsection 5.2.2]{lindsay1995mixture} discusses
non-compact $\PS_K$, their approaches do not
directly apply here --- for example, it is unclear if $\PS_K \cup \{0\}$
is compact in our case. Consequently, our argument is
more involved, and we require the technical condition 2 on $K$ in Theorem \ref{thm:existence}. 


Next we show $\hat{G}$ is not unique if the design
matrix $\mathbf{X} = [x_1:x_2:\dots:x_n]^\T$ does not have full column
rank. We do not know if $\hat{G}$ will be unique if $\mathbf{X}$ is of
full column rank.  

\begin{proposition}
If $\mathbf{X}$ does not
have full column rank, then $\hat{G}$ is not unique. 
\label{proposition:nonunique}
\end{proposition}

Next result is a characterization of $\hat{G}$ via
the first order optimality condition for \eqref{eq:maximize_G}. 
\begin{proposition}\label{prop:opt_condition}
  $\hat{G}$ solves \eqref{eq:maximize_G} if and only if
  \begin{equation}\label{eq:D_G_beta}
  D(\hat{G}, \beta) :=  \frac{1}{n} \sum_{i=1}^n
    \frac{f^{\beta}_{x_i}(y_i)}{f^{\hat{G}}_{x_i}(y_i)} - 1 \leq 0 \qt{for
      all $\beta \in K$}. 
  \end{equation}
  Further, for every $\hat{G}$ maximizing \eqref{eq:maximize_G}, we
  have
  \begin{equation}\label{eq:D_G_as}
    D(\hat{G}, \beta) = 0 \qt{for $\beta$ a.s $\hat{G}$}. 
  \end{equation}
  If $\hat{G}$  is discrete and $\tilde{\beta} \in \text{int}(K)$
  ($\text{int}(K)$ is the interior of $K$) is a support point of $\hat{G}$,
  then the gradient of $D(\hat{G}, \beta)$ with respect to $\beta$
  equals 0 at $\beta = \tilde{\beta}$. 
\end{proposition}
Proposition
\ref{prop:opt_condition}, along with the
Carath{\'e}odory theorem, leads to the  following. For a probability
vector $w = (w_1, \dots, w_n)$  with $w_i \geq 0$ and $\sum_{i=1}^n
w_i = 1$, let  
  \begin{equation}\label{sw}
    S(w) := \left\{\beta \in \mathbb{R}^p : \left(\sum_{i=1}^n w_i x_i
      x_i^T \right) \beta = \sum_{i=1}^n w_i x_i y_i \right\}. 
  \end{equation}
If $\sum_{i=1}^n w_i x_i x_i^T$ is nonsingular, then $S(w)$ is the
singleton $\left(\sum_{i=1}^n w_i x_i
  x_i^T \right)^{-1} \left(\sum_{i=1}^n w_i x_i y_i \right)$. 

\begin{proposition}\label{exem_lemma}
  If $\hat{G}$ is a discrete solution to \eqref{eq:maximize_G} and
  $\tilde{\beta} \in \text{int}(K)$ is a support point of $\hat{G}$, then
  $\tilde{\beta} \in 
  S(w)$ for some probability vector $w$ with at most $p+1$ non-zero
  entries. 
\end{proposition}

Proposition \ref{exem_lemma} shows every support point in
$\text{int}(K)$ of every discrete NPMLE $\hat{G}$ is in
$\mathcal{M} := \cup_w S(w)$, the union being over all
$(p+1)$-sparse probability vectors $w$. This
suggests the following algorithm to compute an 
approximate solution to \eqref{eq:maximize_G}. The basic idea (see
e.g., \cite{koenker2014convex}) is to restrict the support of $G$ to a
finite set $\mathcal{A} \subseteq \mathcal{M}$ that is constructed as
follows. Fix a large $M$ and generate $\beta^{(j)}, j = 1,
\dots, M$ in $K$ as follows: 
\begin{enumerate}
\item Generate a random subset $S \subseteq \{1, \dots, n\}$ of
  cardinality $p+1$.
\item Generate a probability vector $w_i, i \in S$. We use the simple choice
   $w_i = 1/(p+1), i \in S$.
\item Take $\beta^{(j)} = \left(\sum_{i \in S} w_i x_i x_i^T \right)^{-1} \left(\sum_{i \in S} w_i
  x_i y_i\right)$. If $\sum_{i \in S} w_i x_i x_i^T$ is singular
or if the generated $\beta^{(j)}$  is not in $K$, then discard it and
repeat steps 1, 2, 3.  
\end{enumerate}
With $\mathcal{A} = \{\beta^{(1)}, \dots, \beta^{(M)}\}$, we solve the
following discrete approximation to \eqref{eq:maximize_G}: 
\begin{equation}\label{disc.prob}
\text{maximize} ~~~ \frac{1}{n} \sum_{i=1}^n \log
\left(\sum_{j=1}^M w_j \fv_{x_i}^{\beta^{(j)}}(y_i) \right) ~~~
\text{subject to} ~~ w_1, \dots, w_M \geq 0 \text{ with } \sum_{j=1}^M w_j
= 1. 
\end{equation}
\eqref{disc.prob} can be solved via standard
algorithms such as the conditional gradient method (e.g.,
\citet{jaggi2013revisiting}) or  via standard
software for convex optimization such as \texttt{mosek}   
\citep{rmosek}.

We summarize our overall algorithm in Algorithm~\ref{exemplaralgo}. 

\begin{algorithm}[H] \label{exemplaralgo}
\caption{Exemplar Algorithm for obtaining an approximate NPMLE $\hat{G}$}
\KwIn{Data $(x_1, y_1), \dots, (x_n, y_n)$, $\sigma > 0$,
  constraint $K$, integer $M$ (e.g., $M = 4n$)}

Generate candidate vectors $\beta^{(j)}$, $1 \leq j \leq M$, in $K$ using
\[
\beta^{(j)} = \left(\sum_{i \in S} x_i x_i^T\right)^{-1} \left(\sum_{i \in S} x_i y_i\right)
\]
where $S$ is uniformly randomly generated with $|S| = p+1$ (discard
$\beta^{(j)} \notin K$)\

Solve
\eqref{disc.prob} to obtain $\hat{w}_1, \dots, \hat{w}_M$. 

\KwOut{$\hat{G} = \sum_{j=1}^M \hat{w}_j
  \delta_{\{\beta^{(j)}\}}$ is our approximate solution to
  \eqref{eq:maximize_G}.} 
\end{algorithm}

We refer to Algorithm \ref{exemplaralgo} as the ``Exemplar
Method'' because $\beta^{(1)}, \dots, \beta^{(M)}$ are 
examples for the possible support points of $\hat{G}$. For Gaussian
location mixtures, exemplar methods have been employed by
\citet{bohning1992computer}, \citet{lashkari2008convex} and
 \citet{soloff2021multivariate}. Exemplar methods  avoid placing grids
 and are thus useful in  multidimensions.

\section{Theoretical Accuracy
  Results} \label{section:hellinger_distance_bound}

We provide theoretical guarantees for our estimator
$\hat{G}$. For these results, we assume $K$ in definition
\eqref{eq:maximize_G} takes the form $K = \{\beta \in \mathbb{R}^p :
  \|\beta\| \le R\}$ for some $R > 0$, where $\|\cdot\|$ denotes the
Euclidean norm. While any compact set $K$ can be embedded in such a
ball, our results specifically require this ball formulation and do
not extend to non-compact sets. 


We focus on convergence rates for estimating the conditional density
of $Y$ given $X$. Under model \eqref{model:mixlin}, the true
conditional density (of $Y$ given $X = x$) is $f_x^{G^*}(\cdot)$,
while our estimate is $f_x^{\hat G}(\cdot)$. We use their discrepancy
via the squared
Hellinger distance: 
\begin{equation}
  \mathfrak{H}^2\left(f^{\hat{G}}_x, f^{G^*}_x \right)  = \int \left\{
    \sqrt{f^{\hat{G}}_x(y)} - \sqrt{f^{G^*}_x(y)} \right\}^2 \diff y.  
  \label{eq:hellinger}
\end{equation}
To evaluate overall estimation accuracy in fixed design, we average across all
design points: 
\begin{equation}
\mathfrak{H}_{\mathrm{fixed}}^2\left(f^{\hat{G}}, f^{G^*}\right) = \frac{1}{n}
\sum_{i=1}^n \mathfrak{H}^2\left(f^{\hat{G}}_{x_i},
  f^{G^*}_{x_i}\right). 
\label{eq:fixed_loss}
\end{equation}
The following theorem gives a bound on
$\mathfrak{H}_{\mathrm{fixed}}^2(f^{\hat{G}}, f^{G^*})$ that holds for every $x_1,
 \dots, x_n$. 

\begin{theorem}[Fixed design conditional density estimation
  accuracy]\label{fdrate} \label{mainthem}
  Consider data $(x_1, y_1),$ $\dots,$ $(x_n, y_n)$ with $n \geq 3$ where
$x_1, \dots, x_n$ are fixed and $y_i \overset{\text{ind}}{\sim}
f_{x_i}^{G^*}(\cdot)$. Assume that
\begin{equation*}
  G^* \left\{\beta \in \mathbb{R}^p : \|\beta\| \le R \right\} = 1 ~~ \text{ and } ~~ \max_{1 \leq i
    \leq n} \|x_i\| \leq B 
\end{equation*}
for some $B > 0$ and $R > 0$. Let $\hat{G}$ be the estimator for $G^*$
defined as in \eqref{eq:maximize_G} with $K = \left\{\beta \in
  \mathbb{R}^p : 
  \|\beta\| \leq R \right\}$. Let $\epsilon_n = 
\epsilon_n(B, R, \sigma)$ be defined via  
\begin{equation}\label{eq:epsilon_defn}
  \epsilon^2_n := n^{-1} \max \left(\left(\Log  \frac{n}{\sqrt{\sigma}} \right)^{p+1} , \left(\frac{R B}{\sigma} \right)^p \left(\Log \left\{\frac{n}{\sqrt{\sigma}} \left(\frac{\sigma}{RB} \right)^p \right\} \right)^{\frac{p}{2} + 1}  \right),
\end{equation}
where we use $\Log x := \max(\log x, 1)$.  Then there exists a
constant $C_p$ such that  
\begin{equation}
\label{tailbound}
\PP \left\{\mathfrak{H}_{\mathrm{fixed}}(f^{\hat{G}}, f^{G^*}) \geq t \epsilon_n \sqrt{C_p}  \right\} \leq  \exp(-nt^2 \epsilon_n^2) \qt{ for every $t \geq 1$},  
\end{equation}
and 
\begin{equation}
\E\mathfrak{H}_{\mathrm{fixed}}^2(f^{\hat{G}}, f^{G^*}) \leq C_p
\epsilon^2_n.  
\label{expectationbound}
\end{equation}
\end{theorem}

We next consider the random design setting with common design density
$\mu$ and loss: 
\begin{equation*}
  \mathfrak{H}^2_{\mathrm{random}} \left(f^{\hat{G}}, f^{G^*} \right)
  := \int \mathfrak{H}^2\left(f^{\hat{G}}_x, f_{x}^{G^*} \right) d\mu(x).
\end{equation*}

\begin{theorem}[Random design conditional density estimation
  accuracy]\label{them:randomdesign_predictionerror}
Consider $n \geq 3$ and i.i.d. data $(x_1, y_1), \dots, (x_n, y_n)$
with $n \geq 3$ with $x_i \sim \mu$ and $y_i|x_i \sim
f_{x_i}^{G^*}(\cdot)$. Assume
\begin{equation*}
  G^* \left\{\beta \in \mathbb{R}^p : \|\beta\| \le R \right\} = 1 ~~
  \text{ and } ~~ \mu \{x \in \mathbb{R}^p : \|x\| \leq B\} = 1
\end{equation*}
for some $B > 0$ and $R > 0$. Let $\hat{G}$ be the estimator for $G^*$
defined as in 
\eqref{eq:maximize_G} with $K = \left\{\beta \in \mathbb{R}^p :
  \|\beta\| \leq R \right\}$. Let $\epsilon_n$ be as in
\eqref{eq:epsilon_defn} and let $\beta_n$ be such that its square
$\beta_n^2$ equals: 
\begin{equation}\label{eq:beta_defn}
 n^{-1} \max \left(\left(\Log  \frac{n(BR +
        \sigma)^2}{\sigma^2} \right)^{p+1} , \left(\frac{R B}{\sigma}
    \right)^p \left(\Log \left\{\frac{n (BR + \sigma)^2}{\sigma^2}
        \left(\frac{\sigma}{RB} \right)^p \right\}
    \right)^{\frac{p}{2} + 1}  \right), 
\end{equation}
where, again, $\Log x := \max(\log x, 1)$. Then there exists $C_p$
such that
\begin{equation}
\label{tailbound.random}
\PP \left\{\mathfrak{H}_{\mathrm{random}}(f^{\hat{G}}, f^{G^*}) \geq t
  \left(\epsilon_n + \beta_n \right) \sqrt{C_p}  \right\} \leq
\exp(-nt^2 \epsilon_n^2) + \exp \left(-\frac{nt^2 \beta_n^2}{C_p}
\right) 
\end{equation}
for every $t \geq 1$, and 
\begin{equation}
\E\mathfrak{H}_{\mathrm{random}}^2(f^{\hat{G}}, f^{G^*}) \leq C_p
\left(\epsilon^2_n + \beta_n^2 \right). 
\label{expectationbound.random}
\end{equation}
\end{theorem}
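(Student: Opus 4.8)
The plan is to deduce the random-design bound from the fixed-design bound of Theorem~\ref{mainthem} in two moves: first condition on the covariates to import the fixed-design guarantee, and then bound, uniformly over all candidate mixing measures $G$ on $K$, the discrepancy between the empirical average $\frac1n\sum_{i=1}^n\mathfrak{H}^2(f_{x_i}^{G},f_{x_i}^{G^*})$ and its population version $\int\mathfrak{H}^2(f_x^{G},f_x^{G^*})\,d\mu(x)$.

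\textbf{Conditioning on the design.} Because $\mu$ is supported on $\{\|x\|\le B\}$, one has $\max_{1\le i\le n}\|x_i\|\le B$ almost surely, so conditionally on $x_1,\dots,x_n$ the data fit the fixed-design setup of Theorem~\ref{mainthem}, giving
\[
\PP\bigl\{\mathfrak{H}_{\mathrm{fixed}}(f^{\hat G},f^{G^*})\ge t\epsilon_n\sqrt{C_p}\;\big|\;x_1,\dots,x_n\bigr\}\le\exp(-nt^2\epsilon_n^2)\qquad(t\ge1).
\]
Since the right side is free of the design, this also holds unconditionally, as does the corresponding conditional version of \eqref{expectationbound}.

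\textbf{The uniform multiplicative deviation inequality (the crux).} I would establish that, on an event of probability at least $1-\exp(-n\beta_n^2/C_p)$,
\[
\int\mathfrak{H}^2(f_x^{G},f_x^{G^*})\,d\mu(x)\le\frac2n\sum_{i=1}^n\mathfrak{H}^2(f_{x_i}^{G},f_{x_i}^{G^*})+C_p\beta_n^2\qquad\text{for every probability }G\text{ on }K,
\]
with $\beta_n$ as in \eqref{eq:beta_defn}. The relevant function class is $\mathcal H:=\{x\mapsto h_G(x):=\mathfrak{H}^2(f_x^{G},f_x^{G^*}):G\text{ a probability on }K\}$; its members lie in $[0,2]$, so $\mathrm{Var}_\mu(h_G)\le2\,\Expt_\mu h_G$, which is exactly the variance--mean relation that forces the excess term to enter \emph{additively} as $\beta_n^2$ rather than multiplicatively. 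One has $|h_{G_1}(x)-h_{G_2}(x)|\le2\sqrt2\,\mathfrak{H}(f_x^{G_1},f_x^{G_2})\le C\bigl(\int|f_x^{G_1}(y)-f_x^{G_2}(y)|\,dy\bigr)^{1/2}$, and, uniformly over $\|x\|\le B$, the $L_1(dy)$ distance is at most $|I|\cdot\|f_x^{G_1}-f_x^{G_2}\|_{L_\infty(dy)}$ plus a negligible Gaussian-tail remainder, where $I$ is an interval of half-width of order $(BR+\sigma)\sqrt{\Log(\cdot)}$; hence the metric entropy of $\mathcal H$ is controlled by the $L_\infty$ metric entropy of $\mathcal M_K$ supplied by Theorem~\ref{theorem:metric_entropy}. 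A localized empirical-process bound --- Bernstein's inequality combined with a peeling argument over the level sets $\{\Expt_\mu h_G\approx2^{-j}\}$, or equivalently a one-sided Talagrand inequality with local Rademacher complexities --- then yields the displayed bound. The appearance of $n(BR+\sigma)^2/\sigma^2$ inside the logarithms of \eqref{eq:beta_defn}, in place of the $n$ appearing in \eqref{eq:epsilon_defn}, is precisely the cost of the $y$-truncation scale $|I|/\sigma$ paid when converting the $L_\infty$ entropy on $\mathcal M_K$ into a Hellinger entropy on $\mathcal H$.

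\textbf{Assembly.} On the intersection of the two good events, put $G=\hat G$ in the deviation inequality and use $\mathfrak{H}_{\mathrm{fixed}}^2(f^{\hat G},f^{G^*})\le C_pt^2\epsilon_n^2$ to get $\mathfrak{H}_{\mathrm{random}}^2(f^{\hat G},f^{G^*})\le2C_pt^2\epsilon_n^2+C_p\beta_n^2\le C_p't^2(\epsilon_n^2+\beta_n^2)$ for $t\ge1$; a union bound over the two failure probabilities then gives \eqref{tailbound.random} after relabeling constants. For \eqref{expectationbound.random}, integrate the tail bound over $t$, using the crude bound $\mathfrak{H}_{\mathrm{random}}^2\le2$ to control the low-probability regime. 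I expect the main obstacle to be the crux step: pinning down the sharp localized deviation with exactly the $\beta_n$ of \eqref{eq:beta_defn} needs both a careful entropy transfer from $\mathcal M_K$ to $\mathcal H$ (where the $(BR+\sigma)^2/\sigma^2$ factor is born) and a peeling argument that genuinely exploits $\mathrm{Var}(h_G)\lesssim\Expt h_G$; the conditioning step and the final assembly are routine given Theorems~\ref{mainthem} and~\ref{theorem:metric_entropy}.
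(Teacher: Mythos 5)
Your proposal is correct and follows the same overall decomposition as the paper: condition on the design (which lies in $\{\|x\|\le B\}$ almost surely) to import Theorem~\ref{mainthem}, then prove a uniform "population average $\le 2\times$ empirical average $+$ additive slack" inequality for the squared Hellinger class over all mixing measures on $K$, with the entropy supplied by Theorem~\ref{theorem:metric_entropy}, and finally assemble by a union bound and integration of the tail. Where you differ is in how that uniform comparison is obtained. The paper does not run Bernstein plus peeling from scratch: it invokes the ratio-type empirical process inequality of van de Geer (Lemma~\ref{thm:vdg}, with \emph{bracketing} entropy in $L_2(\mu)$), and transfers the $L_\infty$ entropy of $\mathcal{M}_R$ into bracketing entropy of the class $x\mapsto\tfrac12\mathfrak{H}^2(f_x^G,f_x^{G^*})$ via the identity $\tfrac12\mathfrak{H}^2(f_x^G,f_x^{G^*})=1-\int\sqrt{f_x^G f_x^{G^*}}\,dy$ and the elementary bound $\sqrt{a+\eta}-\sqrt{(a-\eta)_+}\le 2\sqrt{\eta}$, with the factor $T_{G^*}=\int\bigl(\int\sqrt{f_x^{G^*}(y)}\,dy\bigr)^2 d\mu(x)\lesssim (RB+\sigma)^2/\sigma$ playing the role that your $y$-truncation interval $|I|$ plays; both routes are where the $(BR+\sigma)^2/\sigma^2$ inside the logarithms of $\beta_n$ originates. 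Your Lipschitz-type transfer $|h_{G_1}-h_{G_2}|\le 2\sqrt2\,\mathfrak{H}(f_x^{G_1},f_x^{G_2})\le C\bigl(\int|f_x^{G_1}-f_x^{G_2}|\,dy\bigr)^{1/2}$ together with a localized Bernstein/peeling (or Talagrand) argument is a valid, self-contained substitute, at the cost of redoing a standard empirical-process lemma; the paper's route buys a shorter proof by citation and a cleaner bracket construction. One small point to fix in your write-up: you state the deviation event only at level $\beta_n$ (failure probability $\exp(-n\beta_n^2/C_p)$), but the theorem's tail \eqref{tailbound.random} and the integration step for \eqref{expectationbound.random} need the $t$-scaled version (slack $\asymp t^2\beta_n^2$ with failure probability $\exp(-nt^2\beta_n^2/C_p)$ for all $t\ge1$), exactly as the paper gets by applying its lemma with $\epsilon=t\beta_n\sqrt{C_p}$; your Bernstein argument at level $u=t^2\beta_n^2$ provides this, so it is a presentational rather than substantive gap.
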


Both $\epsilon_n$ (defined in \eqref{eq:epsilon_defn}) and $\beta_n$
(defined as the square root of \eqref{eq:beta_defn}) satisfy
$\epsilon_n^2 = O(n^{-1} (\log n)^{p+1})$ and $\beta_n^2 = O(n^{-1}
(\log n)^{p+1})$ as $n \rightarrow \infty$ (with $R, B, \sigma$
fixed). Theorem \ref{mainthem} and Theorem
\ref{them:randomdesign_predictionerror} 
   give the same rate $O(n^{-1} 
   (\log n)^{p+1})$ (assuming $R, B, \sigma$ are fixed) for
   $\mathfrak{H}^2_{\mathrm{fixed}}(f^{\hat{G}}, f^{G^*})$ and
   $\mathfrak{H}^2_{\mathrm{random}}(f^{\hat{G}}, f^{G^*})$
   respectively.  Thus, in both fixed and random designs,
   the NPMLE is a very good estimator for the true conditional
   density function if $p$ is small.

In the next result (proved in 
Appendix~\ref{appendix:randomdesign_identifiability}), we establish
identifiability of $G^*$ in the random design setting. We use the same
assumptions as in  Theorem~\ref{them:randomdesign_predictionerror}
with the additional assumption that the support of $\mu$ contains an
open set.

\begin{theorem}[Identifiability under random design]
Suppose $G_1$ and $G_2$ are two probability measures contained in $\left\{\beta \in \mathbb{R}^p : \|\beta\| \le R \right\}$. Assume that the support of $\mu$ is
contained in $ \{x \in \mathbb{R}^p : \|x\| \leq B\}$ for some $B > 0$
and also that the support of $\mu$ contains an open set. If
\[
\int  \frac{1}{\sigma} \phi\left(\frac{y - x^\T\beta}{\sigma} \right) d G_1(\beta) = 
\int \frac{1}{\sigma} \phi\left(\frac{y - x^\T\beta}{\sigma} \right) d G_2(\beta)
\]
holds for all $y \in \RS$ and all $x$ in the support of $\mu$, then $G_1 = G_2$.
\label{thm:randomdesign_identifiability}	
\end{theorem}

Theorem \ref{thm:weakconsistency} shows that the NPMLE is weakly
consistent (in random design) in 
that its L\'{e}vy-Prokhorov distance to  $G^*$ approaches $0$ in
probability as $n \rightarrow \infty$. The
  L\'{e}vy–Prokhorov metric $\dlp$ metrizes weak
convergence of probability measures
\citep[Chapter~11.3]{dudley1989real}. 

\begin{theorem}
Consider $n \geq 3$ and i.i.d. data $(x_1, y_1), \dots, (x_n, y_n)$
with $n \geq 3$ with $x_i \sim \mu$ and $y_i|x_i$ having the density
\eqref{cond.dens}. Assume that $G^* \left\{\beta \in \mathbb{R}^p :
  \|\beta\| \le R \right\} = 1$ for some $R > 0$ and that the  support
of $\mu$ is contained in $\{x \in \mathbb{R}^p : \|x\| \leq B\}$
for some $B > 0$ and also that the support of $\mu$ contains an 
open set.

Let $\hat{G}_n$, where we add subscript $n$ to denote
      the number of data points,  be the estimator for $G^*$ defined
      as in \eqref{eq:maximize_G} with $K = \left\{\beta \in
        \mathbb{R}^p : \|\beta\| \leq R \right\}$. Then
      $\dlp(\hat{G}_n, G^*)\rightarrow 0$ in probability as $n
      \rightarrow \infty$ where $\dlp$  is the L\'{e}vy–Prokhorov
      metric. 
\label{thm:weakconsistency}
\end{theorem}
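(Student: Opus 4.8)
The plan is to leverage the averaged-Hellinger consistency already furnished by Theorem~\ref{them:randomdesign_predictionerror}, together with weak compactness of the space of probability measures on the compact ball $K$ and an identifiability argument for mixtures of linear regressions. Since $\epsilon_n^2 + \beta_n^2 = O\big(n^{-1}(\log n)^{p+1}\big) \to 0$, the expectation bound in Theorem~\ref{them:randomdesign_predictionerror} together with Markov's inequality gives $\mathfrak{H}^2_{\mathrm{random}}\big(f^{\hat{G}_n}, f^{G^*}\big) \to 0$ in probability. To upgrade this to $\dlp(\hat{G}_n, G^*) \to 0$ in probability I would use the subsequence principle: it suffices to show that every subsequence of $(\hat{G}_n)$ admits a further subsequence $(\hat{G}_{n_k})$ along which $\dlp(\hat{G}_{n_k}, G^*) \to 0$ almost surely. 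Given a subsequence, extract a further one for which $\mathfrak{H}^2_{\mathrm{random}}\big(f^{\hat{G}_{n_k}}, f^{G^*}\big) \to 0$ almost surely, fix a sample path in the corresponding full-measure event, and argue deterministically.

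On this path, since $K$ is compact the space of probability measures on $K$ is compact and metrized by $\dlp$ (Prokhorov), so it is enough to show that \emph{every} weak subsequential limit $G^\infty$ of $(\hat{G}_{n_k})$ coincides with $G^*$; this forces $\hat{G}_{n_k}\to G^*$ weakly, i.e.\ $\dlp(\hat{G}_{n_k}, G^*)\to 0$. Let $\hat{G}_{m_j}\to G^\infty$ weakly along a sub-subsequence. For each fixed $x$ the kernel $\beta \mapsto \sigma^{-1}\phi\big((y - x^\T\beta)/\sigma\big)$ is bounded and continuous in $\beta$, so weak convergence gives $f^{\hat{G}_{m_j}}_x(y) \to f^{G^\infty}_x(y)$ pointwise in $y$; by Scheff\'e's lemma and the elementary bound $\mathfrak{H}^2(f,g) \le \|f - g\|_{1}$ this yields $\mathfrak{H}^2\big(f^{\hat{G}_{m_j}}_x, f^{G^\infty}_x\big) \to 0$ for every $x$, and then, the integrand being bounded by $2$, $\int \mathfrak{H}^2\big(f^{\hat{G}_{m_j}}_x, f^{G^\infty}_x\big)\, d\mu(x) \to 0$ by dominated convergence. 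Combining this with $\int \mathfrak{H}^2\big(f^{\hat{G}_{m_j}}_x, f^{G^*}_x\big)\, d\mu(x) = \mathfrak{H}^2_{\mathrm{random}}\big(f^{\hat{G}_{m_j}}, f^{G^*}\big) \to 0$ via the triangle inequality for the metric $\mathfrak{H}$ inside $L^2(\mu)$ forces $\mathfrak{H}\big(f^{G^\infty}_x, f^{G^*}_x\big) = 0$ for $\mu$-almost every $x$, hence $f^{G^\infty}_x = f^{G^*}_x$ for $\mu$-a.e.\ $x$. Since for each fixed $y$ the map $x \mapsto f^{G}_x(y)$ is continuous and the set $\{x : f^{G^\infty}_x = f^{G^*}_x\}$ is dense in $\mathrm{supp}(\mu)$, which by hypothesis contains a nonempty open set $U$, we conclude $f^{G^\infty}_x = f^{G^*}_x$ on all of $U$.

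It then remains to deduce $G^\infty = G^*$ from the identity $f^{G^\infty}_x = f^{G^*}_x$ for every $x$ in the open set $U$, and this identifiability step is where I expect the substantive work to lie. Taking Fourier transforms in $y$ and using $\int e^{ity}\sigma^{-1}\phi\big((y - x^\T\beta)/\sigma\big)\, dy = e^{-\sigma^2 t^2/2}\, e^{i t x^\T\beta}$ (and Fubini), the Gaussian smoothing factor cancels, leaving $\int e^{i t x^\T\beta}\, dG^\infty(\beta) = \int e^{i t x^\T\beta}\, dG^*(\beta)$ for all $t \in \mathbb{R}$ and $x \in U$. Writing $\varphi_G(s) := \int e^{i s^\T \beta}\, dG(\beta)$ and letting $s = t x$ range over the cone $\{tx : t \in \mathbb{R},\, x \in U\}$, which contains a nonempty open subset of $\mathbb{R}^p$ because $U$ is open and nonempty, we get that $\varphi_{G^\infty}$ and $\varphi_{G^*}$ agree on a nonempty open subset of $\mathbb{R}^p$. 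Since $G^\infty$ and $G^*$ are both supported on the bounded set $K$, the characteristic functions $\varphi_{G^\infty}$ and $\varphi_{G^*}$ are real-analytic on $\mathbb{R}^p$ (indeed restrictions of entire functions on $\mathbb{C}^p$, by differentiating under the integral sign), so agreement on a nonempty open set forces $\varphi_{G^\infty} \equiv \varphi_{G^*}$ on all of $\mathbb{R}^p$; uniqueness of characteristic functions then gives $G^\infty = G^*$.

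The routine part is the bookkeeping of the nested subsequence extractions, arranged so that the deterministic compactness/identifiability argument runs on a full-measure event and yields the ``in probability'' conclusion. The main obstacle is the identifiability step, whose two essential ingredients are the exact cancellation of the Gaussian smoothing under the Fourier transform and the real-analyticity of characteristic functions of compactly supported measures; together these let the local agreement $f^{G^\infty}_x = f^{G^*}_x$ on $U$ propagate to the global identity $G^\infty = G^*$.
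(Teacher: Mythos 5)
Your proposal is correct, and it shares the paper's overall skeleton -- Theorem~\ref{them:randomdesign_predictionerror} to get $\mathfrak{H}^2_{\mathrm{random}}\to 0$ in probability, the subsequence principle to pass to almost sure convergence, tightness of measures on the compact ball $K$ to extract weak limit points, and a characteristic-function identifiability argument that exploits compact support (analyticity) together with the assumption that $\mathrm{supp}(\mu)$ contains an open set -- but it routes the identifiability step differently. The paper first reinterprets the averaged Hellinger loss as the Hellinger distance between the joint laws $P(\hat{G}_n,\mu)$ and $P(G^*,\mu)$ of $(X,Y)$, deduces $\dlp$-convergence of these joints, and then invokes Lemma~\ref{strongid} (a Beran--Millar-type strong identifiability result) whose proof works with the joint characteristic function $\E e^{iu^\T X+itY}$, performs a Fourier inversion in the $u$-variable to recover $\E_{\beta\sim\tilde G}e^{itx^\T\beta}=\E_{\beta\sim G^*}e^{itx^\T\beta}$ on $\mathrm{supp}(\mu)$, and then uses analyticity in $x$. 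You instead stay at the level of conditional densities: from weak convergence of a sub-subsequence to $G^\infty$ you get $f^{\hat G}_x\to f^{G^\infty}_x$ pointwise, upgrade via Scheff\'e and $\mathfrak{H}^2\le\|\cdot\|_1$ plus dominated convergence and the triangle inequality in $L^2(\mu)$ to $f^{G^\infty}_x=f^{G^*}_x$ for $\mu$-a.e.\ $x$, extend to the open set $U$ by continuity in $x$, and then take Fourier transforms in $y$ alone, cancel the Gaussian factor $e^{-\sigma^2t^2/2}$, and conclude $G^\infty=G^*$ from real-analyticity of characteristic functions of compactly supported measures. What your version buys is that it avoids the joint-distribution detour and the Fourier inversion in the covariate-frequency variable (a step the paper treats somewhat tersely), at the cost of the Scheff\'e/continuity-in-$x$ bookkeeping; what the paper's version buys is a reusable standalone lemma stated purely in terms of $\dlp$-convergence of the joint laws. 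Both uses of analyticity are equivalent in substance, and your concluding compactness argument (every weak limit point equals $G^*$, hence the whole sequence converges) matches the paper's contradiction argument at the end of Lemma~\ref{strongid}.
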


\noindent
\textbf{Relation to Existing Results:} \label{page:proof_distinct}
The proof of Theorem~\ref{mainthem} relies on empirical process and
metric entropy arguments, and follows a strategy similar to those used
in Gaussian location mixture density estimation
\citep{ghosal2001entropies, zhang2009generalized,
  saha2020nonparametric}. Key here is to bound the metric
entropy (Theorem~\ref{theorem:metric_entropy}) of  
\[
  \mathcal{M}_K \;=\; \bigl\{\, f^G_x(y) \;:\; G \text{ is a
    probability measure supported on } K \bigr\}, 
\]
under an $L_{\infty}$ metric. The above function class depends on both
$x$ and $y$, which makes it distinct from the standard
density-estimation classes in the aforementioned literature.

No analogous results to
Theorem~\ref{them:randomdesign_predictionerror} exist in the
Gaussian mixture density estimation literature. For its proof, we use 
Theorem~\ref{mainthem} together with an existing empirical process
lemma (Lemma~\ref{thm:vdg}), which connects the random design loss to
the fixed design loss. The additional rate $\beta_n$ captures the cost
of moving from the fixed-design to the random-design setting, arising
from the growth rate of the bracketing number $N_{[]}(\epsilon,
\mathcal{G}, L_2(\mu))$ for 
\[
  \mathcal{G} \;=\; \Bigl\{\,
    x \;\mapsto\; \tfrac12\,\mathfrak{H}^2\bigl(f_x^G, f_x^{G^*}\bigr)
    : G \in \{\beta \in \mathbb{R}^p : \|\beta\|\le R\}
  \Bigr\}
\]
defined on $S_0 := \{x \in \mathbb{R}^p : \|x\| \le B\}$. 

Theorem \ref{thm:randomdesign_identifiability} is established using an
argument similar to the one used in the ``strong identifiability''
result in \citet[Proposition 2.2]{beran1994minimum}. Note that
identifiability is a well-studied issue for mixture models (see e.g.,
\cite{nguyen2013convergence, ho2016convergence}). Theorem
\ref{thm:weakconsistency}  is proved using the Hellinger error bound
from Theorem \ref{them:randomdesign_predictionerror} as well as a
variant (Lemma~\ref{strongid}) of \citet[Proposition
2.2]{beran1994minimum}.

\section{Experimental Results}
\label{section:experimental}

\subsection{Cross-Validation for \texorpdfstring{$\sigma$}{sigma}} 
\label{subsection:select_sigma}

For estimating $\sigma$, we employ $C$-fold cross-validation by
dividing dataset $\mathcal{D} = \{(x_i,y_i)\}_{i=1}^n$ into $C$ equal
parts. For each fold $c$ and fixed $\sigma$, we compute NPMLE estimate
$\hat{G}^{-c,\sigma}$ using all data except $\mathcal{D}_c$. Our
cross-validation score is
\begin{equation}
	\mathrm{CV}(\sigma) = - \sum_{c = 1}^C \sum_{(x_i,y_i) \in
		\mathcal{D}_c } \log \hat{f}^{-c, \sigma}_{x_i}(y_i)
               \text{ with } \hat{f}^{-c, \sigma}_{x_i}(y_i) = \frac{1}{\sigma} \int \phi
\left(\frac{y_i- x_i^\top\beta}{\sigma} \right) d\hat{G}^{-c, \sigma}(\beta). 
	\label{eq:cv_criterion}
\end{equation}
We select $\hat{\sigma}$ to minimize $\mathrm{CV}(\sigma)$, with 
candidate $\sigma$'s ranging from $\sigma_{\min}$
(typically 0.1) to $\sigma_{\max}$ ($=\sqrt{\mathrm{Var}(y)}$), with
intervals of 0.1 in log space. We use $C=5$ or $C = 10$. 

\subsection{BIC Trimming}
\label{subsec:BIC_selection}

When $G^*$ is finitely supported, the NPMLE
often estimates more components than actually present in $G^*$. For
clearer summarization and visualization, we reduce
components using the Bayesian Information Criterion (BIC). For an
NPMLE estimate with $K$ components with log-likelihood $L_K$, we take
$\mathrm{BIC}_K = -2 L_K + pK \log n$. To compute $\mathrm{BIC}_k$ for
$k = 1,\ldots,K-1$, we use the following iterative procedure starting
at $i = 1$ (until reaching one component). With $K+1-i$ components, we
remove the component with 
smallest mixing proportion, and 
  reoptimize the remaining mixing proportions to obtain log-likelihood
  $L_{K-i}$, calculate $\mathrm{BIC}_{K-i} = -2 L_{K-i} + p(K-i)\log
  n$, and  increment $i$. 

The final selected model corresponds to the number of components $k^*$
that minimizes $\mathrm{BIC}_k$. This BIC-based
procedure effectively balances model complexity and goodness-of-fit as
shown in the subsequent experimental results.  

\subsection{Simulation studies}

\subsubsection{Simulation: Discrete Distribution}\label{simpeg}

Figure \ref{fig:3comp_noisy} displays data generated from our model
with $n = 200$, $\sigma = 0.5$, and a discrete $G^*$ assigning
probabilities $0.3, 0.3, 0.4$ to $\beta$-values $(3, -1), (1, 1.5),
(-1, 0.5)$. Each observation has covariates $x_i = (1, w_i)^T$ with
$w_i \sim \text{Uniform}[-1, 3]$. In Figure~\ref{fig:3comp_true},
points are color-coded by their generating line, though this
information would be unavailable in practice.

\begin{figure}[!htbp]
	\centering
	\begin{subfigure}[t]{0.4\textwidth}
		\caption{}
		\includegraphics[width = \textwidth,valign = t]{./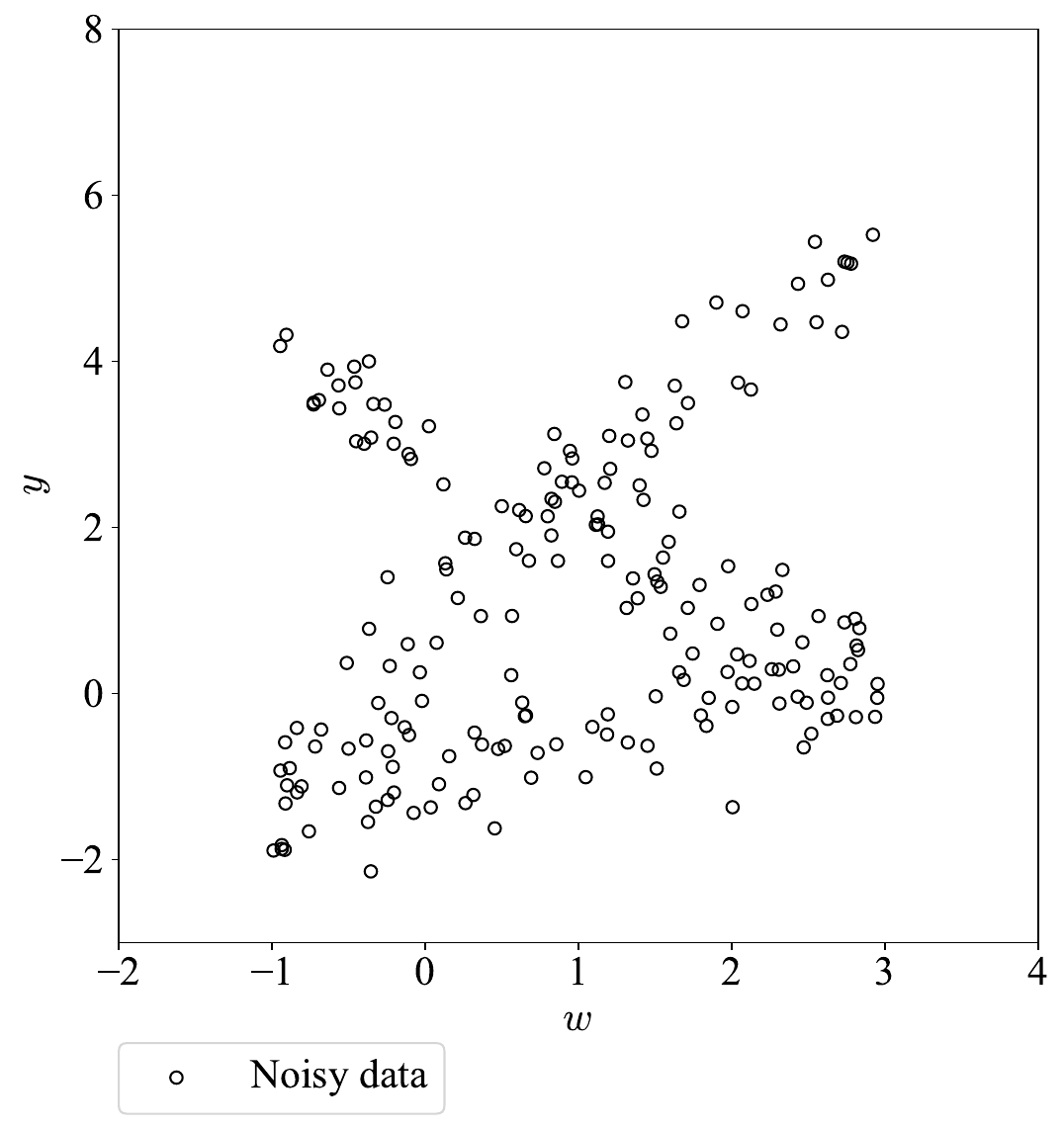}
		\label{fig:3comp_noisy}
	\end{subfigure}
	\begin{subfigure}[t]{0.4\textwidth}
		\caption{}
		\includegraphics[width = \textwidth,valign = t]{./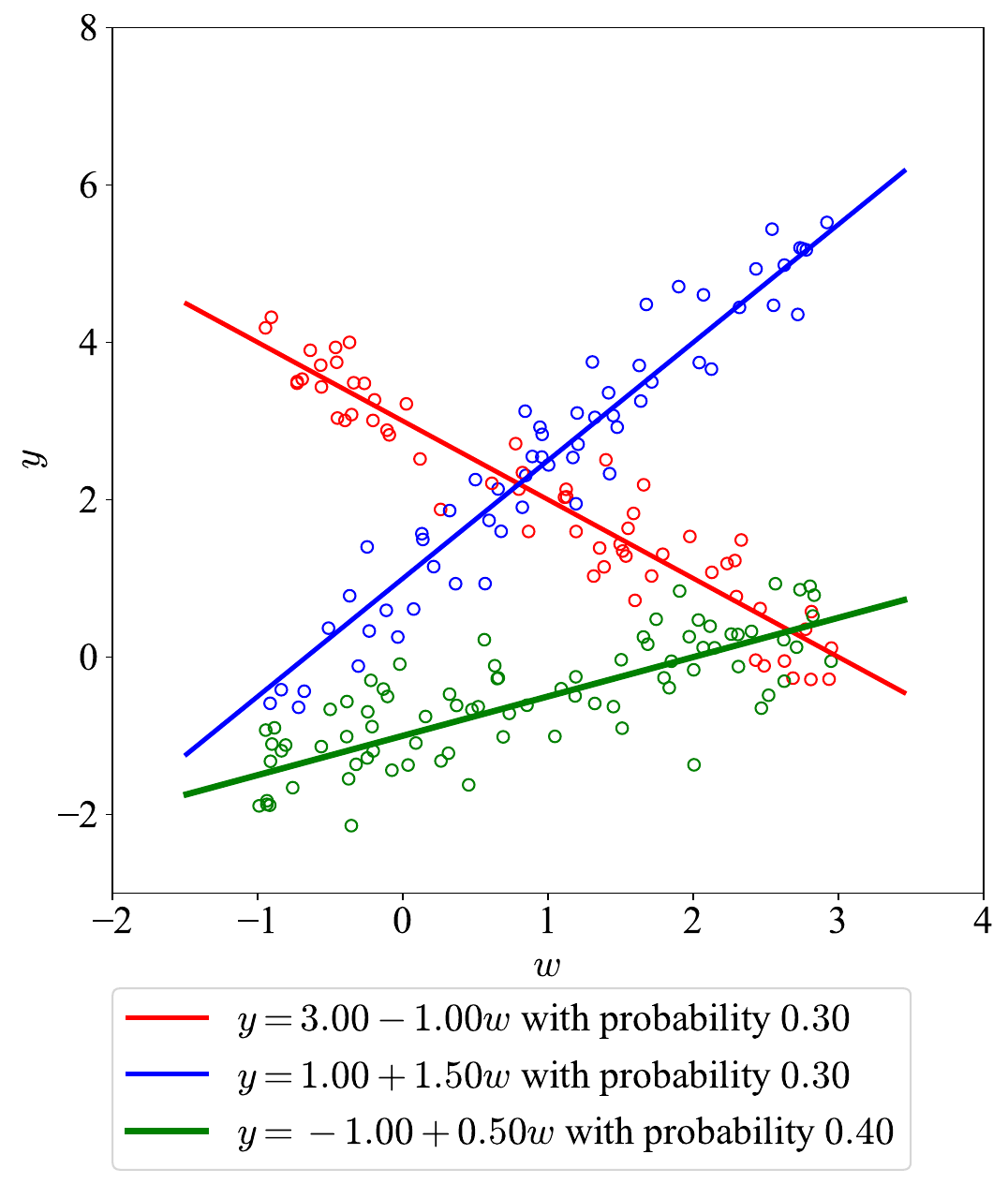}
		\label{fig:3comp_true}
	\end{subfigure}
	\caption{\subref{fig:3comp_noisy} Data points; \subref{fig:3comp_true} True regression components.}
	\label{fig:3comp_noisytrue-intro}
\end{figure}

Our cross-validation procedure accurately estimated $\hat{\sigma} =
0.4953$ (true value: 0.5). The resulting NPMLE $\hat{G}_{\text{CV}}$
was a discrete measure supported on $\hat{k} = 11$ points, with
corresponding regression lines shown in
Figure~\ref{fig:3comp_fitted_withoutBIC} (line thickness indicates
estimated mixing probability). We colored each data point according to
its most probable posterior line. 

\begin{figure}[!htbp]
	\centering
	\begin{subfigure}[t]{0.4\textwidth}
		\subcaption{}
		\includegraphics[width = \textwidth,valign = t]{./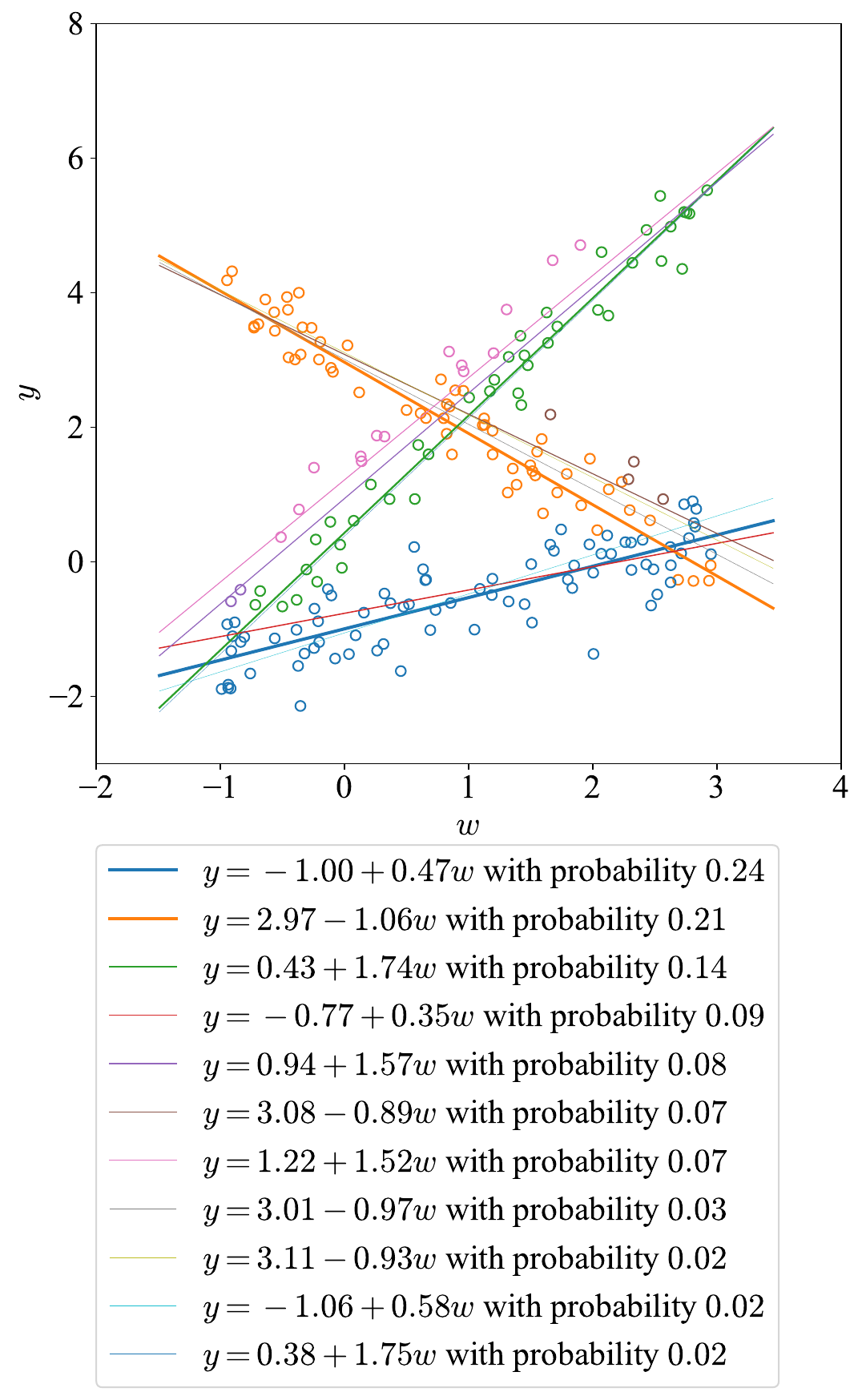}
		\label{fig:3comp_fitted_withoutBIC}
	\end{subfigure}
	\begin{subfigure}[t]{0.4\textwidth}
		\subcaption{}
		\includegraphics[width = \textwidth,valign = t]{./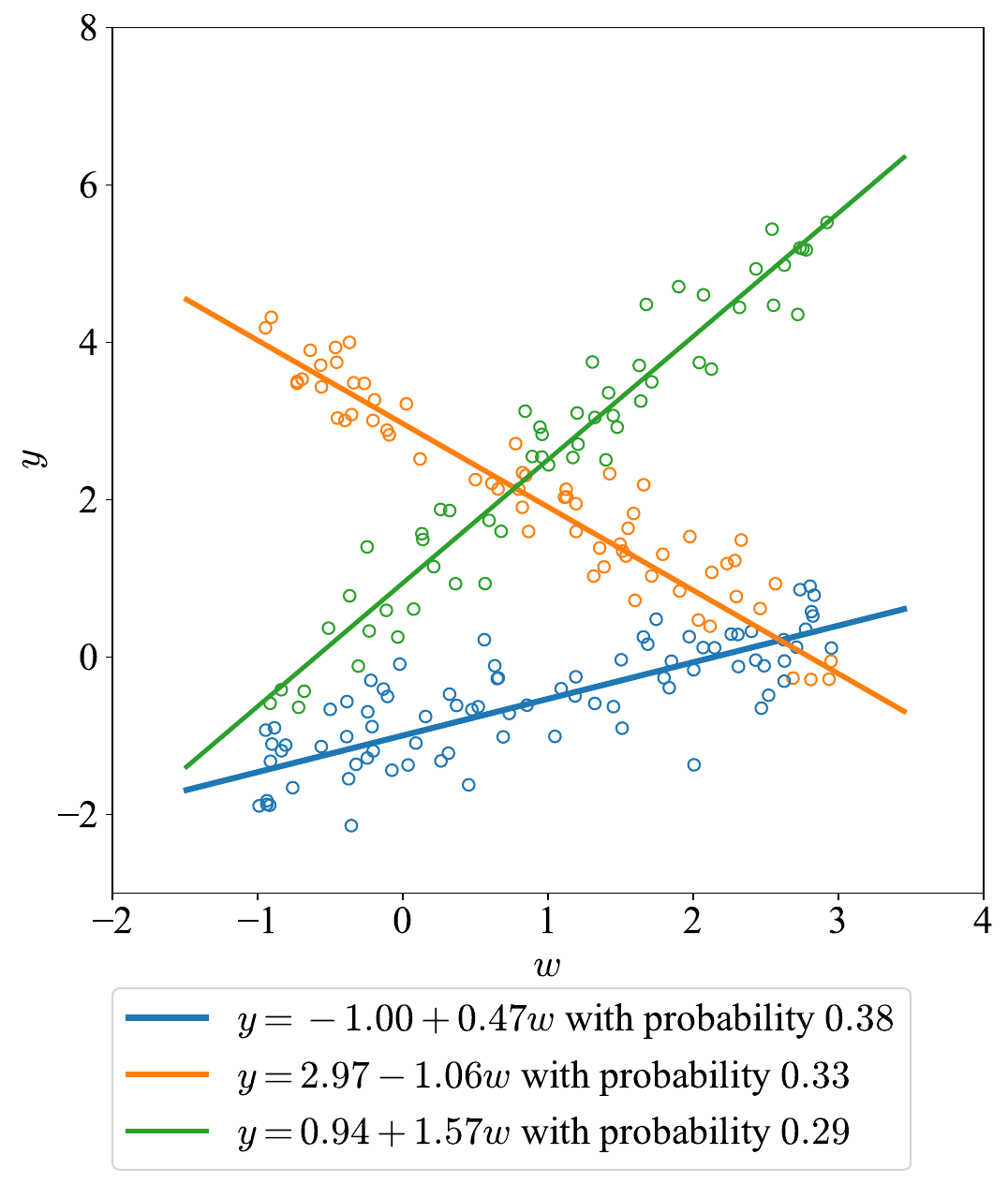}
		\label{fig:3comp_fitted}
	\end{subfigure}
	\caption{\subref{fig:3comp_fitted_withoutBIC} Fitted mixture before BIC selection; \subref{fig:3comp_fitted} Fitted mixture after BIC selection.}
	\label{fig:3comp_fitted-intro}
\end{figure}

\begin{figure}[!htbp]
\centering
\includegraphics[width = 0.8\textwidth]{./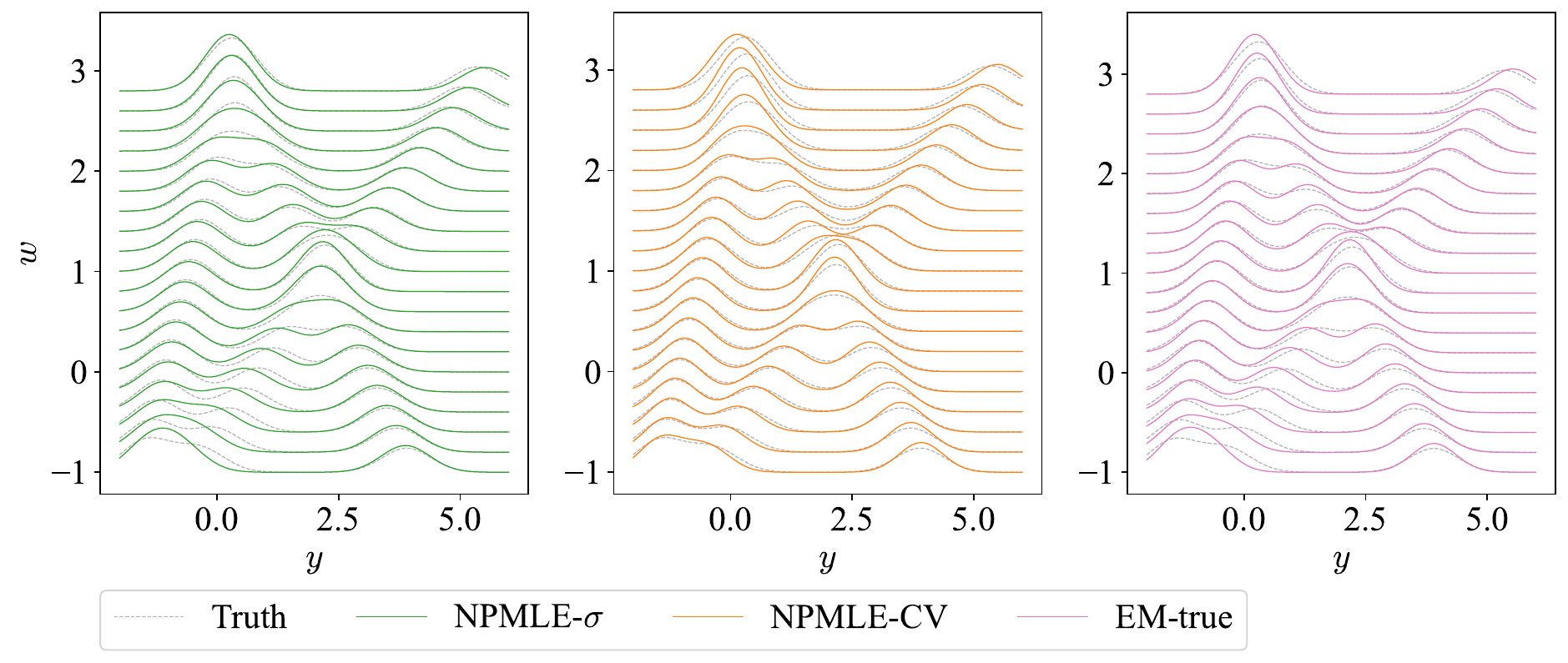}
\caption{Ridgeline plots of density functions $f_x^{G^*}(y)$ in comparison with its estimates via (i) NPMLE{\dash}$\sigma$ (NPMLE with known $\sigma$), (ii) NPMLE-CV (NPMLE with $\hat{\sigma}$ selected by cross-validation), and (iii) EM-true (EM initialized
  with true parameters of $G^*$ and $\sigma$) respectively.} 
\label{fig:3comp_density-intro}
\end{figure}

While NPMLE overestimates the number of components, BIC
selection correctly identified
exactly 3 components (Figure~\ref{fig:3comp_fitted}). The estimated
conditional density function 
$$f_x^{\hat{G}_{\text{CV}}, \hat{\sigma}}(y) = \frac{1}{\hat{\sigma}} \int \phi\left(\frac{y_i - x_i^T\beta}{\hat{\sigma}}\right) d\hat{G}_{\text{CV}}(\beta)$$
closely approximates the truth, with accuracy
comparable to both the 3-component EM algorithm initialized with
true parameters and the NPMLE with true
$\sigma$. Figure~\ref{fig:3comp_density-intro} illustrates this with
ridgeline plots showing conditional densities of $y$ for different
covariate values $w$.

\subsubsection{Simulation: Continuous
  Distribution}\label{twocircsection}

Figure~\ref{fig:concentric_circle_noisy} displays data generated from our model with continuous measure $G^*$ and $\sigma = 0.5$, where $G^*$ is uniformly distributed over two concentric circles:
\begin{equation}
	G^* = 0.5 \cdot \text{Unif}\{\beta \in \mathbb{R}^2 : \|\beta\| = 1\} + 0.5 \cdot \text{Unif}\{\beta \in \mathbb{R}^2 : \|\beta\| = 2\}   
\end{equation}

Our cross-validation yielded $\hat{\sigma} = 0.6050$, which we used to
compute $\hat{G}_{\text{CV}}$ (with $K = [-10,
10]^2$). Figure~\ref{fig:concentric_circle_sigmaCV} compares $G^*$ and
$\hat{G}_{\text{CV}}$ (dot size proportional to mixing probability),
while Figure~\ref{fig:concentric_circle_sigma} compares $G^*$ with
$\hat{G}$ (NPMLE computed with known $\sigma$).

\begin{figure}[!htbp]
	\centering
	\begin{subfigure}[t]{0.32\textwidth}
		\caption{}
		\includegraphics[width = \textwidth, valign = t]{./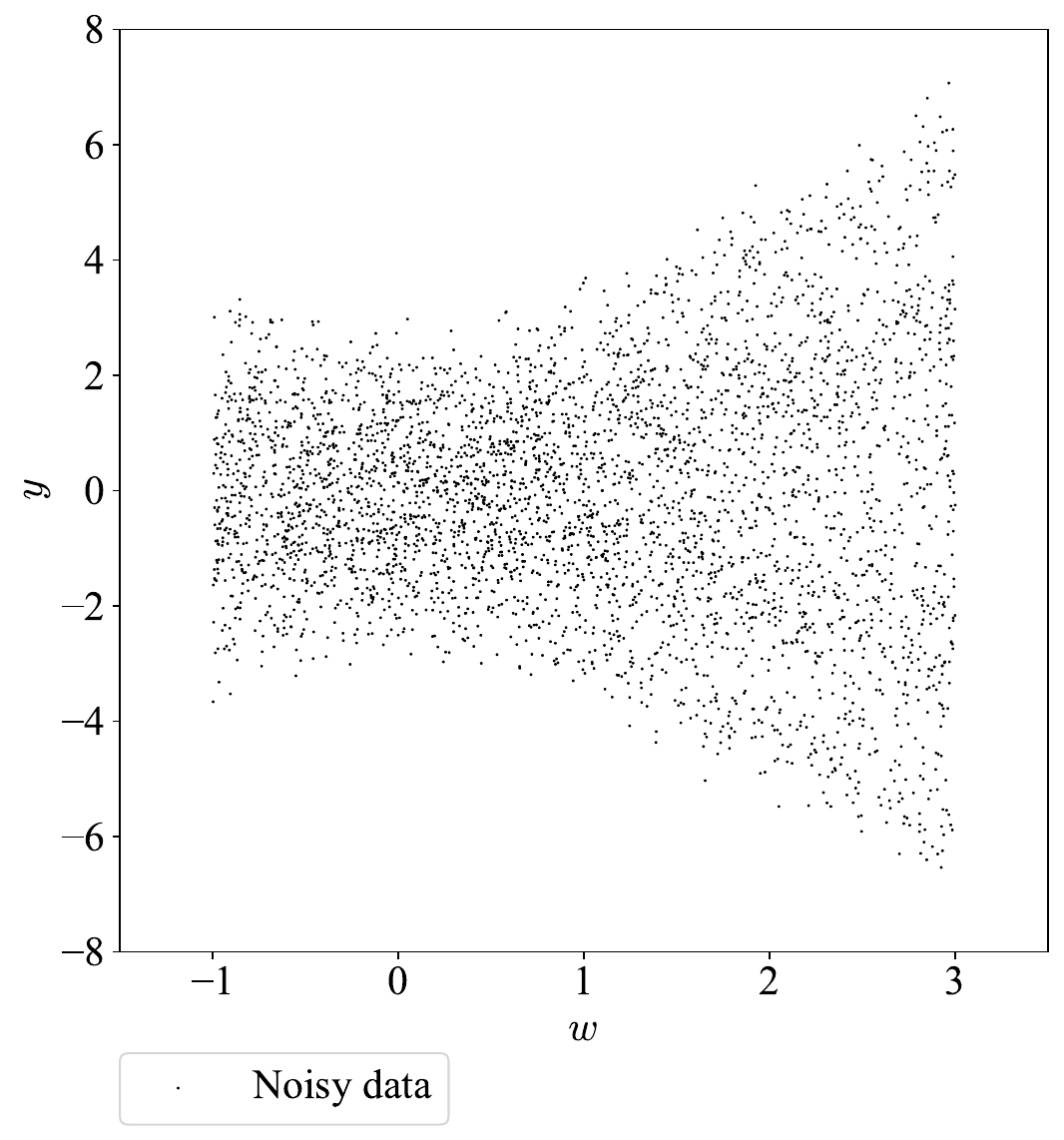}
		\label{fig:concentric_circle_noisy}
	\end{subfigure}
	\begin{subfigure}[t]{0.32\textwidth}
		\caption{}
		\includegraphics[width = \textwidth, valign = t]{./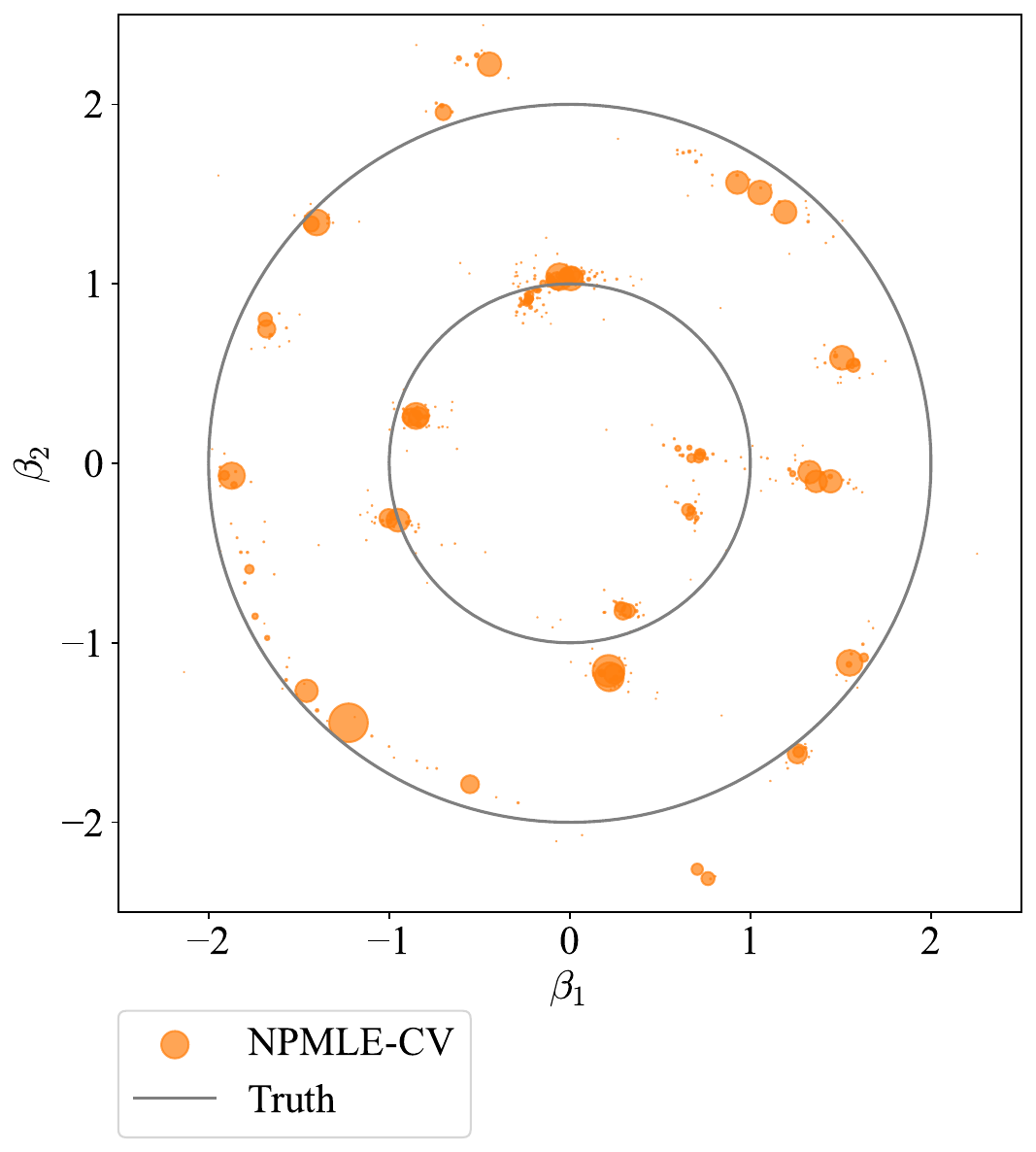}
		\label{fig:concentric_circle_sigmaCV}
	\end{subfigure}
	\begin{subfigure}[t]{0.32\textwidth}
		\caption{}
		\includegraphics[width = \textwidth, valign = t]{./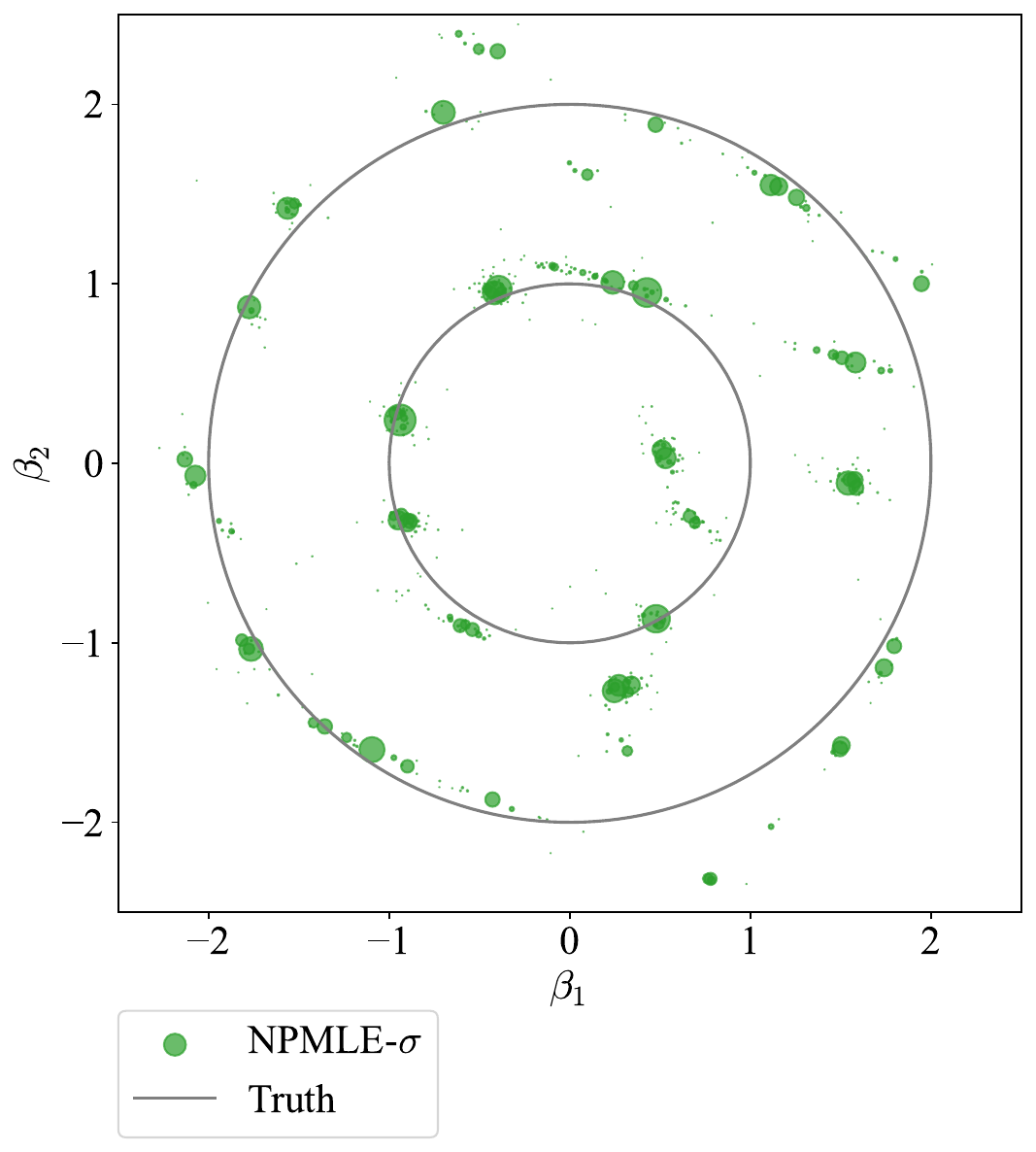}
		\label{fig:concentric_circle_sigma}
	\end{subfigure}

	\vspace{1em} 

	\begin{subfigure}[t]{0.65\textwidth}
		\caption{}
		\includegraphics[width = \textwidth, valign = t]{./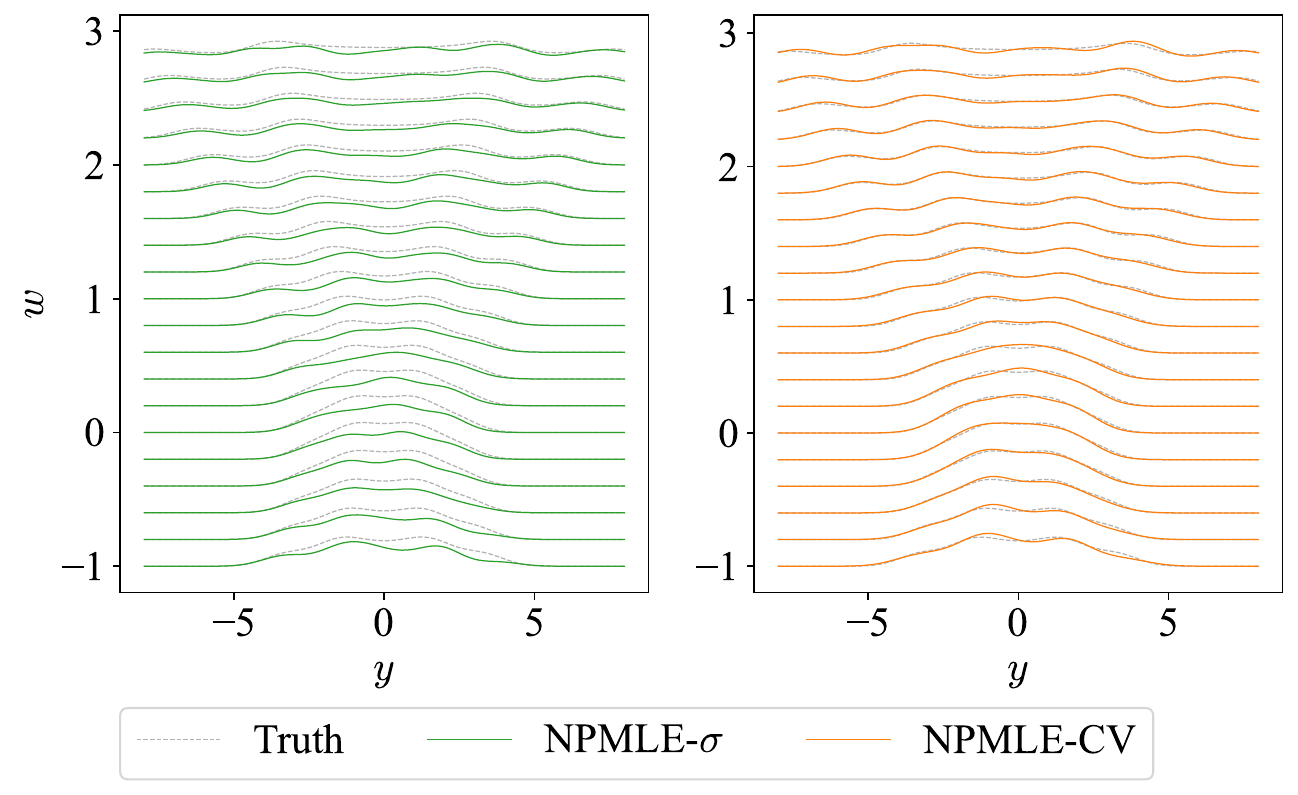}
		\label{fig:concentric_circle_ridgeline}
	\end{subfigure}
	\begin{subfigure}[t]{0.32\textwidth}
		\caption{}
		\includegraphics[width = \textwidth, valign = t]{./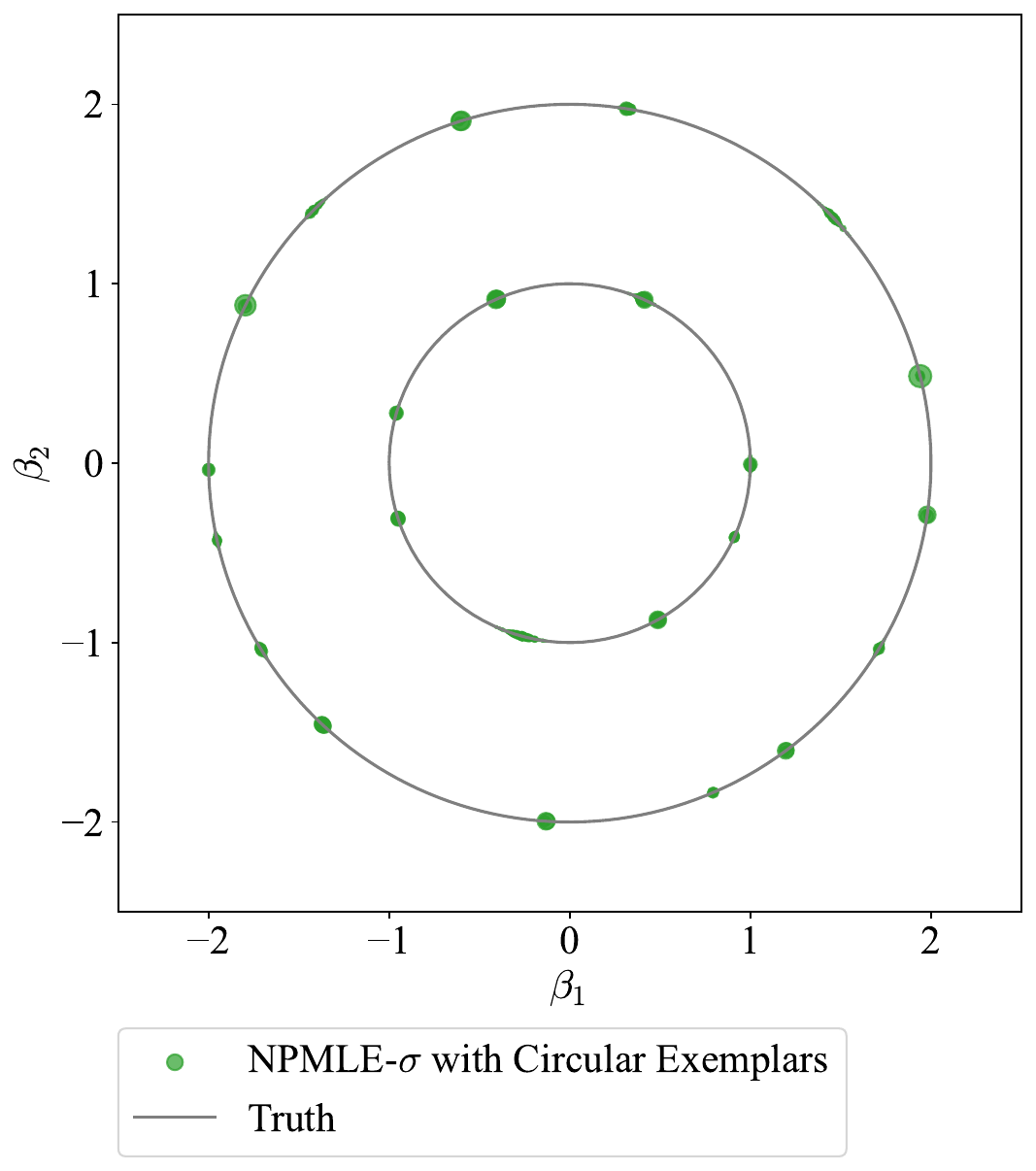} 
		\label{fig:concentric_circle_trueExemplars}
	\end{subfigure}

	\vspace{-1em} 

\caption{Continuous mixing measure $G^*$:
\subref{fig:concentric_circle_noisy} Data ($n=4000$);
\subref{fig:concentric_circle_sigmaCV} $G^*$ and $\hat{G}_{\text{CV}}$;
\subref{fig:concentric_circle_sigma} $G^*$ and $\hat{G}$;
\subref{fig:concentric_circle_ridgeline} Ridgeline plots comparing conditional densities;
\subref{fig:concentric_circle_trueExemplars} NPMLE with exemplars
uniform over support of $G^*$. 
In \subref{fig:concentric_circle_sigmaCV}, \subref{fig:concentric_circle_sigma}, and \subref{fig:concentric_circle_trueExemplars}, marker areas are proportional to probability weights.}
	\label{fig:concentric_circle_noisyANDmeasure}
\end{figure}

Since $G^*$ is continuous, $\hat{G}_{\text{CV}}$ naturally contains
many atoms that approximately trace the two circular supports of
$G^*$. $\hat{G}_{\text{CV}}$ is likely consistent (Theorem
\ref{thm:weakconsistency} shows this when $\sigma$ is known) but
previous results 
for Gaussian location mixtures suggest logarithmically slow
convergence rates, explaining the imperfect
approximation. Nevertheless,
Figure~\ref{fig:concentric_circle_ridgeline} shows that the estimated
conditional densities closely approximate the true density. 

For each observation $i = 1,\ldots,n$, our approach produces an
estimate $\hat{\beta}^{i}_{\text{EB}}$ (defined in
\eqref{postmean.est}) approximating $\hat{\beta}^i_{\text{OB}}$
(defined in \eqref{postmean}). Figure
\ref{fig:concentric_circle_EBversusOB-intro} confirms this
approximation works well by plotting their coordinates separately.

\begin{figure}[!ht]
	\centering
	\begin{subfigure}[t]{0.4\textwidth}
		\caption{}
		\includegraphics[width = \textwidth,valign = t]{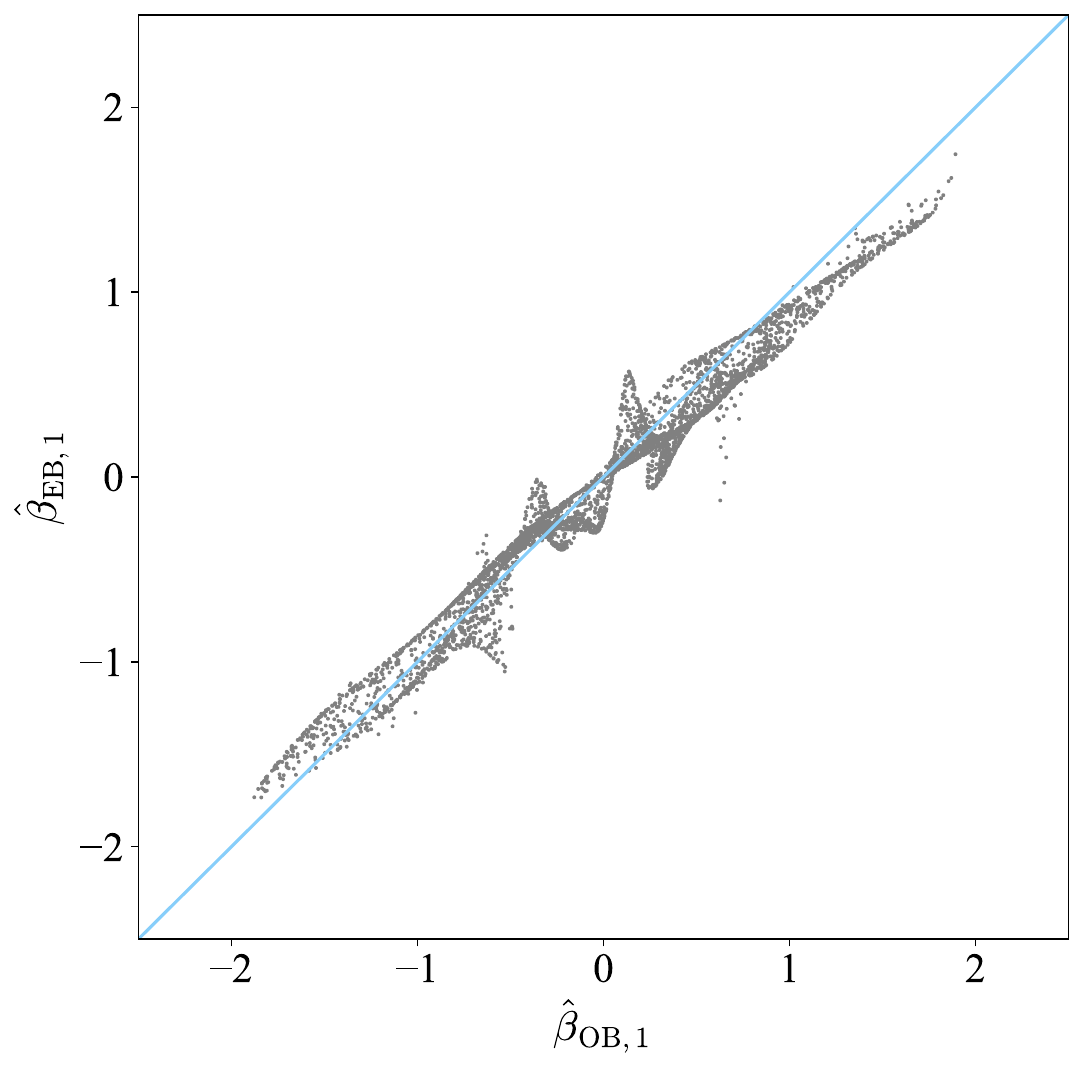}
		\label{fig:concentric_circle_EBversusOB_1-intro}
		\label{fig:eb-ob-1}
	\end{subfigure}
	\begin{subfigure}[t]{0.4\textwidth}
		\caption{}
		\includegraphics[width = \textwidth, valign = t]{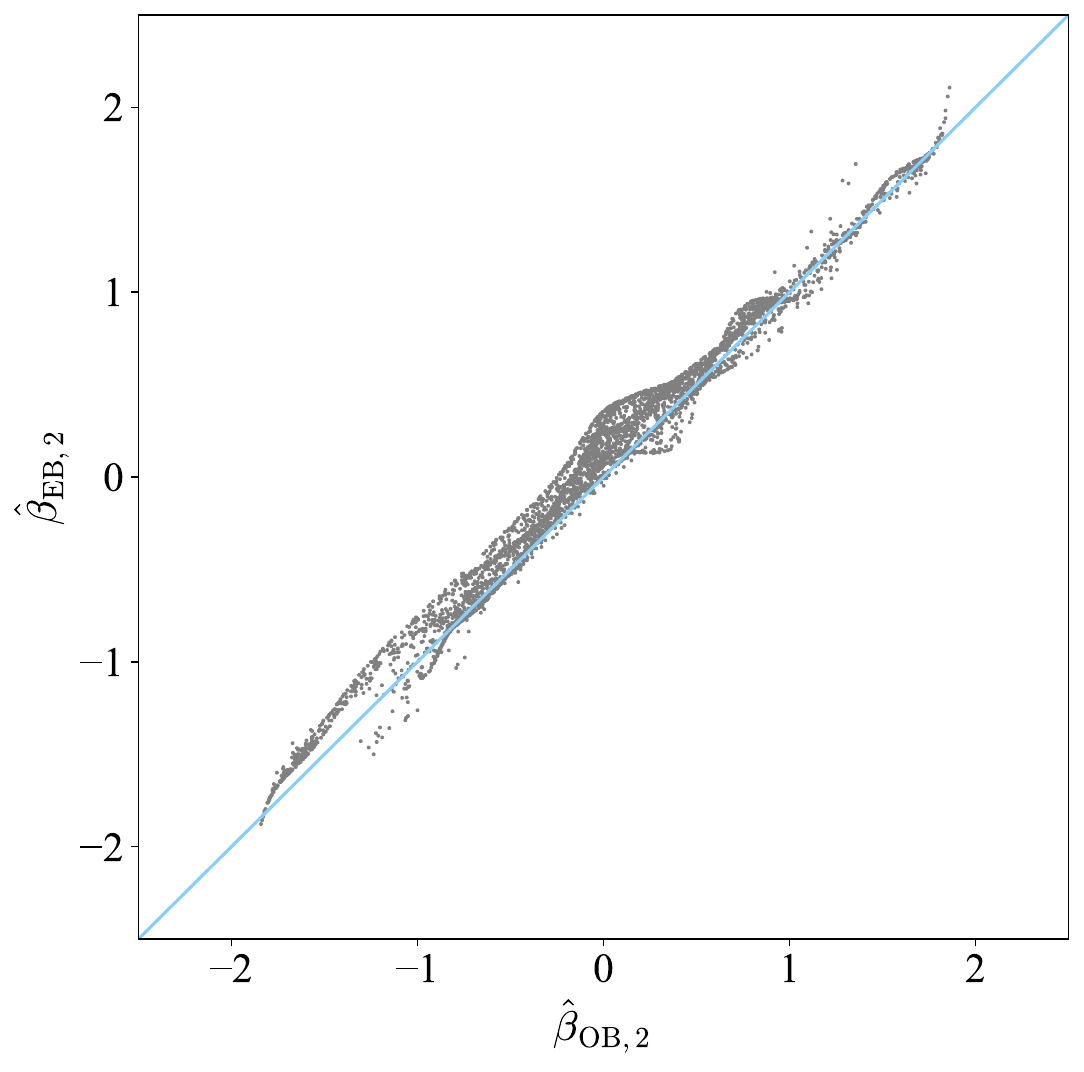}
		\label{fig:concentric_circle_EBversusOB_2-intro}
		\label{fig:eb-ob-2}
	\end{subfigure}
	\vspace{-2em}
	\caption{Plots of $\hat{\beta}_{\EB}^i$ against
          $\hat{\beta}_{\OB}^i$ for the setting in Subsection
          \ref{twocircsection}: \subref{fig:eb-ob-1} and
          \subref{fig:eb-ob-2} show the first (intercept) and the
          second (slope) component of $\hat{\beta}_{\EB}^i$,
          $\hat{\beta}_{\OB}^i$ respectively.}   
	\label{fig:concentric_circle_EBversusOB-intro}
\end{figure}

When $G^*$ is continuous, the discrete $\hat{G}_{\text{CV}}$ will not
be visually close to $G^*$ (see e.g., Figure 
\ref{fig:concentric_circle_trueExemplars} where NPMLE is shown with $M
= 4n$ exemplars regularly spaced on the \textit{true} 
support). For better estimating $G^*$ in such cases,
more information (e.g., in the form of priors) may be necessary,
as explored by \citet{chae2023likelihood} and
\citet{berenfeld2022estimating}.

\subsubsection{Simulation: Mixtures with Sinusoid Covariates}

\label{sec:sinusoid_simulation}

We now examine a higher dimensional case with $p = 7$ and $n =
10,000$. Data are generated according to model
\eqref{model:mixlin_i_1} and \eqref{model:mixlin_i_2}, with
covariates: 
\[
x_i = \left(1, \cos(2\pi f_1 w_i), \sin(2\pi f_1 w_i), \cos(2\pi f_2 w_i), \sin(2\pi f_2 w_i), \cos(2\pi f_3 w_i), \sin(2\pi f_3 w_i) \right)^\top,
\]
where $f_1 = 1, f_2 = \sqrt{5}$, $f_3 = \sqrt{11}$, and $w_i \sim
\text{Uniform}[0,1]$ independently. The data follow a mixture of $k=4$
linear regression models with noise level $\sigma = 0.75$ and equal
mixing probabilities $\pi_l = 1/4$ for $l = 1,\ldots,4$
(Figure~\ref{fig:sinusoid_true}). For each component $c = 1,2,3,4$,
all elements of the regression coefficient vector are drawn
independently from $N(0,4)$.

\begin{figure}[!htbp]
	\centering
	\begin{subfigure}[t]{0.4\textwidth}
		\subcaption{}
		\includegraphics[width = \textwidth,valign = t]{./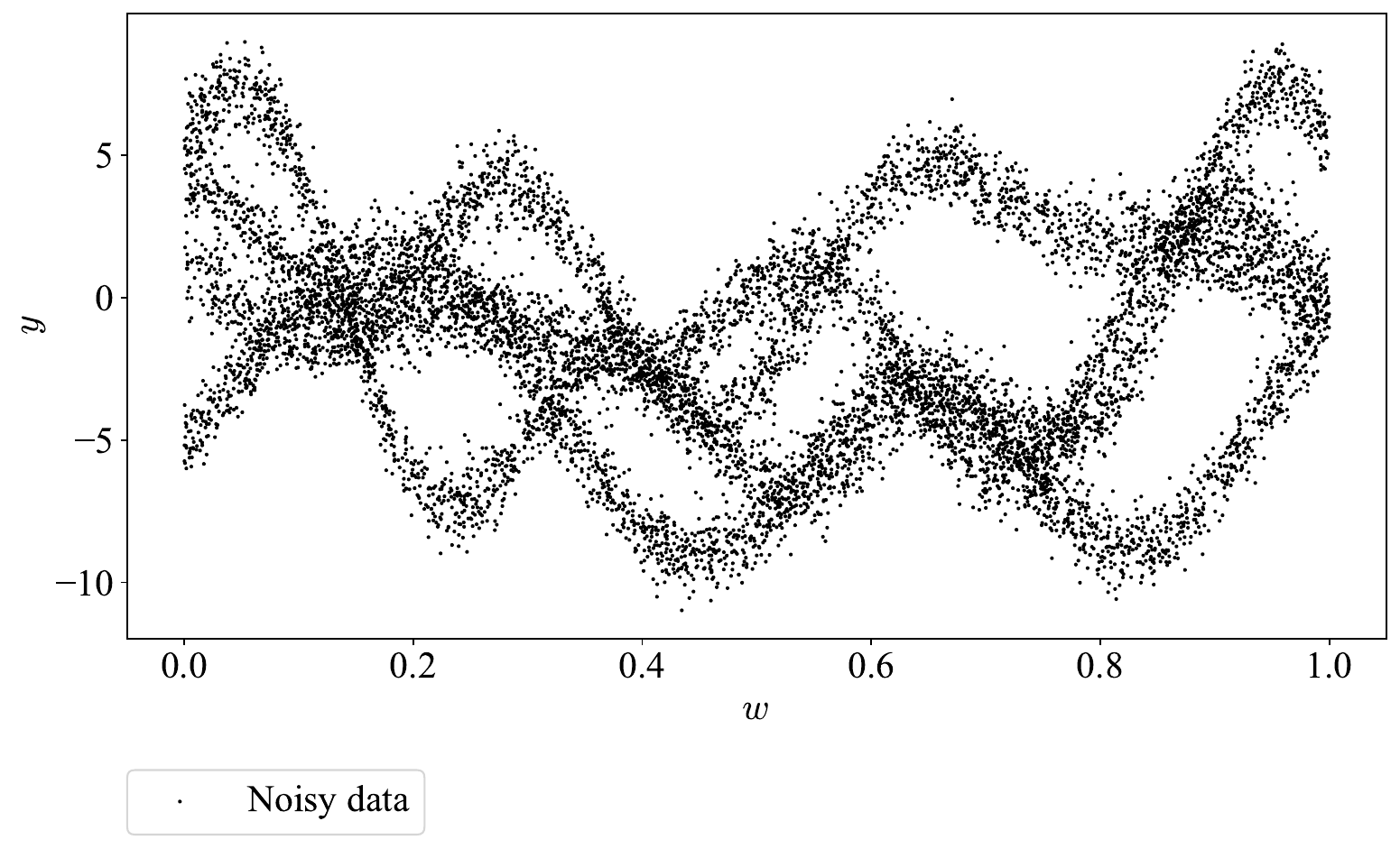}
		\label{fig:sinusoid_noisy}
	\end{subfigure}
	\begin{subfigure}[t]{0.4\textwidth}
		\subcaption{}
		\includegraphics[width = \textwidth,valign = t]{./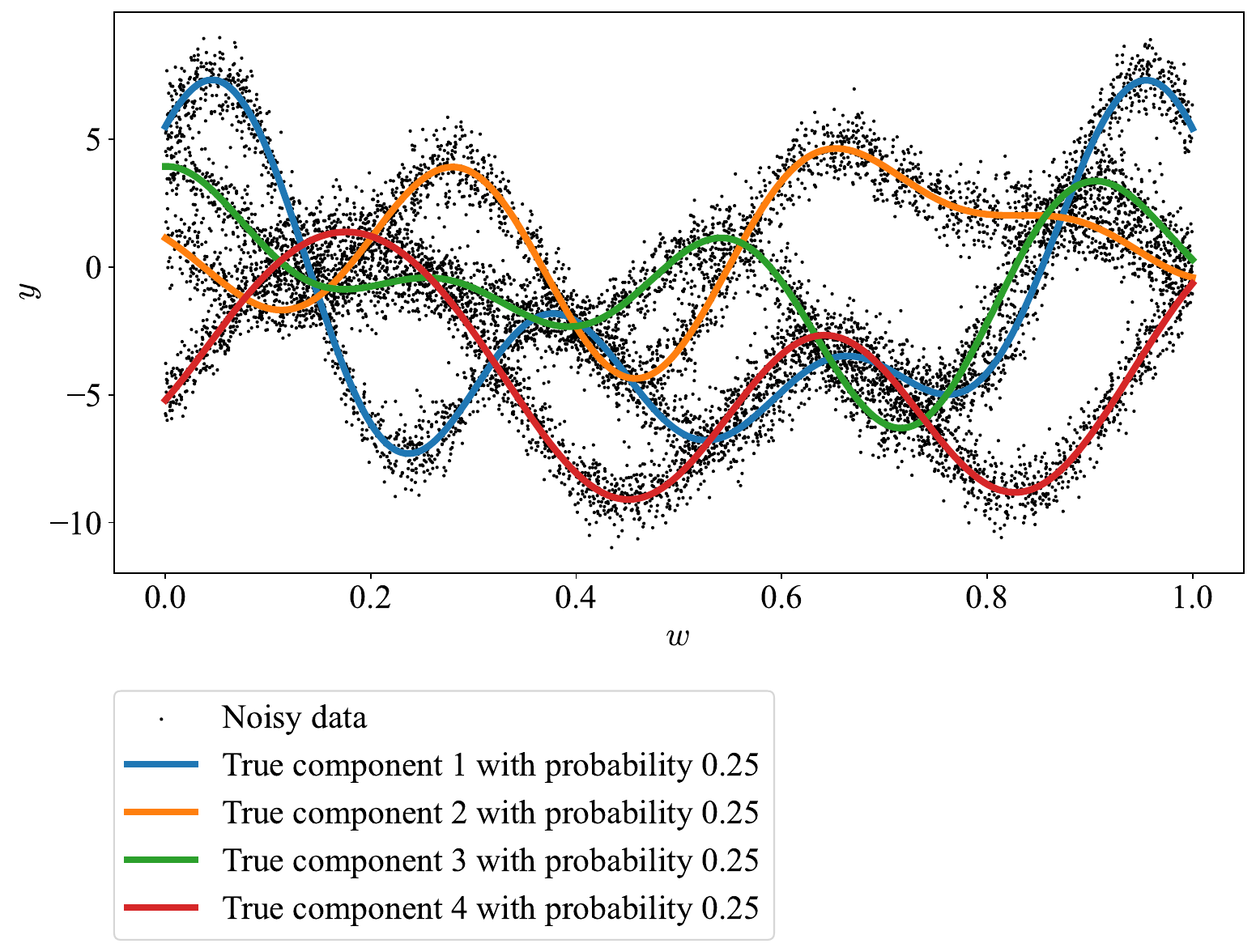}
		\label{fig:sinusoid_true}
	\end{subfigure}
	\begin{subfigure}[t]{0.4\textwidth}
		\vspace{-1em}
		\subcaption{}
		\includegraphics[width = \textwidth,valign = t]{./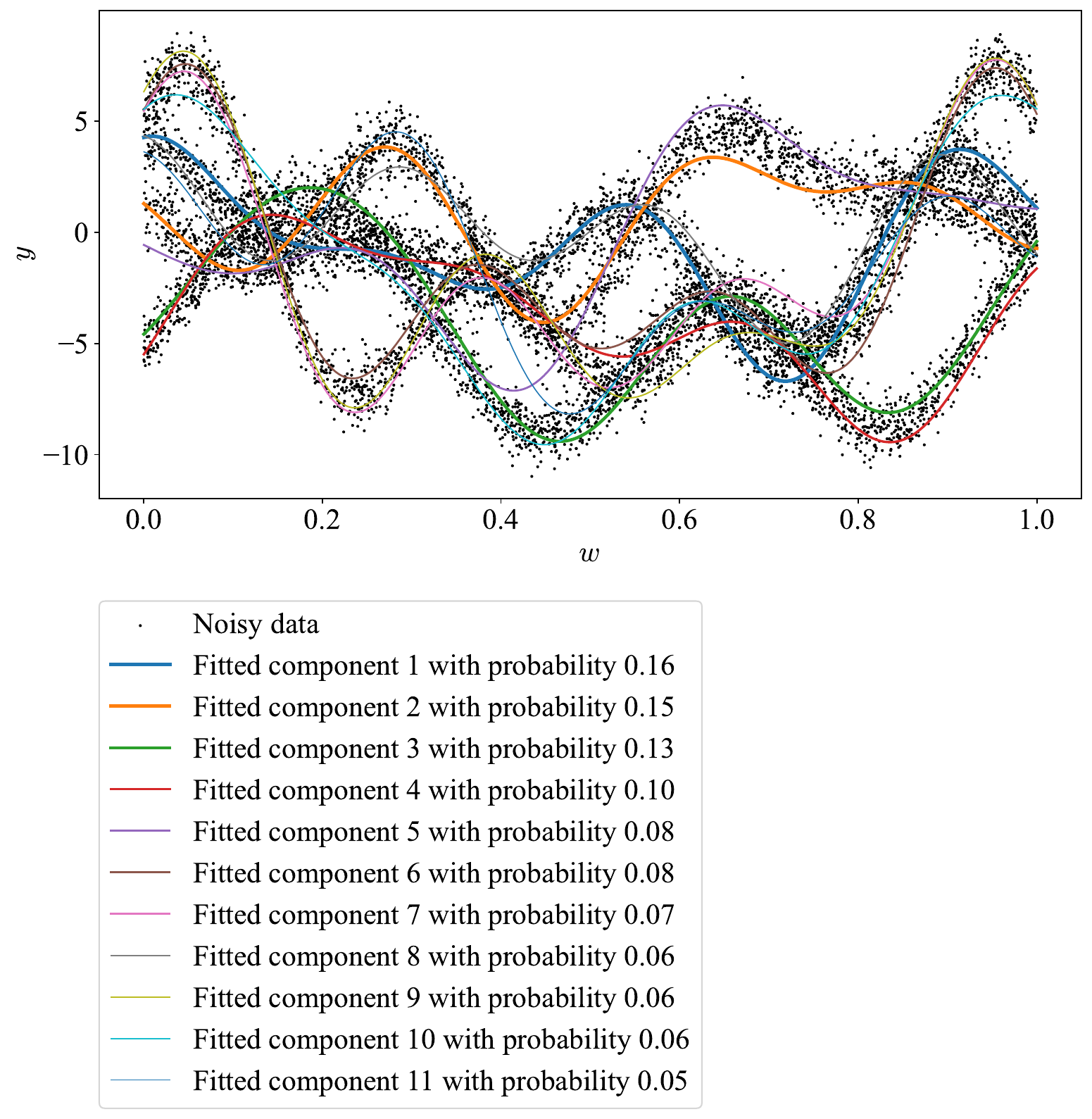}
		\label{fig:sinusoid_fitted_NPMLEsigma}
	\end{subfigure}
	\begin{subfigure}[t]{0.4\textwidth}
		\vspace{-1em}
		\subcaption{}
		\includegraphics[width = \textwidth,valign = t]{./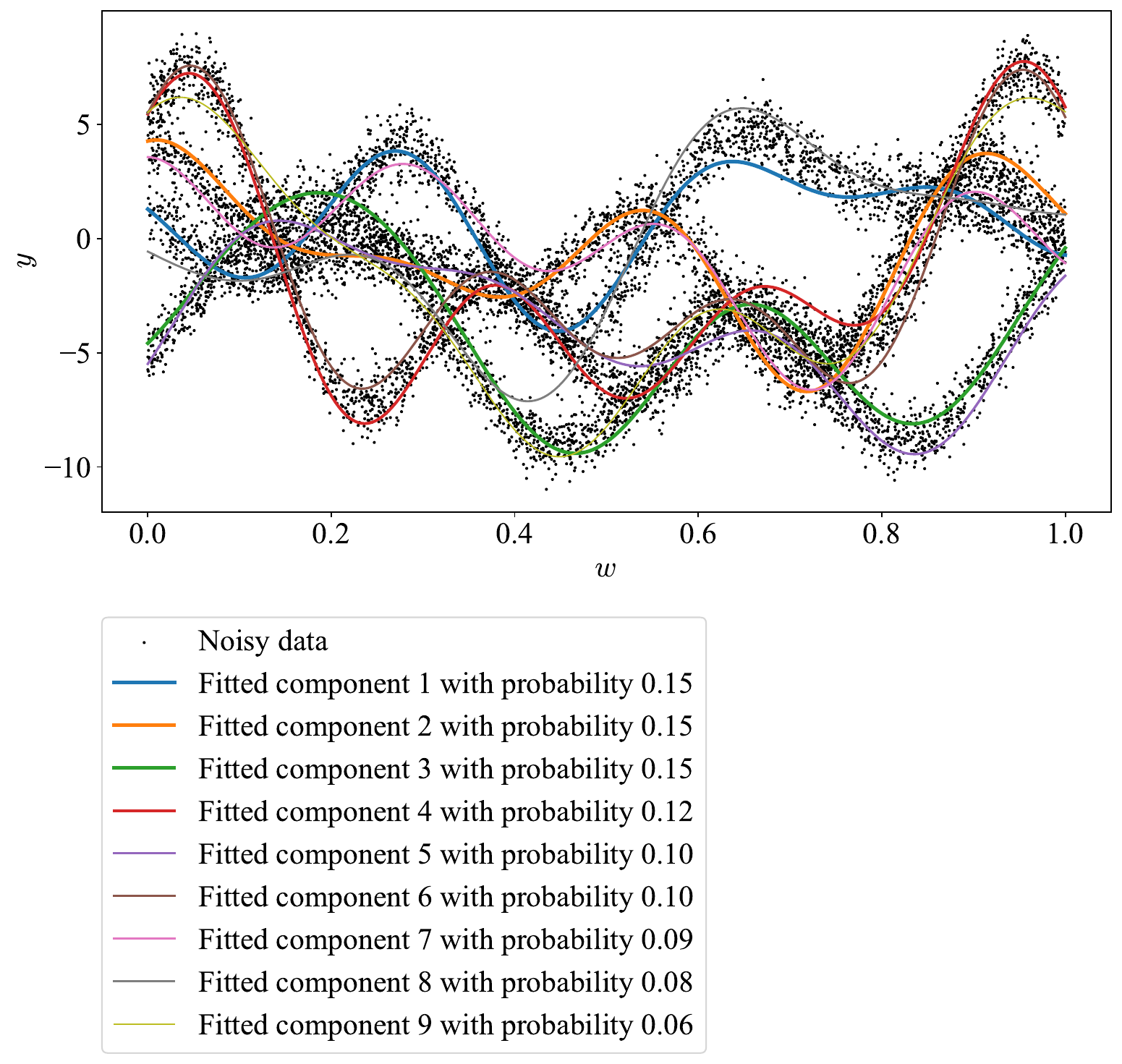}
		\label{fig:sinusoid_fitted}
	\end{subfigure}
	\vspace{-1em}
\caption{\subref{fig:sinusoid_noisy} Data ($n=10,000, \sigma=0.75$);
  \subref{fig:sinusoid_true} True components;
  \subref{fig:sinusoid_fitted_NPMLEsigma} NPMLE with true $\sigma$ and
  BIC; \subref{fig:sinusoid_fitted} NPMLE with CV-selected $\hat{\sigma}=0.9025$ and BIC.}
	\label{fig:sinusoid_fitted_all}
\end{figure}

We computed the NPMLE using Algorithm \ref{exemplaralgo} and pruned components
using BIC, with results shown in
Figure~\ref{fig:sinusoid_fitted_all} (see also Table
\ref{table:sinusoid_results} in the supplement). While the NPMLE
has several components, the BIC-pruned solution is 
parsimonious. The four components with highest mixing
probabilities in Figure~\ref{fig:sinusoid_fitted} correspond exactly
to the true components.

\subsubsection{Simulation: Mixtures with Change-Point Covariates}
\label{sec:changepoint_simulation}

We analyze data ($p = 6$, $n = 10,000$) generated from a mixture of
linear regression models with step covariates. Let $s_j = j/p$ for $j
= 1,\ldots,p-1$.  For each $i=1,\ldots,n$, we sample $w_i \sim
\text{Uniform}[0,1]$ and construct covariates as:  $x_{ij} =
\mathbbm{1}\{j = 1\} + \mathbbm{1}\{j \neq 1\} \cdot \mathbbm{1}\{w_i
\geq s_{j}\}$ for $j = 1,\ldots,p$.  The true model has $k=4$
components with equal mixing probabilities $\pi_l = 1/4$ and
regression coefficients drawn independently from
$N(0,4)$. Figure~\ref{fig:change-point_fitted-intro} (see also Table
\ref{table:change-point_results} in the supplement) demonstrates that
our method performs well on this example.     

\begin{figure}[!htbp]
	\centering
	\begin{subfigure}[t]{0.4\textwidth}
		\subcaption{}
		\includegraphics[width = \textwidth,valign = t]{./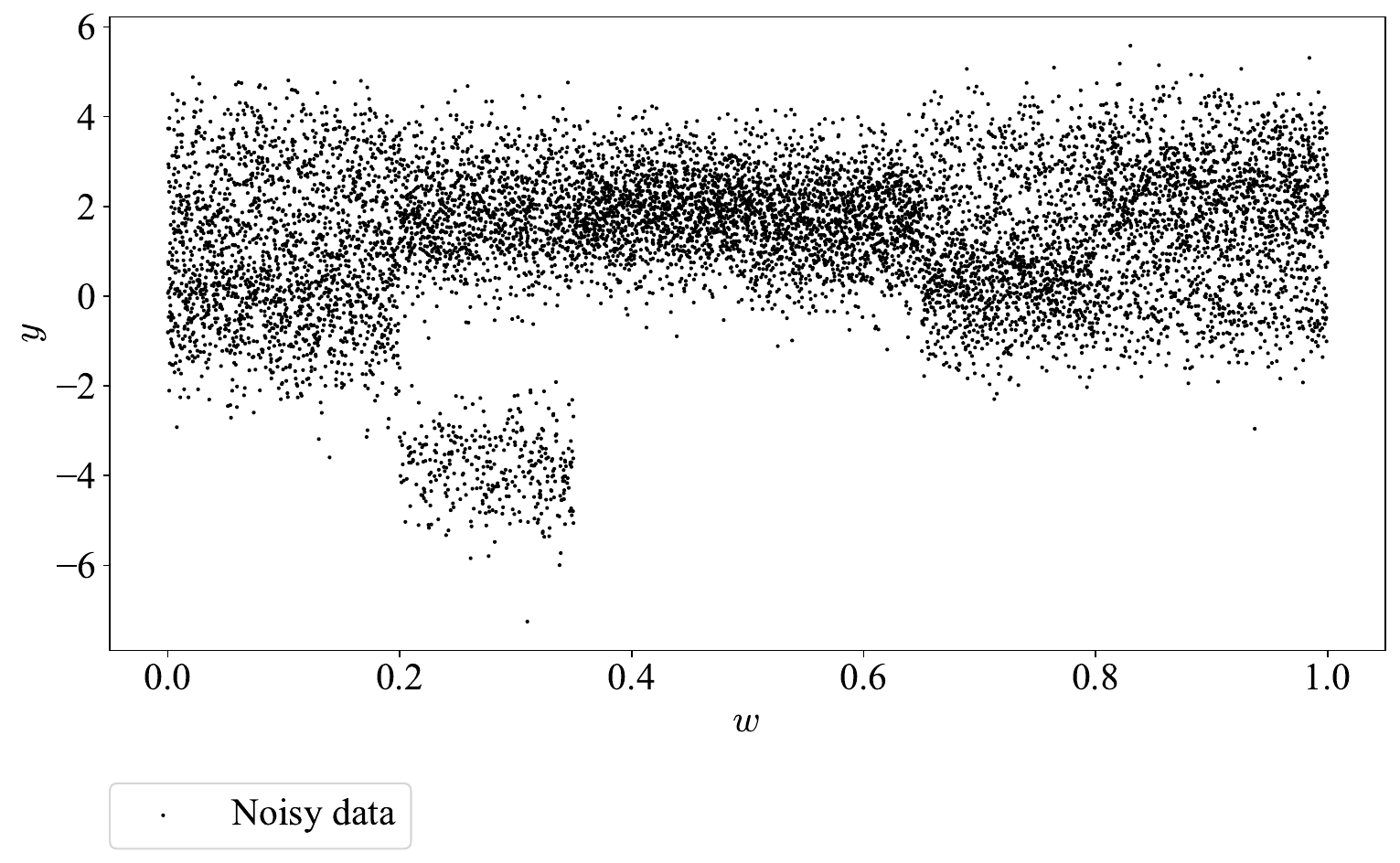}
		\label{fig:change-point_noisy}
	\end{subfigure}
	\begin{subfigure}[t]{0.4\textwidth}
		\subcaption{}
		\includegraphics[width = \textwidth,valign = t]{./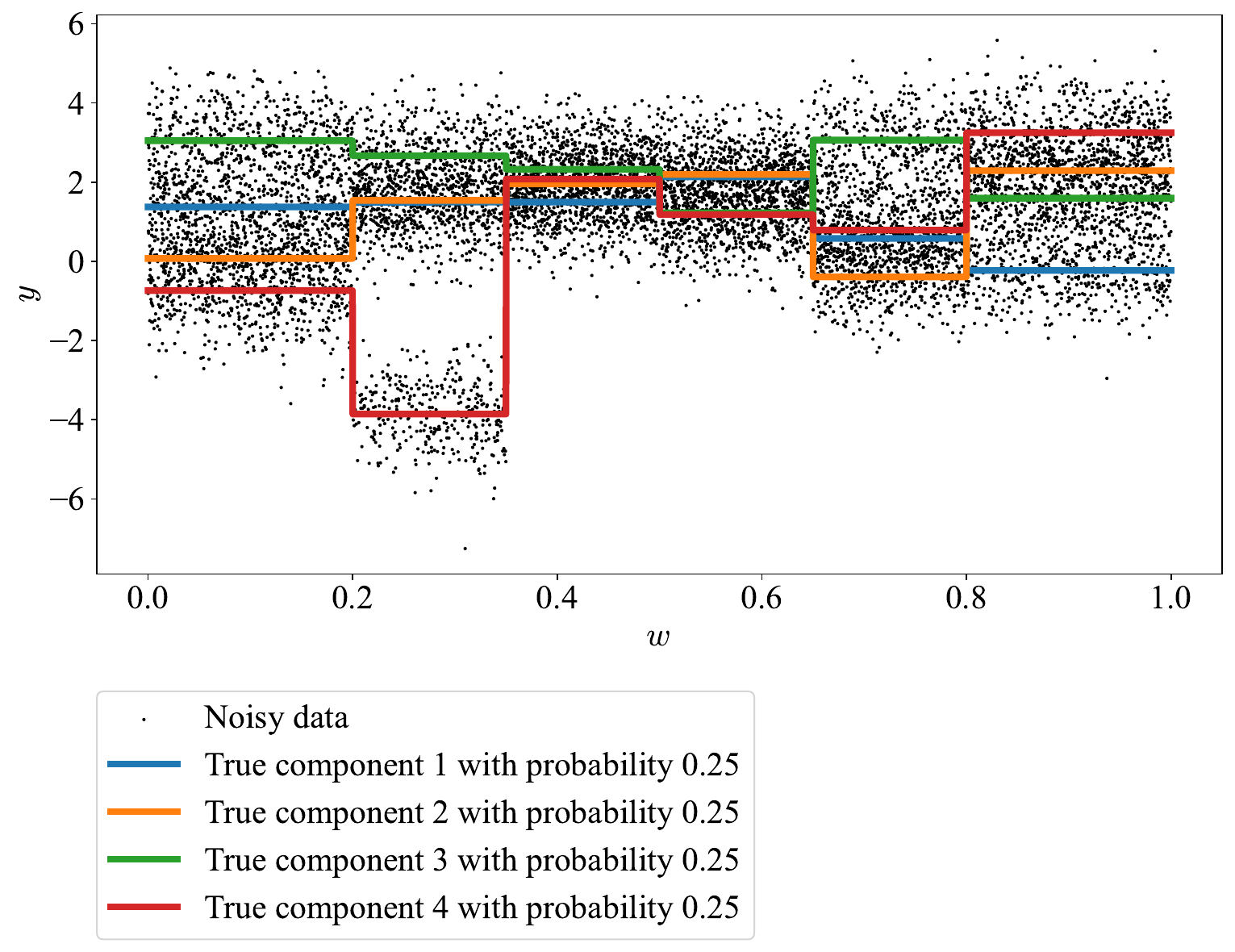}
		\label{fig:change-point_true}
	\end{subfigure}
	\begin{subfigure}[t]{0.4\textwidth}
		\vspace{-1em}
		\subcaption{}
		\includegraphics[width = \textwidth,valign = t]{./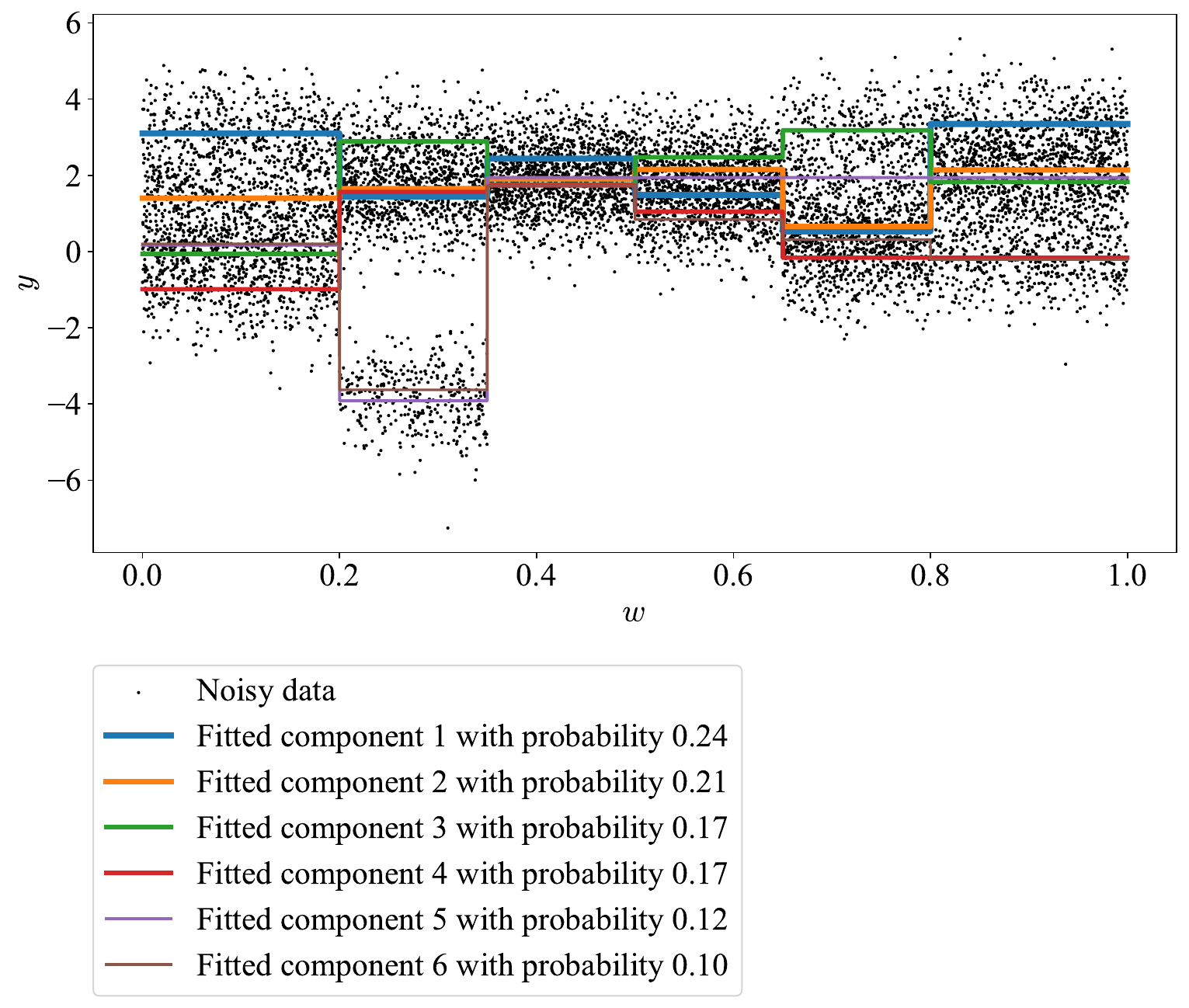}
		\label{fig:change-point_fitted_NPMLEsigma}
	\end{subfigure}
	\begin{subfigure}[t]{0.4\textwidth}
		\vspace{-1em}
		\subcaption{}
		\includegraphics[width = \textwidth,valign = t]{./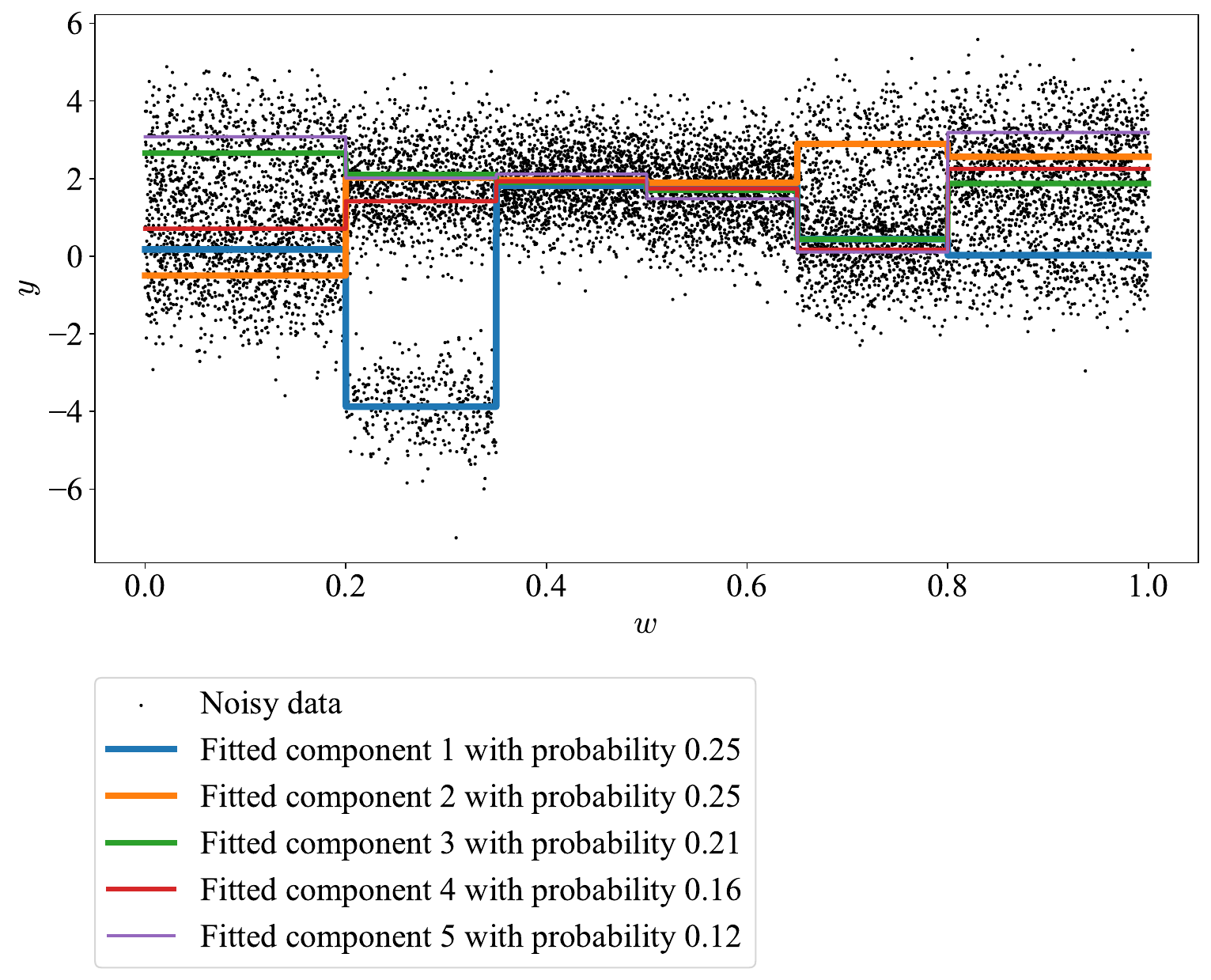}
		\label{fig:change-point_fitted}
	\end{subfigure}
	\vspace{-1em}
	\caption{\subref{fig:change-point_noisy} Data points ($n= 10,000$, $\sigma = 0.75$); \subref{fig:change-point_true} True components; \subref{fig:change-point_fitted_NPMLEsigma} Fitted mixture with true $\sigma$ and BIC; \subref{fig:change-point_fitted} Fitted mixture with $\hat{\sigma} = 0.9025$ selected by CV and BIC.}
	\label{fig:change-point_fitted-intro}
\end{figure}

Cross-validation selected $\hat{\sigma} = 0.9025$ in
both this and the sinusoid example (computed as
$\exp(\log(0.1) + 2.2)$). Despite exceeding the true $\sigma = 0.75$,
the ridgeline plots in Figures~\ref{fig:sinusoid_density} and
\ref{fig:change-point_density} show that the NPMLE with $\hat{\sigma}$
produces superior conditional density estimates compared to those
using the true $\sigma$, particularly in the change-point example.

\begin{figure}[!htb]
	\centering
	\begin{subfigure}[t]{0.48\textwidth}
		\centering
				\caption{}
		\includegraphics[width=\textwidth]{./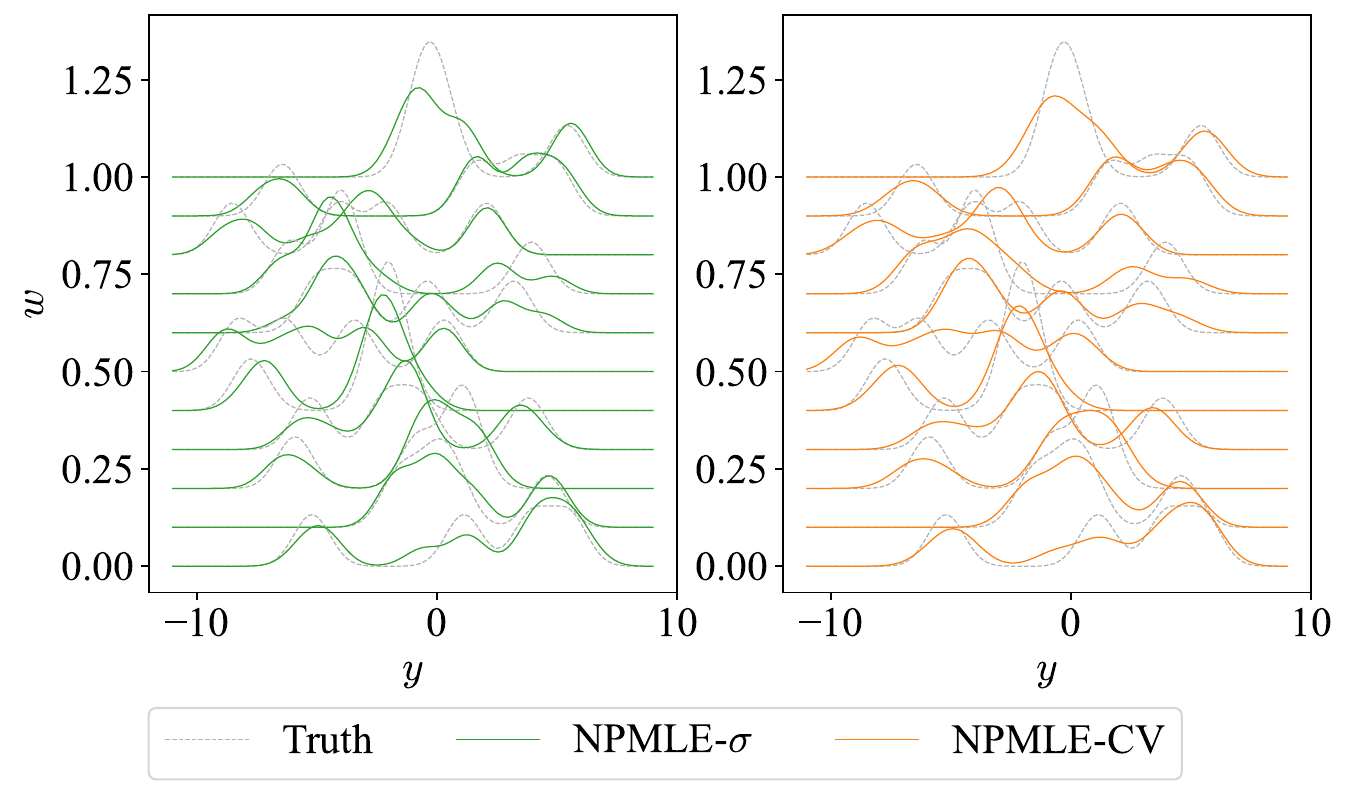}
		\label{fig:sinusoid_density}
	\end{subfigure}
	\hfill
	\begin{subfigure}[t]{0.48\textwidth}
		\centering
				\caption{}
		\includegraphics[width=\textwidth]{./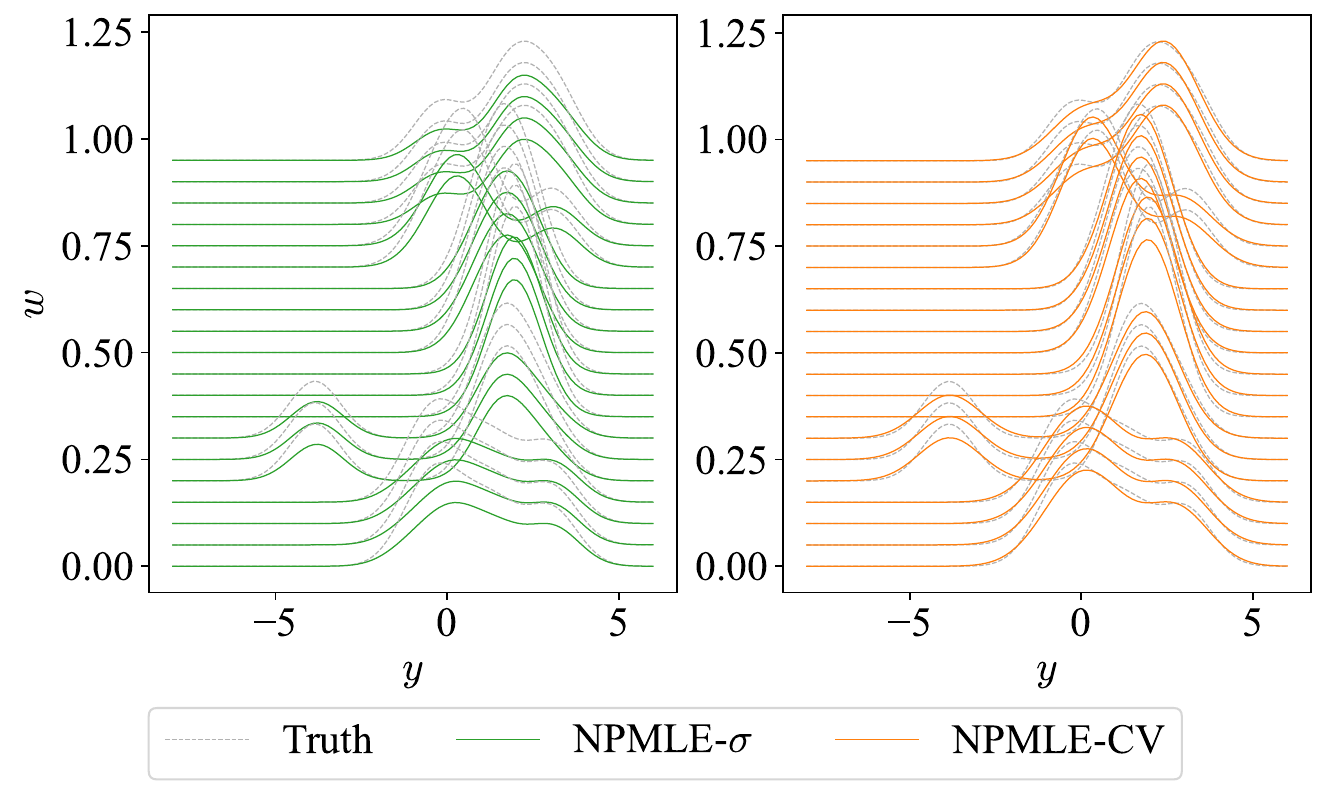}
		\label{fig:change-point_density}
	\end{subfigure}
	\vspace{-3em}
	\caption{Ridgeline plots for \subref{fig:sinusoid_density} sinusoid example and \subref{fig:change-point_density} change-point examples.}
	\label{fig:combined_ridgeline_plots}
\end{figure}

\subsection{Real Data Case Studies}

\subsubsection{Real Data: Music Tone Perception}

We apply our method to the music tone perception data collected by
\citet{cohen1980inharmonic}, which has been analyzed in several
studies \citep{de1989mixtures, viele2002modeling, yao2015mixtures} and
is available in the R package \texttt{mixtools}
\citep{R2009mixtools}. The dataset contains 150 observations from
experiments where a musician was asked to tune an adjustable tone to
the octave above a fundamental tone with stretched overtones. The
covariate $s$ represents the stretching ratio of overtones to the
fundamental tone, while the response $y$ is the ratio of the adjusted
tone to the fundamental. Two competing music perception theories
exist regarding the 
relationship between $y$ and $s$: one predicts a consistent $y = 2$
ratio, while the other suggests $y = s$. 

\begin{figure}[!htb]
	\centering
	\begin{subfigure}[t]{0.48\textwidth}
		\caption{}
		\label{fig:tonedata_withoutBIC}
		\includegraphics[width = \textwidth,valign = t]{./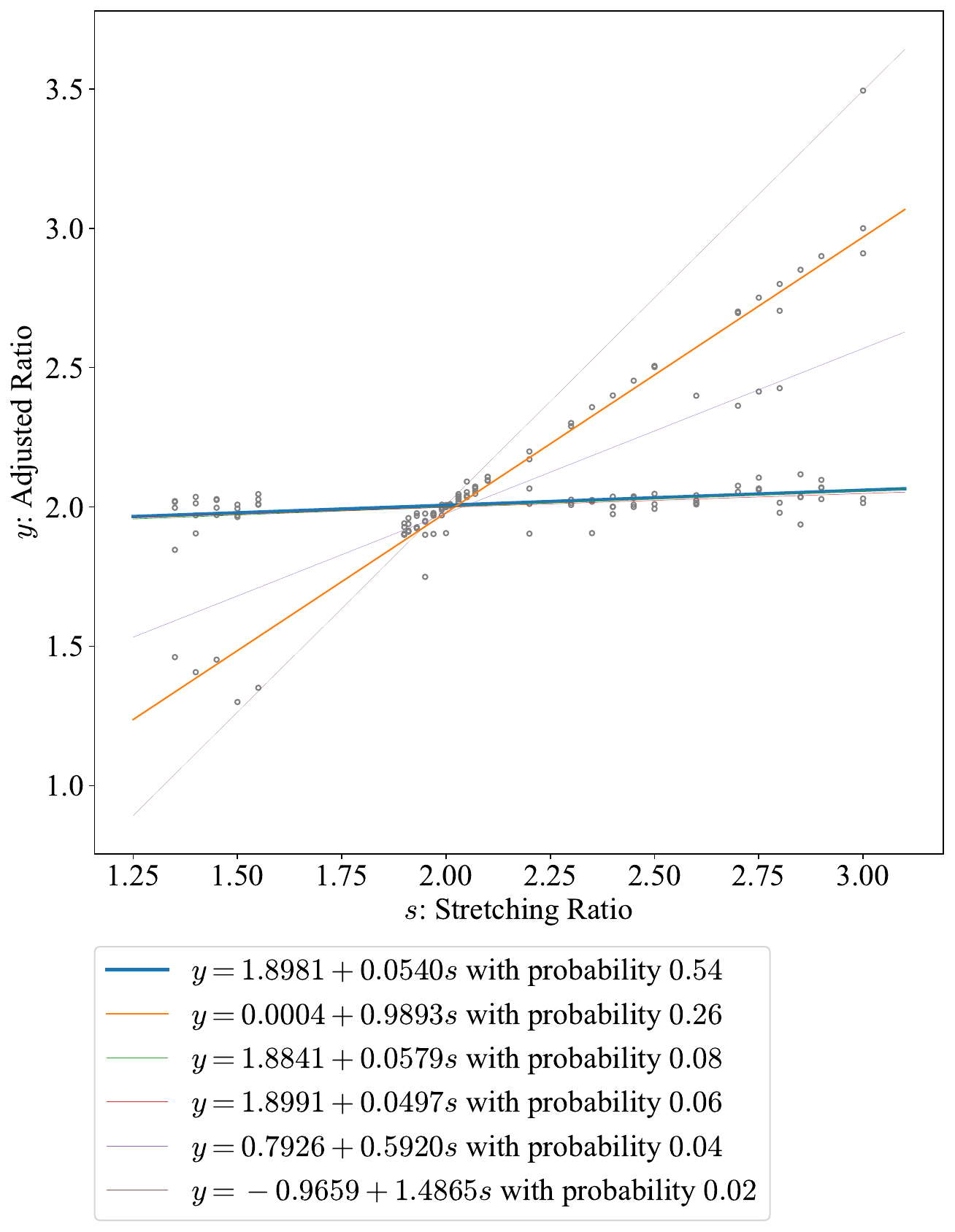}
	\end{subfigure}
	\begin{subfigure}[t]{0.48\textwidth}
		\caption{}
		\label{fig:tonedata_afterBIC}
		\includegraphics[width = \textwidth,valign = t]{./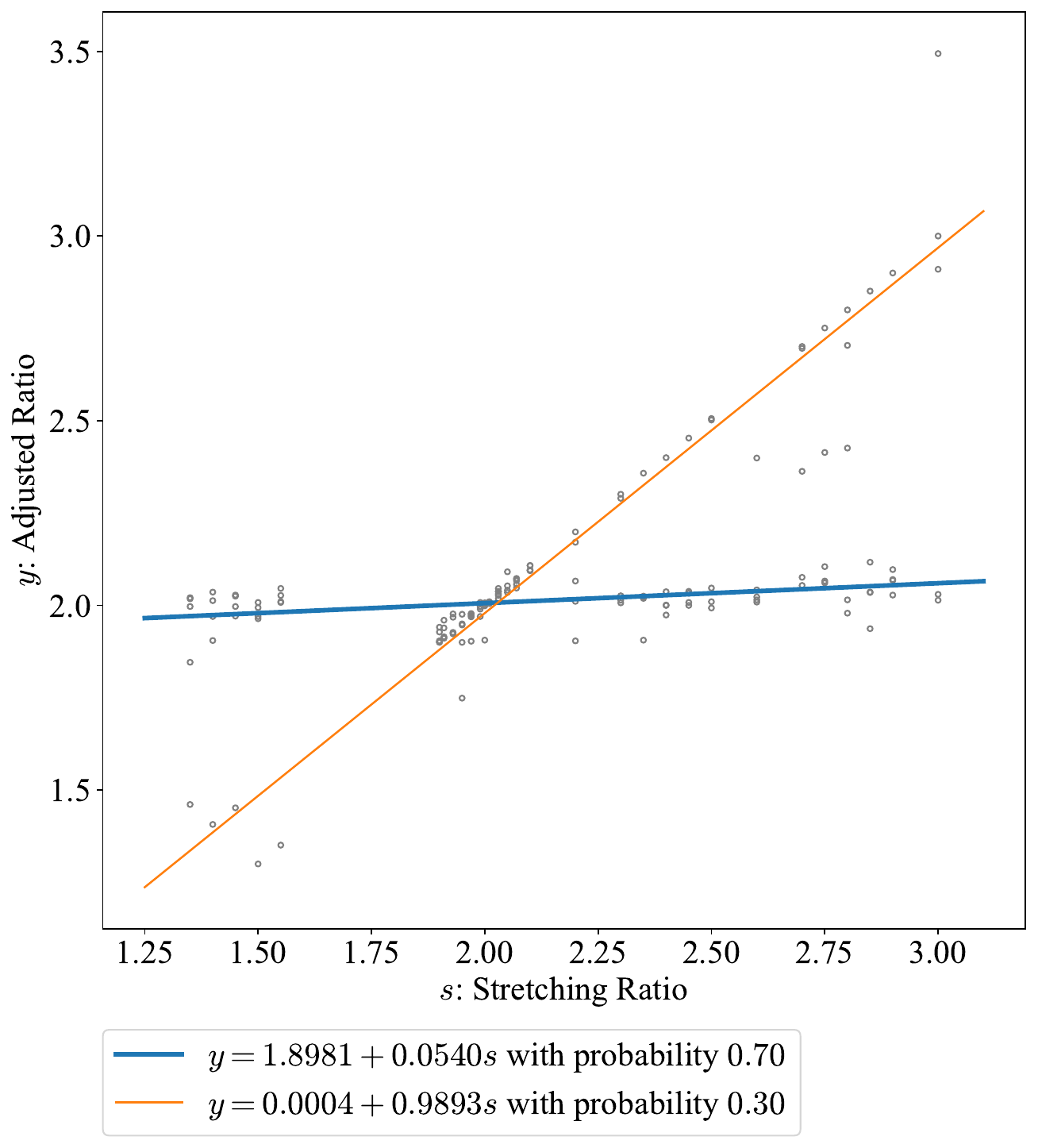}
	\end{subfigure}
	\vspace{-1em}
	\caption{Music tone perception: \subref{fig:tonedata_withoutBIC} Before BIC selection; \subref{fig:tonedata_afterBIC} After BIC selection.}
	\label{fig:tonedata}
\end{figure}

We model this data using
mixture of linear regression with covariate $x = (1, s)^\T$. After
setting $\hat{\sigma} = 0.1200$ via 10-fold cross-validation, we
compute the NPMLE and apply BIC selection to reduce overfitting.
Figure~\ref{fig:tonedata_afterBIC} shows our method identified 
two components, corresponding precisely to the two theoretical music
perception models, despite no component count prior knowledge.


\subsubsection{Real Data:
  \texorpdfstring{CO\textsubscript{2}-GDP}{CO2-GDP} Relationship}


CO\textsubscript{2} emissions, primarily from fossil fuels, are widely
considered a key driver of global warming, while GDP reflects economic
wellbeing. The relationship between these metrics is crucial for
balancing growth with sustainability. We analyze per capita
CO\textsubscript{2} emissions and GDP data from 159 countries in 2015
\citep{owideconomicgrowth}, setting CO\textsubscript{2} emissions per
capita (in 10 tons) as response $y$ and GDP per capita (in 10,000 USD)
with constant term as covariate $x = (1,g)^\T$. We selected
$\hat{\sigma} = 0.1343$ via 10-fold cross-validation.

\begin{figure}[!htb]
	\centering
	\includegraphics[width = 0.9\textwidth]{./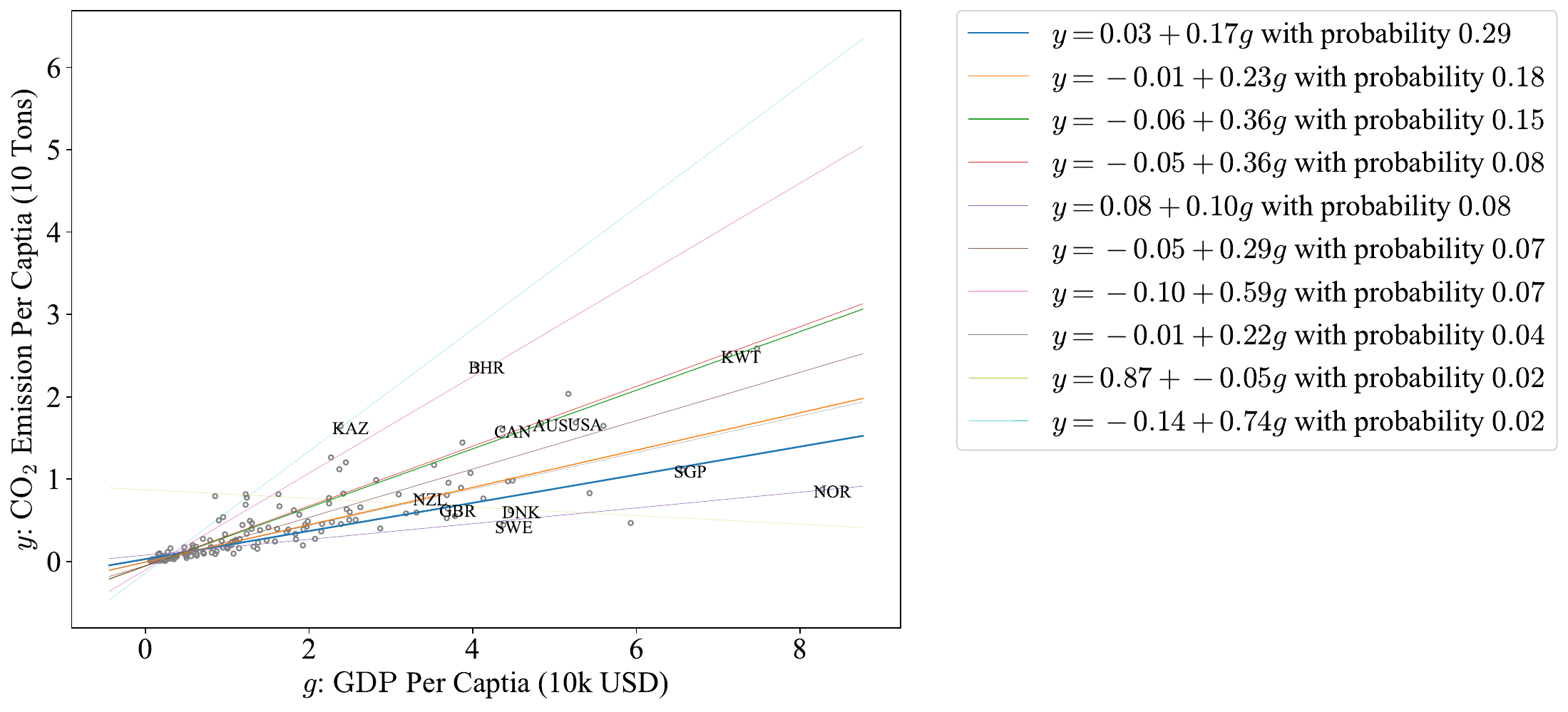}
	\caption{Fitted result of CO\textsubscript{2}-GDP data before BIC selection.}
	\label{fig:co2_gdp_withoutBIC}
\end{figure}

\begin{figure}[!htb]
	\centering
	\includegraphics[width = 0.9\textwidth]{./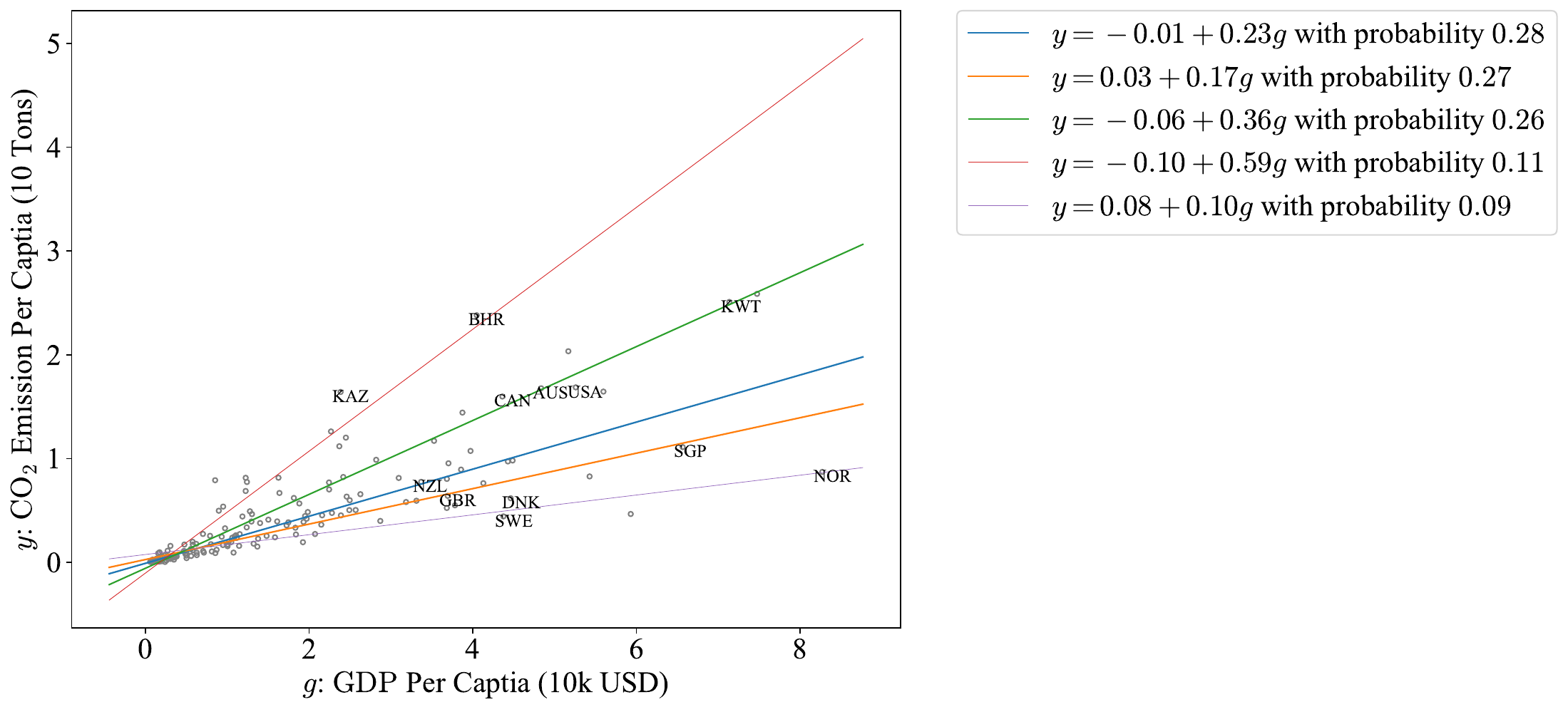}
	\caption{Fitted result of CO\textsubscript{2}-GDP data after BIC selection.}
	\label{fig:co2_gdp_afterBIC}
\end{figure}

Figures~\ref{fig:co2_gdp_withoutBIC} and \ref{fig:co2_gdp_afterBIC}
show the fitted results by NPMLE before and after BIC selection,
respectively, with select countries annotated by their three-digit
codes in Figure~\ref{fig:co2_gdp_afterBIC}. All five identified
components have intercepts close to zero but differ in slopes. 

The component with the highest slope (0.59) has very low mixing
probability and includes fossil-fuel-rich countries like Bahrain and
Kazakhstan. The second highest slope (0.36) captures developed nations
with substantial fossil fuel consumption, such as Canada, Australia,
and the United States. The lowest slope component, with high mixing
probability, includes countries like Sweden, Denmark, and Norway,
known for low emissions and strong environmental policies. Countries
within each component tend to share geographic 
proximity or similar resource/development profiles. This natural
clustering validates our mixture model approach and provides insights
into potential development paths for lower GDP countries
\citep{hurn2003estimating}, demonstrating the practical utility of our
method in identifying meaningful economic-environmental patterns.

\subsubsection{Real Data: Worker Wage}

We apply our method to analyze wage determinants using a dataset of
2,260 full-time male workers from the southern United States in 1987,
previously studied by \citet{bierens2001integrated} and available in
the R package \texttt{Sleuth3}. To reduce outlier effects, we
restricted the sample to workers with 13-18 years of
education. Following standard labor economics models
\citep{mincer1974schooling, lemieux2006mincer}, we use log weekly
earnings as response ($y$) and normalized covariates $x =
(1, ex, ex^2, ed)^\T$, where $ex$ represents years of experience and
$ed$ represents years of education.

\begin{figure}[!htb]
	\centering
	\includegraphics[width = 0.9\textwidth]{./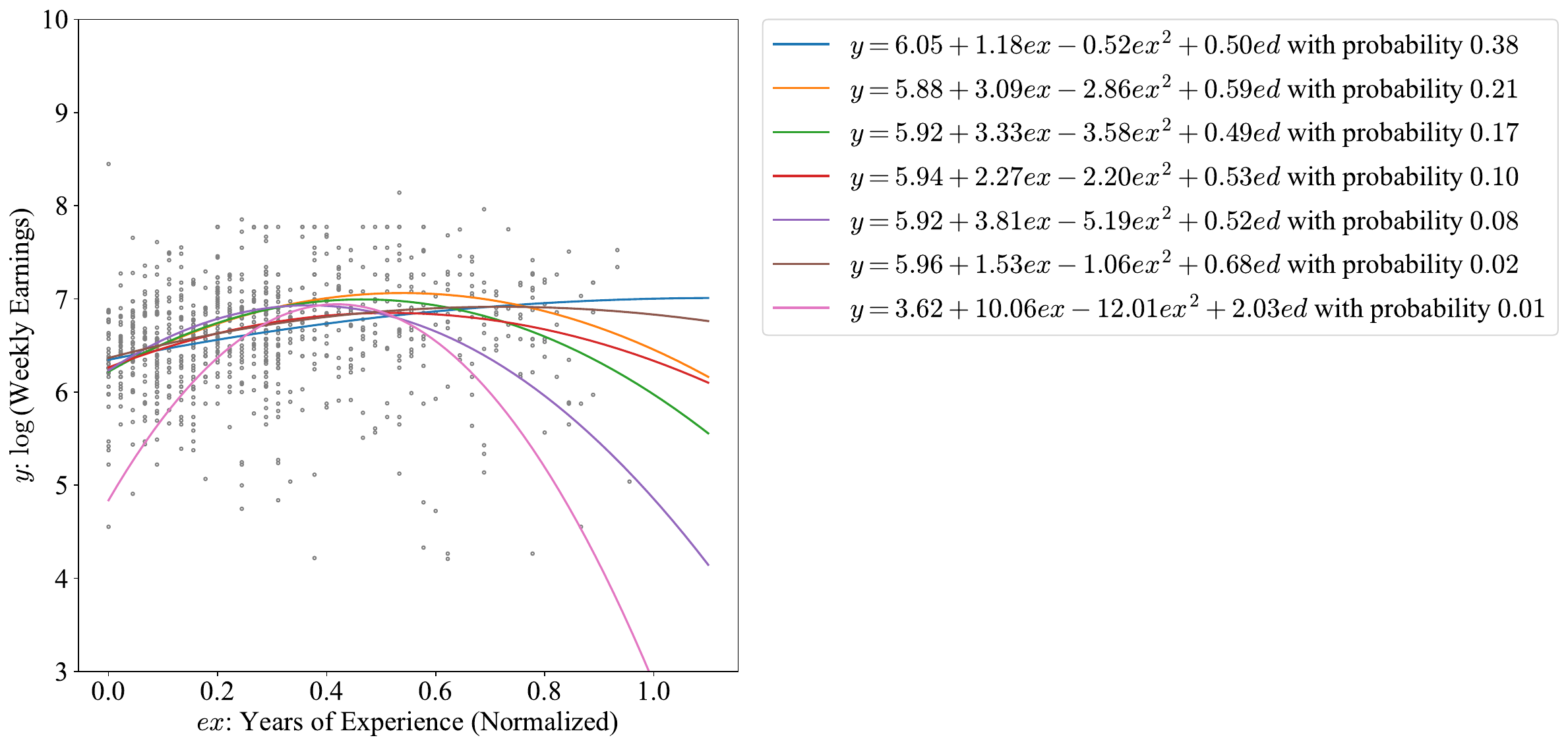}
	\caption{Fitted result of the worker wage example
          ($\hat{\sigma} = 0.4953$). For better visualization, only
          workers with $16$ years of education are
          shown.}
	\label{fig:worker_wage} 
\end{figure}

Figure~\ref{fig:worker_wage} shows the fitted results without BIC
selection. A key advantage of mixture models is the ability to compute
posterior component membership probabilities for each
individual. These posterior probabilities indicate how likely each
worker belongs to each of the identified components based on their
wage patterns. 

We examine component membership in two ways: (1) fractional
membership, where each worker contributes their posterior probability
to each component (allowing a worker to partially belong to multiple
components), and (2) integer membership, where each worker is assigned
entirely to the single component with their highest posterior
probability. 

Notably, we analyze membership patterns by race (Black vs. Non-Black),
though race was not included as a model
covariate. Table~\ref{tab:membership_race} summarizes the percentage
of workers in each racial group assigned to each component.

\begin{table}[!htb]
	\centering
	\adjustbox{max width=\textwidth}{
		\begin{tabular}{|l|*{7}{c|}}
			\hline
			\textbf{Membership Per-Component (\%)} & \text{Comp 1} & \text{Comp 2} & \text{Comp 3} & \text{Comp 4} & \text{Comp 5} & \text{Comp 6} & \text{Comp 7} \\ \hline
			\textbf{Black Workers} & 38.61 (95.97) & 20.29 (0.00)
             & 17.35 (0.34) & 10.01 (0.00) & 8.87 (0.67) & 1.95 (0.00) & \textbf{2.92 (3.02)} \\ \hline 
			\textbf{Non-Black Workers} & 38.96 (96.32) &  21.29 (1.20) & 17.81 (0.71) & 9.84 (0.00) & 8.47 (0.86) & 1.98 (0.00) & \textbf{1.65 (0.90)} \\ \hline 
		\end{tabular}
	}
	\caption{Component Membership by Race. Fractional membership
          percentage and integer membership percentage (in the
          parentheses).} 
	\label{tab:membership_race}
  \vspace{-1em}
\end{table}
While Component 1 contains the majority of workers from both racial
groups, Component 7 shows a notable disparity: Black workers (2.92\%
fractional, 3.02\% integer) are overrepresented compared to Non-Black
workers (1.65\% fractional, 0.90\% integer). Component 7,
characterized by coefficients $\hat{\beta}_7 = (3.62, 10.06, -12.01,
2.03)^\T$, has the largest negative coefficient on $ex^2$, indicating
that wages initially increase with experience but then decrease
rapidly. This suggests Black workers are more likely to experience
stronger diminishing returns to experience. This finding demonstrates
our model's ability to uncover subtle patterns in wage determinants
across different demographic groups without explicitly incorporating
race as a predictor variable.

\section{Discussion} \label{section:conclusions}

\subsection{On Computation}\label{subsection:cgm}
To approximately solve \eqref{eq:maximize_G}, Algorithm
\ref{exemplaralgo} restricts $G$ to be supported 
on finitely many exemplar points. One might attempt to 
solve the infinite-dimensional problem \eqref{eq:maximize_G} directly
via standard convex optimization 
algorithms such as the CGM (Conditional Gradient Method; see
\citet{jaggi2013revisiting}). The CGM is closely related to the Vertex
Direction Method (VDM) and the Vertex Exchange Method (VEM) which have
been historically popular for NPMLE computation in mixture models
(\citet{wu1978some,lindsay1983geometrya, bohning1986vertex,
  bohning2000computer}).   

When applied to \eqref{eq:maximize_G}, the CGM leads to the following
iterative algorithm. Initialize with $G^{(0)} =
\delta_{\{\beta^{(0)}\}}$ for some  $\beta^{(0)} \in K$. Then for each
$k \geq 0$, solve the $p$-dimensional optimization:
\begin{equation}\label{eq:subprob_beta}
  \tilde{\beta}^{(k)} \in \argmax \left\{\frac{1}{n \sigma} \sum_{i=1}^n
    \frac{1}{f^{G^{(k)}}_{x_i}(y_i)} \phi \left(\frac{y_i - x_i^{\T}
        \beta}{\sigma} \right) : \beta \in K \right\}, 
\end{equation}
and take $G^{(k+1)}$ to be the solution of \eqref{eq:maximize_G} when
$G$ is restricted to be supported on $\{\tilde{\beta}^{(0)}, \dots,
\tilde{\beta}^{(k)}\}$.  This algorithm seemingly avoids explicit
discretization but the difficulty lies in solving the non-convex problem
\eqref{eq:subprob_beta}. Naive gridding is
computationally prohibitive (even for $p =
3$) and standard black-box optimization routines are 
slow (for $p \geq 6$) with no guarantees. 

Compared to a naive grid, it makes sense to use a tailored
discretization for solving \eqref{eq:subprob_beta}. By writing the
gradient condition for  the optimization in \eqref{eq:subprob_beta},
one can see that every optimizer that is in the interior of $K$ should
satisfy the condition of Proposition \eqref{exem_lemma}. We can
therefore discretize \eqref{eq:subprob_beta} by generating a large
number of points as in Algorithm \ref{exemplaralgo}. Because these
generated points do not depend on the current iterate $G^{(k)}$, it
turns out that this overall scheme is simply implementing the CGM
algorithm on the finite-dimensional convex optimization problem
\eqref{disc.prob}. Therefore, Algorithm \ref{exemplaralgo} can also be
seen as a variant of CGM obtained by solving the subproblem
\eqref{eq:subprob_beta} using exemplars.

\subsection{Unresolved Questions}\label{subsection:unresolved}

\noindent \textbf{Uniqueness}: We do not know if $\hat{G}$ is unique
when the design matrix 
$\mathbf{X}$ is of full column rank. Some uniqueness results for
NPMLEs in the univariate case can be found in
\cite{lindsay1983geometryb}. A counterexample for uniqueness of the
NPMLE for multivariate Gaussian location mixture densities is in
\citet[Lemma 2]{soloff2021multivariate}.

\noindent \textbf{Non-compact $K$}: Our theoretical results on Hellinger accuracy assume that $K$ is
compact. We do not know if these results continue to hold for
non-compact $K$ such as when $K = \mathbb{R}^p$. 

\noindent \textbf{More rates}: While Theorem \ref{thm:weakconsistency}
shows consistency of $\hat{G}$ in random-design, corresponding convergence rates
are unknown. When $p = 1$,
$x_i \neq 0$ for each $i$ and $K = [-R, R]$, we can show existence of
$n_0$ (depending on $\min_i |x_i|$, $\max_i |x_i|$ and $R$) such that
for all $n \geq n_0$, 
\begin{equation}
  \label{W2rate}
        W_2^2(G^*, \hat{G}_n)  \lesssim  \frac{1}{\log n}
\end{equation}
 with probability at least $1- \frac{1}{n^8}$ (the constants
 underlying $\lesssim$ depend on $\min_i |x_i|$ and $\max_i
 |x_i|$). Here  $W_2^{2}(G^*, \hat{G}_n)$ is the $L_2^2$ Wasserstein  
        distance (see e.g., \cite{nguyen2013convergence}). This result
        follows from \citet[Theorem 10]{soloff2021multivariate}
        because, when $p = 1$, the mixture of regression model reduces
        to the heteroscedastic Gaussian location mixture model:   
\begin{equation}
  \label{red_het}
      \frac{y_i}{x_i} = \beta_i + \frac{z_i}{x_i},\text{ for } i = 1,\dots,n.
\end{equation}
It is unclear how to derive the rate for $p \geq 2$, as this
reduction fails in higher dimensions.

\subsection{Sparsity of $\hat{G}$} 
\label{subsection:disc_unique}

Theorem \ref{thm:existence} shows that an NPMLE $\hat{G}$ exists
with at most $n$ support points in $K$. In practice however (see e.g.,
Section
\ref{section:experimental}), approximate NPMLEs typically have far fewer support
points.  

For one-dimensional Gaussian location mixtures,
\citet{polyanskiy2020self} proved that the NPMLE has $O(\log n)$
support points under certain conditions on the true mixing
measure. Higher-dimensional analogues of this result remain elusive
and represent a challenging open problem. In our mixture of
regressions framework, rigorously establishing the sparsity of
$\hat{G}$ when $p > 1$ remains an important direction for future
research. 

For $p = 1$, however, we prove a $O(\log n)$ upper bound on the number
of support points in $\hat{G}$ under fixed design, extending
\citeauthor{polyanskiy2020self}'s result to mixture of regressions.


\begin{theorem} \label{thm:onedim_selfregu}
 Consider $p=1$ and the design points
satisfying $|x_i/x_j|\leq r_0$ for all $i,j$. Assume that $G^* \left\{\beta: \|\beta\| \le R
\right\} = 1$ and $\max_{1 \leq i \leq n} \|x_i\| \leq B$ for some
$B,R > 0$. Then for any $\tau>1$ and $n > \max\{\exp(C_0),\exp(C_1 r_0^2 B^2R^2 \sigma^{-2})\}$, every NPMLE
$\hat{G}$ has at most $\tau r_0^2  \cdot O(\log n)$ support points
with probability at least $1-n^{-\tau}$, where $O(\cdot)$ omits multiplicative constant factors and $C_0,C_1$ are constants.  
\end{theorem}

The main ingredient in proving Theorem \ref{thm:onedim_selfregu} is to
bound the number of zeros of $\nabla D(\hat{G},\beta)$ (recall, from
Proposition \ref{prop:opt_condition} that the support points of
$\hat{G}$ that are in $\text{int}(K)$ satisfy $\nabla D(\hat{G},\beta)
= 0$) by using a variant of the Jensen formula from complex
analysis.




\subsection{When $p$ is  large}\label{section:hd}

Throughout, we focused on cases with fixed dimension
$p$. Our convergence rates (Theorems \ref{fdrate} and
\ref{them:randomdesign_predictionerror}) contain logarithmic terms
$(\log n)^p$ that degrade as $p$
increases. Here we demonstrate this empirically and suggest
an alternative for high-dimensional settings. 

We extend the simulation from Figures \ref{fig:3comp_noisytrue-intro}
and \ref{fig:3comp_fitted-intro} in Section \ref{simpeg} by increasing
the sample size to $n = 500$ and systematically adding irrelevant
covariates. The original design matrix has two columns, $\mathbf{X} = [1_{n
  \times 1} : w_{n \times 1}]$, where entries of $w$ follow
$\text{uniform}(-1, 3)$. For each $\tilde{p} \in \{1, \dots, 11\}$, we
add $\tilde{p}$ spurious covariates to create $ \mathbf{X}_{\text{new}} = [1 : w : v^{(1)} : \dots : v^{(\tilde{p})}]$ where each $v^{(j)}$ contains independent $\text{uniform}(-1, 3)$ entries. 

We fit our model to $(\mathbf{X}_{\text{new}}, Y)$ using two approaches: (1)
$\hat{G}(\sigma)$ with $\sigma = 0.5$ (true value), and
(2) $\hat{G}(\hat{\sigma})$ with $\hat{\sigma}$ estimated by
CV. To evaluate prediction performance, we generate a
test dataset with 
$n_{\text{test}} = 50$ points, where each $x_i^{\text{test}}$ is
generated identically to the training data, and each
$y_i^{\text{test}}$ follows model \eqref{model:mixlin_i_1} with the
true mixture $G^*$ concentrated on $(3, -1)$, $(1, 1.5)$, and $(-1,
1.5)$ with probabilities $0.3$, $0.3$, and $0.4$ respectively. The
test score is $-\sum_{i=1}^{n_{\text{test}}} \log
  \hat{f}_{x_i^{\text{test}}}(y_i^{\text{test}})$ where
  $\hat{f}_{x_i^{\text{test}}}(y_i^{\text{test}})$ is the density
  estimated by our mixture model.

\begin{figure}[h]
    \centering
    \begin{subfigure}[t]{0.44\textwidth}
        \centering
        \caption{}
        \label{fig:NPMLEsigma_withoutBIC}
        \includegraphics[width=\textwidth]{./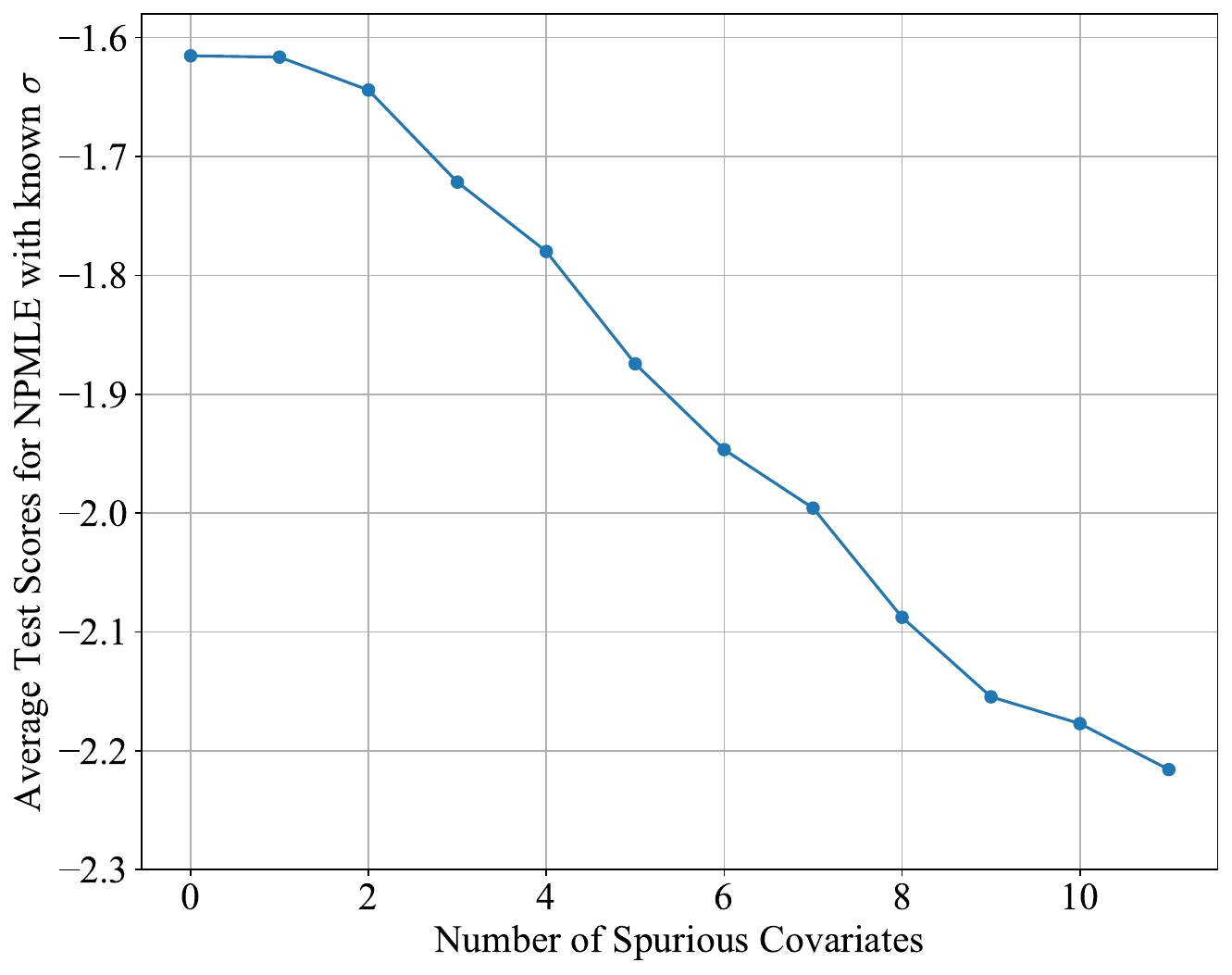}
    \end{subfigure}
    \hfill
    \begin{subfigure}[t]{0.44\textwidth}
        \centering
        \caption{}
        \label{fig:NPMLE_withoutBIC}
        \includegraphics[width=\textwidth]{./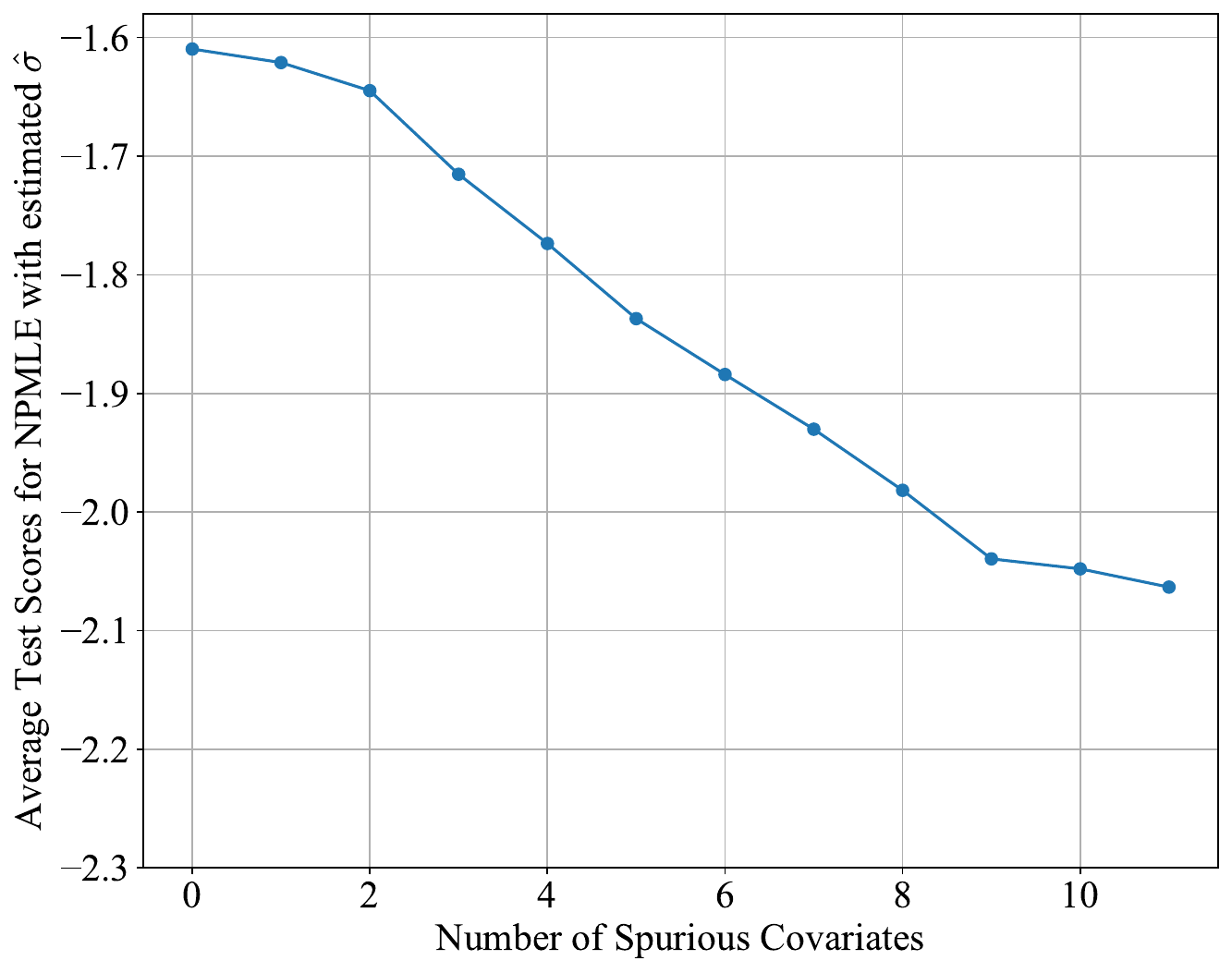}
    \end{subfigure}
    \caption{Average test scores versus the number of spurious covariates
      $\tilde{p}$: \subref{fig:NPMLEsigma_withoutBIC} NPMLE $\hat{G}(\sigma)$ with known $\sigma$; \subref{fig:NPMLE_withoutBIC} NPMLE $\hat{G}(\hat{\sigma})$ with CV-estimated $\hat{\sigma}$.}  
    \label{fig:average_scores_plots}
 \end{figure}


For each $\tilde{p}$, we average test scores across 20
repetitions. Figure \ref{fig:average_scores_plots} shows
prediction accuracy deteriorating for both $\hat{G}(\sigma)$ (left) and
$\hat{G}(\hat{\sigma})$ (right) as the number of spurious covariates
increases. The NPMLE 
makes no assumptions about the probability 
$G$ on $\mathbb{R}^p$. As $p$ grows, the space of probabilities
becomes too large, leading to overfitting. This can be addressed by
regularizing $G$ via splitting covariates into two groups. 

In our simulation, the correct model with spurious covariates is $y_i
= x_i^{\T} \beta^i_{(1)} + z_i^{\T} \beta_{(2)} + \epsilon_i$  where
$\beta^i_{(1)} \overset{\text{i.i.d}}{\sim} G^* = 0.3 
\delta_{\{(3, -1)\}} + 0.3 \delta_{\{(1, 1.5)\}} + 0.4 \delta_{\{(-1,
  0.5)\}}$ and $\beta_{(2)} = \mathbf{0}$. This suggests the following
general model with partitioned covariates:  
\begin{equation}\label{splitmodel}
  y_i = x_i^{\T} \beta^i_{(1)} + z_i^{\T} \beta_{(2)} + \epsilon_i,
\end{equation}
where $\beta^i_{(1)} \overset{\text{i.i.d}}{\sim} G^*$ and
$\beta_{(2)}$ is \textit{fixed}. Unlike our original optimization,
maximizing the log-likelihood in Model \eqref{splitmodel} is
non-convex in both $G$ and $\beta_{(2)}$. We propose an alternating
scheme: (a) For fixed $\beta_{(2)}$, apply our standard algorithm to data
$(x_i, y_i - z_i^T \beta_{(2)})$, and (b) For fixed $G$, use numerical
optimization with multiple starting points to estimate
$\beta_{(2)}$. After obtaining $\hat{G}$ and $\hat{\beta}_{(2)}$, we
estimate $\sigma$ via cross-validation.  

Applied to our simulation (with $x$ as correct covariates and $z$ as
spurious covariates), results appear in
Figures~\ref{fig:higher_p_withoutCV}, \ref{fig:higher_p}, and
\ref{fig:higher_p_ridgeline}. The estimated $\hat{\beta}_{(2)}$
vectors (with elements approximately between -0.06 and 0.01) closely
match the ground truth $\mathbf{0}$.

\begin{figure}[h]
	\centering
	\begin{subfigure}[t]{0.26\textwidth}
		\centering
		\caption{}
		\label{fig:higher_p_withoutCV}
		\includegraphics[width=\textwidth]{./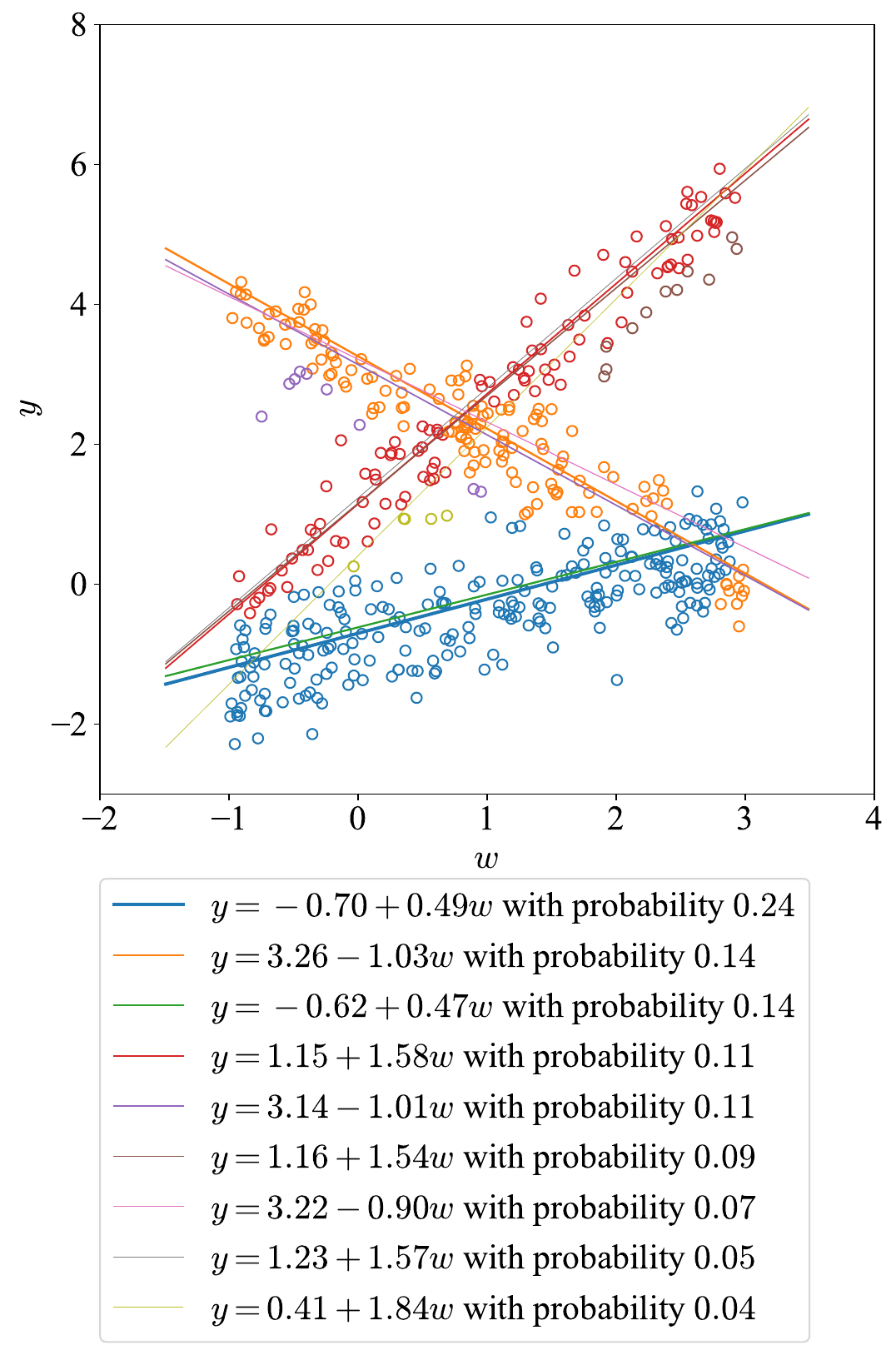}
	\end{subfigure}
	\begin{subfigure}[t]{0.26\textwidth}
		\centering
		\caption{}
		\label{fig:higher_p}
		\includegraphics[width=\textwidth]{./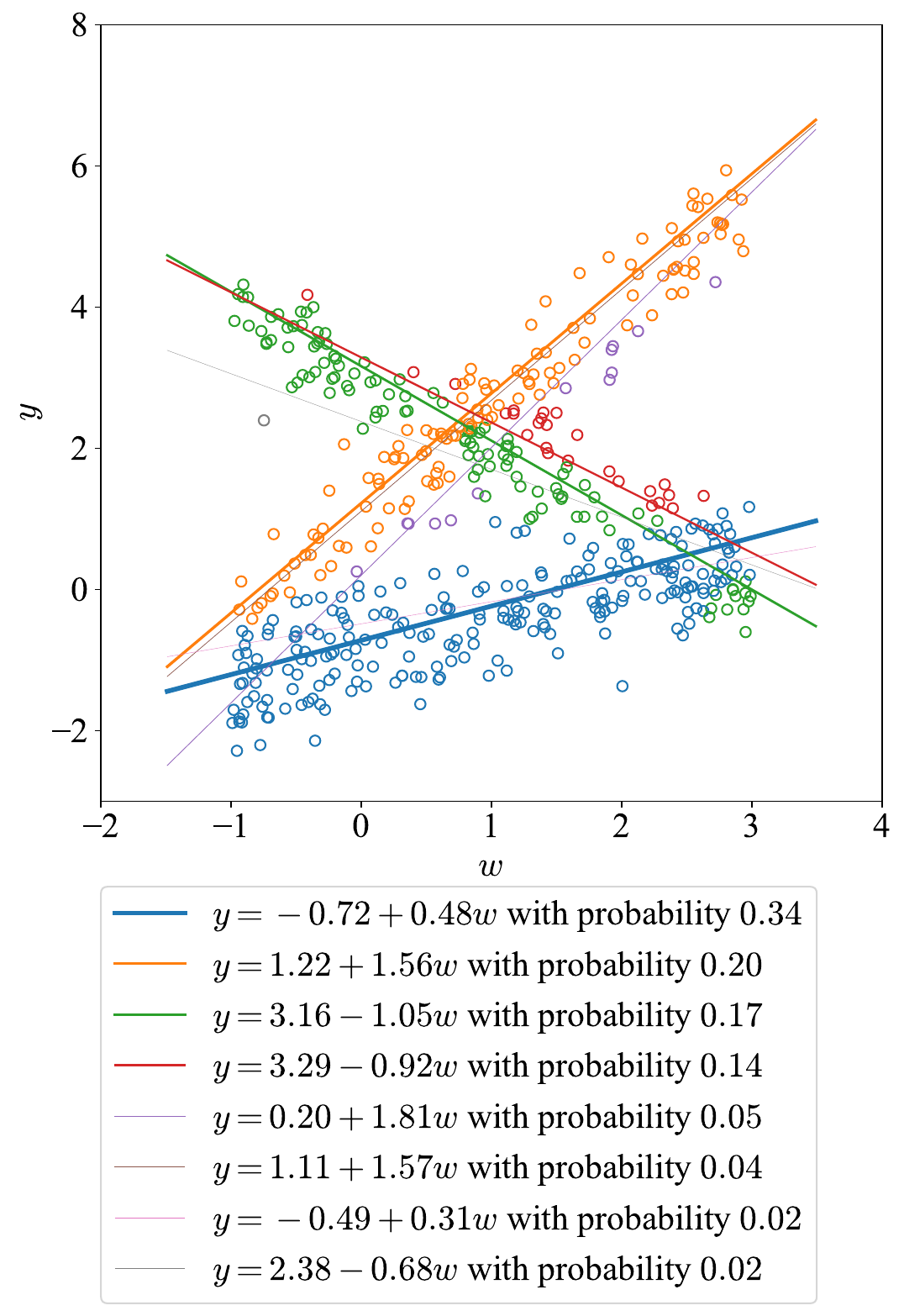}
	\end{subfigure}
	\begin{subfigure}[t]{0.45\textwidth}
		\centering
		\caption{}
		\label{fig:higher_p_ridgeline}
		\includegraphics[width=\textwidth]{./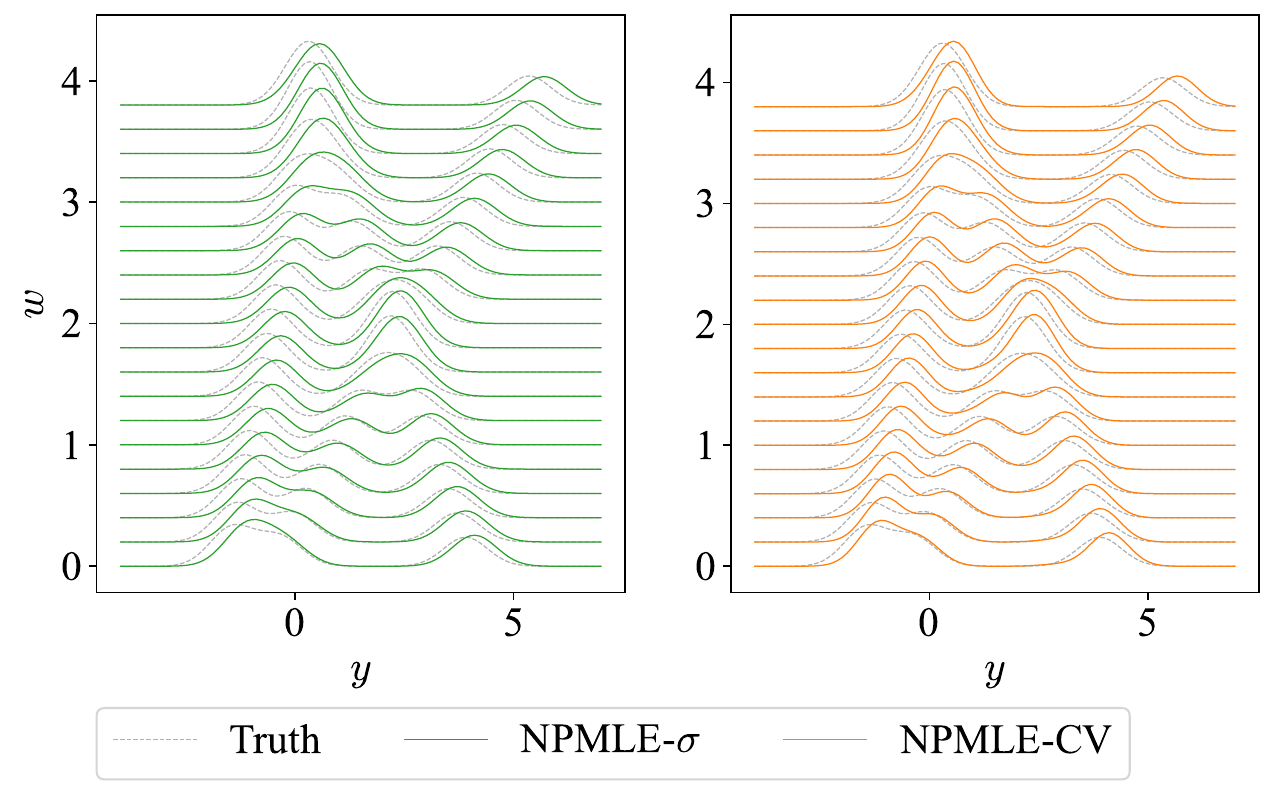}
	\end{subfigure}
	\caption{Fitted results by the alternating approach: \subref{fig:higher_p_withoutCV} With true $\sigma$; \subref{fig:higher_p} With $\hat{\sigma}$ selected by cross-validation; \subref{fig:higher_p_ridgeline} Ridgeline plots of conditional densities.}  
	\label{fig:fitted_higher_p}
\end{figure}

The average test scores ($-1.6591$ with known $\sigma$ and $-1.6454$
with CV-selected $\hat{\sigma}$) significantly outperform the full
NPMLE with $\tilde{p} = 11$, demonstrating the effectiveness of our
regularization strategy for higher-dimensional settings. To implement
this procedure, we need to know which covariates 
belong to $x$ and which to $z$ -- information typically
unavailable. When only $p_0$ (the number of covariates in $x$) is
known, we can consider all ${p \choose p_0}$ possible partitions of
covariates and select the best split based on likelihood 
maximization. This is computationally feasible for moderate
$p$ (e.g., $p \leq 15$) and small $p_0$ (e.g., $p_0 = 2$ or 
$3$). A detailed study is left for future
work.  

The model \eqref{splitmodel} resembles the ``partial linear model" of
\citet{jiang2010empirical}, who studied the special case where $x$
contains only the intercept (i.e., $G^*$ is one-dimensional). They
also employed alternating maximization to  estimate $G^*$ and
$\beta_{(2)}$, though without a $\sigma$ parameter as their setting had known
(possibly heteroscedastic) standard deviations.


\bibliographystyle{chicago}
\bibliography{NPMLE}


%
%
%
%

%

\newpage

\begin{appendices}
Proofs of all our theorems are given in Section
\ref{allproofs}. Section~\ref{subsec:detailed_coef} contains numerical
results for the simulations in Subsections
\ref{sec:sinusoid_simulation} and \ref{sec:changepoint_simulation}.

\section{Proofs}\label{allproofs}

\subsection{Proof of Theorem
	\ref{thm:existence}}\label{appendix:existence}
The notation for $\fv^G$, $\fv^{\beta}$ and
$\PS_K$ introduced at the beginning of Section
\ref{section:existenceandcomputation} will be used in the proof below.  

\begin{proof}[Proof of Theorem \ref{thm:existence}]
	The objective function in the optimization problem \eqref{eq:maximize_G}
	only depends on $G$ through the vector $\fv^G$. As a result,
	\eqref{eq:maximize_G} is equivalent to
	\begin{equation}\label{tempopt}
		\argmax \left\{\frac{1}{n} \sum_{i=1}^n \log \fv(i) : \fv \in \QS_K \right\}
	\end{equation}
	where $\fv(i)$ denotes the $i^{th}$ element of the vector $\fv \in
	\mathbb{R}^n$ and 
	\[
	\QS_K=\{\fv^G \,:\, G \text{ is a probability measure supported on } K\}.
	\]
	\begin{revise}We note that the set $\QS_K$ need not be compact
          in general, even in the case that $K = \RS^p$. This is
          because for any probability measure supported on $\RS^p$,
          each component of $\fv^{G}$ must be strictly positive. On the other
          hand, if we consider one non-zero design point, say $x_i
          \neq \vec{0} \in \RS^p$, then for the Dirac measure $G_s=
          \delta_{\{s x_1\}}$ indexed by a positive scalar $s$, we
          have that the probability $\fv^{G_s}(i) :=
          f_{x_i}^{G_s}(y_i) = \frac{1}{\sqrt{2\pi} \sigma}
          \exp\left\{-\frac{(y_i - s \Vert x_i\Vert)^2}{2\sigma^2}
          \right\}$ goes to $0$ as $s$ goes to infinity. This implies
          that the boundary points of $\mathcal{Q}_K$ may have
          components equal to $0$, and therefore such boundary points
          are strictly outside $\mathcal{Q}_K$. It follows that
          $\mathcal{Q}_K$ is not closed and thus not
          compact. \end{revise}

	 We claim that
	\begin{equation}\label{claim1}
		\QS_K \subseteq \mathrm{conv}(\mathrm{cl}(\PS_K))
	\end{equation}
	where $\mathrm{cl}$ denotes closure and $\mathrm{conv}$ denotes convex
	hull. This claim will be proved later. The set
	$\mathrm{conv}(\mathrm{cl}(\PS_K))$ is compact because
	$\mathrm{cl}(\PS_K)$ is compact (as $\PS_K \subseteq [0,
	1/(\sqrt{2\pi} \sigma)]^n$ is bounded) and as the convex hull of a 
	compact set in Euclidean space is compact (see, for example,
	\citet[Proposition 1.3.2]{bertsekas2003convex}). Therefore a solution
	$\hat{\fv} \in \mathrm{conv}(\mathrm{cl}(\PS_K))$ exists for the
	optimization problem  
	\begin{equation}\label{tempopt2}
		\argmax \left\{\frac{1}{n} \sum_{i=1}^n \log \fv(i) : \fv \in
		\mathrm{conv}(\mathrm{cl}(\PS_K)) \right\}. 
	\end{equation}
	Further the solution $\hat{\fv}$ is unique as the objective function
	$L(\fv) := \frac{1}{n} \sum_{i=1}^n \log \fv(i)$ is strictly
	concave. Moreover $\hat{\fv}$ lies in the boundary of the set
	$\mathrm{conv}(\mathrm{cl}(\PS_K))$ because otherwise  $\nabla
	L(\hat{\fv})^\T = ({1}/{\hat{\fv}(1)}, \dots,
	{1}/{\hat{\fv}(n)})$ would have to be zero which is impossible. As a
	result, by the the Carath\'eodory 
	theorem (see, for example, \citet[Appendix 2]{silvey1980optimal}), 
	$\hat{\fv}$ can be written as a convex 
	combination of at most $n$ points in
	$\mathrm{cl}(\PS_K)$ i.e., $\hat{\fv} = \sum_{j=1}^N \pi_j g_j$ for
	some $N 
	\leq n$, $g_j \in \mathrm{cl}(\PS_K)$ and $\pi_j > 0$ with $\sum_{j}
	\pi_j = 1$. \begin{revise} We are able to claim $n$ points (as
          opposed to $n+1$) due to the fact that
          $\hat{\fv}$ lies in the boundary of the set  
	$\mathrm{conv}(\mathrm{cl}(\PS_K))$ (see  \citet[Appendix
        2]{silvey1980optimal}).\end{revise}   We now 
      claim that under the  
	assumptions on $K$ given in the statement of Theorem
	\ref{thm:existence}, for every $g \in \mathrm{cl}(\PS_K)$, there
	exists $\beta \in K$ such that 
	\begin{equation}\label{claim2}
		g(i) I\{g(i) > 0\} = \fv^{\beta}(i) I\{g(i) > 0\}. 
	\end{equation}
	Assuming the validity of this claim (which will be proved later),
	there exists $\beta_j \in K$ for which
	\begin{equation}\label{claim2imp}
		g_j(i) I\{g_j(i) > 0\} = \fv^{\beta_j}(i) I\{g_j(i) > 0\} \qt{for $j
			= 1, \dots, N$}. 
	\end{equation}
	Now if $g_j(i) = 0$ for some $j$ and $i$, we would have $\sum_{j}
	\pi_j \fv^{\beta_j}$  having a higher objective value compared to
	$\hat{\fv} = \sum_j \pi_j \fv^{\beta_j}$ (note that all components of
	$\fv^{\beta}$ are all strictly positive for every $\beta$) which would
	contradict the fact that $\hat{\fv}$ is the unique solution to
	\eqref{tempopt2}.  We thus have $g_j(i) > 0$ for all $i$ which
	implies, by \eqref{claim2imp}, that $g_j = \fv^{\beta_j}$ for every
	$j$. This obviously implies that $g_j \in \PS_K$ so that $\hat{\fv}
	\in \mathrm{conv}(\PS_K)$. Also
	\begin{equation*}
		\hat{\fv} = \sum_{j=1}^N \pi_j \fv^{\beta_j} = \fv^{\hat{G}} \in
		\mathrm{conv}(\PS_K) 
		\qt{where $\hat{G} = \sum_{i=1}^N \pi_j \delta_{\{\beta_j\}}$}. 
	\end{equation*}
	As a result $\hat{\fv} \in \QS_K$ which shows that $\hat{\fv}$ is the
	unique solution to \eqref{tempopt} and this completes the proof of
	Theorem \ref{thm:existence}. We only need to prove the two claims
	\eqref{claim1} and \eqref{claim2}. 
	
	For \eqref{claim1}, take $\fv^G \in \QS_K$ where $G$ is a probability
	measure on $K$. By \citet[Theorem
	6.3]{parthasarathy2005probability}, there exist discrete probability measures $\{
	\mu_m\}_{m=1}^\infty$ with finite supports converging weakly to $G$ as
	$m \rightarrow \infty$ and this implies $f^{\mu_m}_{x_i}(y_i)
	\rightarrow f^G_{x_i}(y_i)$ for $i = 1, \dots, n$. As a result,
	$\fv^{\mu_m}  \rightarrow \fv^G$ as $m \rightarrow \infty$. This
	implies that
	\[
	\QS_K \subseteq \mathrm{cl}(\mathrm{conv}(\PS_K)).
	\]
	because each $\fv^{\mu_m} \in \mathrm{conv}(\PS_K)$. To complete the
	proof of \eqref{claim1}, it is enough to show that
	\begin{equation}
		\label{subclaim1}
		\mathrm{conv}(\mathrm{cl}(\PS_K)) =
		\mathrm{cl}(\mathrm{conv}(\PS_K)). 
	\end{equation}
	For \eqref{subclaim1}, first note that $\PS_K\subseteq
	\mathrm{conv}(\PS_K)$ which implies $\mathrm{cl}(\PS_K) \subseteq 
	\mathrm{cl}(\mathrm{conv}(\PS_K))$. $\mathrm{cl}(\mathrm{conv}(\PS_K))$
	is convex, and $\mathrm{conv}(\mathrm{cl}(\PS_K))$ is the smallest
	convex set that contains $\mathrm{cl}(\PS_K) $, so 
	\[
	\mathrm{conv}(\mathrm{cl}(\PS_K)) \subseteq \mathrm{cl}(\mathrm{conv}(\PS_K)).
	\]
	For the other inclusion, observe that, as noted earlier,
	$\mathrm{conv}(\mathrm{cl}(\PS_K))$ is compact so that
	$$\mathrm{conv}(\mathrm{cl}(\PS_K)) = 
	\mathrm{cl}(\mathrm{conv}(\mathrm{cl}(\PS_K))) \supseteq
	\mathrm{cl}(\mathrm{conv}((\PS_K))).$$ This proves \eqref{subclaim1}
	and consequently \eqref{claim1}. 
	
	We next prove \eqref{claim2}. Fix $g \in \mathrm{cl}(\PS_K)$. If $K$
	is compact, then $\PS_K$ is also compact so that $g \in \PS_K$ which
	means that $g = \fv^{\beta}$ for some $\beta \in K$ and this proves
	\eqref{claim2}. So let us assume that $K$ is not necessarily compact
	and that the second assumption in the statement of Theorem
	\ref{thm:existence} holds.
	
	Let $I := \{1 \leq i \leq n : \gv(i) >0\}$ and let  $V$ be the linear
	subspace of $\RS^p$ spanned by $\{x_i| i \in I\}$ (recall that $x_1,
	\dots, x_n$ are the observed covariate vectors). Because $g \in
	\mathrm{cl}(\PS_K)$, we can write $g = \lim_{l \rightarrow \infty}
	\fv^{\beta_l}$ for some sequence $\{\beta_l\}$ in $K$.  For $l \geq
	1$, let  $\alpha_l$ denote the projection of $\beta_l$ onto $V$ 
	so that $x_i^\T \alpha_l = x_i^\T \beta_l$ for all $i \in I$ and all
	$l \geq 1$. Also by our assumption on $K$, we have $\alpha_l \in
	K$. We will show that $\{\alpha_l\}_{l=1}^\infty$ is bounded.  
	
	For $i \in I$, $\gv(i)>0$ thus $\{x_i^\T \beta_l\}_{l=1}^\infty$ is
	bounded and $\lim_{l \rightarrow \infty} x_i^\T \beta_l$ exists. Since $x_i^\T
	\alpha_l = x_i^\T \beta_l$, $\{x_i^\T \alpha_l\}_{l=1}^\infty$ is also
	bounded. Take an orthonormal basis of $V$ as $r_1,r_2,\dots,r_v$. For
	any $j = 1,\dots,v$, since $V$ is spanned by $\{x_i| i \in I\}$, $r_j$
	is a linear combination of $\{x_i|i \in I\}$. Therefore, as a linear
	combination of $\{x_i^\T \alpha_l\}_{l=1}^\infty$, $\{r_j^\T
	\alpha_l\}_{l=1}^\infty$ is bounded (noting that the linear
	combination coefficients do not depend on $l$). Because  
	\[
	\alpha_l^\T \alpha_l = \sum_{j=1}^v (r_j^\T \alpha_l)^2,
	\]
	it follows that $\{\alpha_l^\T\alpha_l\}_{l=1}^\infty$ is also bounded.
	Now we can take a convergent subsequence of
	$\{\alpha_l\}_{l=1}^\infty$. The limit of the subsequence, denoted by
	$\beta$,  also belongs to $K$ because $K$ is assumed to be
	closed. For $i \in I$, $x_i^\T \beta = \lim_{l\rightarrow \infty}
	x_i^\T \beta_l$. Let $\fv^{\beta}$ denote the atomic likelihood
	vector with respect to $\beta$, then $\fv^{\beta}(i) = \gv(i)$ for all
	$i \in I$. This proves \eqref{claim2} and thereby completes the 
	proof of Theorem \ref{thm:existence}.
\end{proof}

\subsection{Proof of Proposition~\ref{proposition:nonunique}}

\begin{proof}[Proof of Proposition~\ref{proposition:nonunique}] 
Since $\mathbf{X}$ does not have
full rank, there exists a nonzero vector $v$ in its null space, i.e.,
$x_i^\T v = 0$ for all $i = 1,\dots,n$. 
Suppose $\hat{G} = \sum_{j=1}^K \delta_{\{\beta_l\}}$ is an NPMLE,
then  $\hat{G}' = \sum_{j=1^K} \delta_{\{\beta_l + v\}}$ is  also an
NPMLE since $f_{x_i}^{\hat{G}}(y_i) = \fv_{x_i}^{\hat{G}'}(y_i)$ for
all $i = 1,\dots,n$. Because $\hat{G}$ is not equal to $\hat{G}'$ when
$v$ is nonzero, we know that NPMLE is not unique in this case. 
\end{proof}

\subsection{Proof of Proposition \ref{prop:opt_condition}}

\begin{proof}[Proof of Proposition~\ref{prop:opt_condition}]
  If $\hat{G}$ solves \eqref{eq:maximize_G}, then for every $\alpha
  \in (0, 1)$ and every probability measure $G$ supported on $K$, we
  have 
  \begin{align*}
    0 &\geq \frac{1}{\alpha} \sum_{i=1}^n\left\{ \log f^{(1 - \alpha) \hat{G}
        + \alpha G}_{x_i}(y_i) - \log f^{\hat{G}}_{x_i}(y_i) \right\} \\
    &= \frac{1}{\alpha} \sum_{i=1}^n \left\{ \log \left((1 - \alpha)
      f^{\hat{G}}_{x_i}(y_i) + \alpha f^G_{x_i}(y_i) \right)  - \log
      f^{\hat{G}}_{x_i}(y_i) \right\}. 
  \end{align*}
  Taking the limit of the right hand side as $\alpha \downarrow 0$, we
  get
  \begin{align}\label{G_con}
\frac{1}{n}    \sum_{i=1}^n
    \frac{f_{x_i}^G(y_i)}{f^{\hat{G}}_{x_i}(y_i)} - 1 \leq 0. 
  \end{align}
Since this is true for every $G$ that is supported on $K$, the above
is equivalent to
\begin{align}\label{beta_con}
  \sup_{\beta \in K} \frac{1}{n} \sum_{i=1}^n
  \frac{f^{\beta}_{x_i}(y_i)}{f^{\hat{G}}_{x_i}(y_i)} \le 1
\end{align}
which is the same as \eqref{eq:D_G_beta}.

Conversely if $\hat{G}$ satisfies \eqref{beta_con} (and consequently
\eqref{G_con}), then (below we use $\log x \leq x - 1$):
\begin{align*}
  \sum_{i=1}^n \log f^{G}_{x_i}(y_i) -   \sum_{i=1}^n \log
  f^{\hat{G}}_{x_i}(y_i) = \sum_{i=1}^n \log
  \frac{f^G_{x_i}(y_i)}{f^{\hat{G}}_{x_i}(y_i)} \leq \sum_{i=1}^n
  \left(     \frac{f_{x_i}^G(y_i)}{f^{\hat{G}}_{x_i}(y_i)} - 1 \right) \leq 0
\end{align*}
for every $G$ supported on $K$. This clearly shows that $\hat{G}$
maximizes \eqref{eq:maximize_G}. 

The integral of the term inside the supremum in \eqref{beta_con} with
respect to $\beta \in \hat{G}$ is clearly one. From this, it
immediately follows that
\begin{align*}
   \frac{1}{n} \sum_{i=1}^n
  \frac{f^{\beta}_{x_i}(y_i)}{f^{\hat{G}}_{x_i}(y_i)} = 1 \qt{for
  $\beta$ a.s $\hat{G}$}
\end{align*}
which proves \eqref{eq:D_G_as}. This implies that almost every $\beta$
(with respect to $\hat{G}$) maximizes the left hand side above over
$\beta \in K$. Thus if $\hat{G}$  is discrete and $\tilde{\beta}$ is a
support 
point of $\hat{G}$ that is also in the interior of $K$, then the
gradient of the left hand side above (w.r.t $\beta$) should equal zero
at $\tilde{\beta}$. This proves the last claim of Proposition
\ref{prop:opt_condition}. 
\end{proof}

\subsection{Proof of Proposition \ref{exem_lemma}}

\begin{proof}[Proof of Proposition \ref{exem_lemma}]
  By the last claim of Proposition \ref{prop:opt_condition}, we have
  \begin{equation*}
    \nabla \left(\frac{1}{n} \sum_{i=1}^n
      \frac{f^{\beta}_{x_i}(y_i)}{f^{\hat{G}}_{x_i}(y_i)} - 1\right) = \mathbf{0} ,
  \end{equation*}
  where the gradient $\nabla$ is with respect to $\beta$ and is
  evaluated at $\beta = \tilde{\beta}$. Explicitly calculating the
  gradient, we get
  \begin{equation*}
    \frac{1}{n} \sum_{i=1}^n w_i(\tilde{\beta}) \left(x_i x_i^T
      \tilde{\beta} - x_i 
      y_i \right) = \mathbf{0} \qt{with $w_i(\tilde{\beta}) \propto
      \frac{f^{\tilde{\beta}}_{x_i}(y_i)}{f_{x_i}^{\hat{G}}(y_i)}$}. 
  \end{equation*}
  In other words, there exists a probability vector $(w_1, \dots,
  w_n)$ which satisfies
  \begin{equation*}
    \sum_{i=1}^n w_i x_i (y_i - x_i^{\T} \tilde{\beta}) = \mathbf{0},
  \end{equation*}
which implies that $\tilde{\beta} \in S(w)$, where $S(w)$ is defined in
\eqref{sw}. The above condition is equivalent to
\begin{equation*}
 \mathbf{0} \in \text{conv} \left\{x_1(y_1 - x_1^{\T} \tilde{\beta}), \dots,
   x_n(y_n - x_n^{\T} \tilde{\beta}) \right\}. 
\end{equation*}
As the right hand side above is a convex hull in $\mathbb{R}^p$,
Carath{\'e}odory's theorem guarantees the existence of a probability
vector $(w_1, \dots, w_n)$ with at most $p+1$ non-zero entries such
that
\begin{equation*}
  \mathbf{0} = \sum_{i=1}^n w_i x_i (y_i - x_i^{\T} \tilde{\beta}),
\end{equation*}
which is equivalent to $\tilde{\beta} \in S(w)$. This completes the
proof of Proposition \ref{exem_lemma}.

\end{proof}

\subsection{Proof of Theorem \ref{mainthem}} \label{subsection_mainthem}
The proof of Theorem \ref{mainthem} given below uses  the notion of
covering numbers and metric entropy which are defined as follows. Let
$T$ be a subset of a metric  
space with metric $\mathfrak{d}$. For $\eta > 0$, we say that a set
$S$ is an $\eta$-covering of $T$ if $\sup_{t \in T} \inf_{s \in S}
\mathfrak{d}(s, t) \leq \eta$. The smallest possible cardinality of an
$\eta$-covering of $T$ is known as the $\eta$-covering number of $T$
under the metric $\mathfrak{d}$ and this is denoted by $N(\eta, T,
\mathfrak{d})$. The logarithm of $N(\eta, T, \mathfrak{d})$ is called
the $\eta$-metric entropy of $T$ under $\mathfrak{d}$. When $T$ is a
subset of $\RS^p$  and the metric $\mathfrak{d}$ is the usual
Euclidean metric on $\RS^p$, we shall denote $N(\eta, T,
\mathfrak{d})$ by simply $N(\eta, T)$.

The proof of Theorem \ref{mainthem} given below is based on ideas
similar to those used in \citet{jiang2009general} and
\citet{saha2020nonparametric}. A key ingredient is the metric entropy
result stated as Theorem \ref{theorem:metric_entropy}.  Theorem
\ref{theorem:metric_entropy} is stated for the more general case of
possibly nonlinear regression functions $r(x, \beta)$. We take $r(x,
\beta) = x^{\T} \beta$ while applying Theorem
\ref{theorem:metric_entropy} in the proof below. 

\begin{proof}[Proof of Theorem~\ref{mainthem}]
	Let $S_0 := \{x : \|x\| \leq B\}$ so that $S_0$ contains all the
	design points $x_1, \dots, x_n$. Let 
	\begin{equation}
		\mathcal{M}_R  = \{ f^G_x(y) : \text{ any probability measure } G
		\text{ supported on } \ball_p(0,R) \}, 
		\label{eq:M_defn}
	\end{equation}
	where $\ball_p(0, R) := \{\beta \in \mathbb{R}^p : \|\beta\| \leq
	R\}$. Let $\|\cdot\|_{\infty}$ be the pseudometric on $\mathcal{M}_R$
	given by 
	\begin{equation*}
		(f^G, f^{G'}) \mapsto \sup_{x \in S_0, y \in \mathbb{R}} \left|f_x^G(y) - f_x^{G'}(y) \right|. 
	\end{equation*}
	
	Theorem \ref{theorem:metric_entropy}, which will be crucially used in
	this proof, gives an upper bound on the
	$\eta$-covering number $N(\eta, \mathcal{M}_{R}, \|\cdot\|_{\infty})$
	of  $\mathcal{M}_R$ under the pseudometric $\|\cdot\|_{\infty}$. For a
	fixed $\eta > 0$, let $\{ h^1, \dots, h^N\} \subseteq \mathcal{M}_R$ be an
	$\eta$-covering set of $\mathcal{M}_R$ under  $\| \cdot \|_{\infty}$
	where $N = N(\eta, \mathcal{M}_R, \| \cdot \|_{\infty})$. This ensures  
	\begin{equation}
		\underset{h \in \mathcal{M}_R}{\sup} \, \underset{1\leq j \leq N}{\inf} \| h - h^j\|_{\infty} \leq \eta.
		\label{eq:metric}
	\end{equation}
	For a fixed sequence $\{\gamma_n\}_{n \geq 1}$ and $t > 0$, let us now bound $\PP \{\mathfrak{H}_n(f^{\hat{G}},f^{G^*})\geq t \gamma_n\}$ (the precise form for $\gamma_n$ will be given later in the proof; it will equal a constant multiple of $\epsilon_n$).   
	
	We define a set $J\subseteq \{1,\dots,N\}$. Let $J$ be composed of all index $j \in \{ 1,\dots, N\}$ for which there exists $h^{0j}\in \mathcal{M}_R$ satisfying
	\begin{equation}
		\| h^{0j} - h^j \|_{\infty, S_0\times \RS } \leq \eta \quad \text{ and } \quad \mathfrak{H}_n(h^{0j}, f^{G^*}) \geq t\gamma_n.
		\label{eq:J_defn}
	\end{equation}
	Let $j \in \{1, \dots, N\}$ be such that $\| h^j - {f}^{\hat{G}} \|_{\infty} \leq \eta$ (such a $j$ clearly exists because $h^1, \dots, h^N$ form an $\eta$-covering set of $\mathcal{M}_R$). Now if $\mathfrak{H}_n(f^{\hat{G}}, f^{G^*}) \geq t\gamma_n$, then $j \in J$ and consequently $\| {f}^{\hat{G}} - h^{0j}\|_{\infty} \leq 2\eta$ which implies that 
	\[
	f^{\hat{G}}_{x_i}(y) \leq h^{0j}_{x_i}(y) + 2\eta \qt{for all $i = 1, \dots, n$ and $y \in \mathbb{R}$}. 
	\]
	Therefore, we have
	\[
	\begin{split}
		\prod_{i=1}^n f^{{G}^*}_{x_i}(\Y_i) \leq \prod_{i=1}^n f^{\hat{G}}_{x_i}(\Y_i) &\leq \prod_{i = 1}^n\{ h^{0j}_{x_i}(\Y_i) + 2\eta\} \leq \max_{j \in J} \prod_{i = 1}^n\{ h^{0j}_{x_i}(\Y_i) + 2\eta\},
	\end{split}
	\]
	where the first inequality follows from the fact that $\hat{G}$ maximizes the likelihood. We thus get 
	\begin{equation*}
		\begin{split}
			\PP(\mathfrak{H}_{\mathrm{fixed}}(f^{\hat{G}}, f^{G^*}) \geq t
			\gamma_n)   
			&\leq  \PP\left\{ \max_{j\in J} \prod_{i=1}^n \frac{h^{0j}_{x_i}(\Y_i) + 2 \eta }{f^{G^*}_{x_i}(\Y_i)} \geq 1 \right\} \\
			&\leq 
			\sum_{j \in J} \PP \left\{\prod_{i=1}^n \frac{h^{0j}_{x_i}(\Y_i) + 2\eta }{f^{G^*}_{x_i}(\Y_i)} \geq 1 \right\} \\
			& \leq \sum_{j\in J} \E \prod_{i=1}^n \sqrt{ \frac{h^{0j}_{x_i}(\Y_i) + 2 \eta}{f^{G^*}_{x_i}(\Y_i)}}  = \sum_{j\in J}\prod_{i=1}^n \E \sqrt{ \frac{h^{0j}_{x_i}(\Y_i) + 2 \eta}{f^{G^*}_{x_i}(\Y_i)}},
		\end{split}
	\end{equation*}
	where we used the union bound in the second line and Markov's inequality (followed by the independence of $Y_1, \dots, Y_n$) in the third line. For each $j \in J$, 
	\[
	\begin{split}
		\prod_{i=1}^n  \E \sqrt{ \frac{h^{0j}_{x_i}(\Y_i) + 2 \eta}{f^{G^*}_{x_i}(\Y_i)}} & = \exp\left( \sum_{i=1}^n \log \E \sqrt{ \frac{h^{0j}_{x_i}(\Y_i) + 2 \eta}{f^{G^*}_{x_i}(\Y_i)}} \right)\\
		& \leq \exp\left(\sum_{i=1}^n  \E \sqrt{ \frac{h^{0j}_{x_i}(\Y_i) + 2 \eta}{f^{G^*}_{x_i}(\Y_i)}} - n \right) \\
		& = \exp\left(\sum_{i=1}^n \int \sqrt{(h^{0j}_{x_i} + 2\eta) f^{G^*}_{x_i}} - n \right),
	\end{split}
	\]
	where we used the inequality $\log a \leq a -1$ in the second line, and the last equality follows from the fact that $Y_i$ has density $f_{x_i}^{G^*}$. The simple inequality $\sqrt{a+b} \leq \sqrt{a} + \sqrt{b}$ now gives, for each $1 \le i \le n$, 
	\[
	\begin{split}
		\int \sqrt{(h^{0j}_{x_i}+2\eta) f^{G^*}_{x_i}} &\leq 
		\int \sqrt{h^{0j}_{x_i} f^{G^*}_{x_i}}+ \sqrt{2 \eta} \int \sqrt{f^{G^*}_{x_i}} \\
		&\leq 1 - \frac{1}{2} \mathfrak{H}^2(h^{0j}_{x_i}, f^{G^*}_{x_i}) + \sqrt{2 \eta} \sqrt{\int f^{G^*}_{x_i}} = 1 - \frac{1}{2} \mathfrak{H}^2(h^{0j}_{x_i}, f^{G^*}_{x_i}) + \sqrt{2 \eta}. 
	\end{split}
	\]
	As a result, we deduce
	\[
	\sum_{i=1}^n \int \sqrt{(h^{0j}_{x_i}+2 \eta ) f^{G^*}_{x_i}} \leq n - \frac{1}{2} \sum_{i=1}^n \mathfrak{H}^2(h^{0j}_{x_i}, f^{G^*}_{x_i})+ n \sqrt{2  \eta} .
	\]
	As we have assumed that for every $j \in J$, 
	\[
	\sum_{i=1}^n \mathfrak{H}^2(h^{0j}_{x_i}, f^{G^*}_{x_i}) = n
	\mathfrak{H}^2_{\mathrm{fixed}}(h^{0j}, f^{G^*}) \geq n t^2
	\gamma_n^2, 
	\]
	we obtain
	\[
	\sum_{i=1}^n \int \sqrt{(h^{0j}_{x_i}+2v_i) f^{G^*}_{x_i}} \leq n - \frac{n}{2} t^2 \gamma_n^2+ n\sqrt{2\eta}.
	\]
	We have thus proved
	\[
	\begin{split}
		\prod_{i=1}^n  \Expt \sqrt{ \frac{h^{0j}_{x_i}(\Y_i) + 2 \eta}{f^{G^*}_{x_i}(\Y_i)}} & \leq \exp\left(\sum_{i=1}^n \int \sqrt{(h^{0j}_{x_i} + 2\eta) f^{G^*}_{x_i}} - n \right) \leq \exp(- \frac{n}{2} t^2 \gamma_n^2+ n\sqrt{2\eta}),
	\end{split}
	\]
	which gives (note that $|J| \leq N$)
	\begin{align}
		\PP \left\{\mathfrak{H}_{\mathrm{fixed}}(f^{\hat{G}}, f^{G^*}) \geq t
		\gamma_n \right\}  
		&\leq  |J| \cdot \exp\left( - \frac{n}{2} t^2\gamma_n^2 + n \sqrt{2 \eta}\right) \nonumber \\
		& \leq  \exp\left(\log N- \frac{n}{2} t^2\gamma_n^2 + n \sqrt{2 \eta}\right). \label{hellbound}
	\end{align} 
	We now use the metric entropy result in
	Theorem~\ref{theorem:metric_entropy} to bound $\log N$. Setting $S_0 =
	\{x: \|x\| \leq B\}$ and $K =  \left\{\beta \in \mathbb{R}^p :
	\|\beta\| \leq R \right\}$  in Theorem~\ref{theorem:metric_entropy},
	we get  
	\[
	\log N(\eta, \mathcal{M}_R, \| \cdot \|_{\infty}) \leq C_p \zeta^p N(\{2 \log(3\sigma^{-1}\eta^{-1})\}^{1/2} \sigma/\lipsz,\{\beta : \|\beta\| \leq R \}) \{\log(\sigma^{-1}\eta^{-1})\}^{p+1},
	\]
	where $\lipsz = \sup_{x \in S_0} \lipsz(x)$ and $\lipsz(x)$ is
	defined in 
	\eqref{eq:lipsz_defn_appendix}. It is clear that for the linear model,
	$\zeta = 1$ and $\lipsz(x) \leq \|x\| \leq B$ (note that we have
	made the assumption $\max_{1 \leq i \leq n} \|x_i\| \leq B$). The
	Euclidean covering number $N(\{2 \log(3\sigma^{-1}\eta^{-1})\}^{1/2}
	\sigma/\lipsz,\{\beta : \|\beta\| \leq R \})$ is bounded in the
	following way. It is well-known that  
	\[
	N\left(\epsilon, \{\beta \in \mathbb{R}^p : \|\beta\| \leq R\} \right) \leq \left( 1+ \frac{2R}{\epsilon}  \right)^p \qt{for all $ \epsilon > 0$},
	\]
	and consequently
	\[
	\begin{split}
		N(\{2 \log(3\sigma^{-1}\eta^{-1})\}^{1/2} \sigma/\lipsz, \{\beta : \|\beta\| \leq R\}) & \leq 
		\left(  1+ \frac{2R\lipsz}{\{2 \log(3\sigma^{-1}\eta^{-1})\}^{1/2} \sigma}\right)^p. 
	\end{split}
	\]
	This and the fact that $\lipsz \leq B$ lead to   
	\begin{align}
		\log N &=  \log N(\eta, \mathcal{M}_R, \| \cdot \|_{\infty})
		\nonumber\\ &\leq C_p  \left(1+ \frac{2 R B}{\{2
			\log(3\sigma^{-1}\eta^{-1})\}^{1/2}
			\sigma}\right)^p
		\{\log(\sigma^{-1}\eta^{-1})\}^{p+1} \nonumber \\ 
		&\leq C_p \{\log(\sigma^{-1}\eta^{-1})\}^{p+1} + C_p
		\left(\frac{RB}{\sigma} \right)^p \{\log (3 \sigma^{-1} \eta^{-1})
		\}^{p/2+1}, \label{conv.entbound}
	\end{align}
	where $C_p$ absorbs a coefficient $2^p$ in the last line. Using the above in
	\eqref{hellbound}, we obtain  
	\begin{equation*}
		\begin{split}
			\PP \left\{\mathfrak{H}_{\mathrm{fixed}}(f^{\hat{G}}, f^{G^*})
			\geq t \gamma_n \right\} &\leq \exp \left(  C_p
			\{\log(\sigma^{-1}\eta^{-1})\}^{p+1} \right. \\ &\left. + C_p
			\left(\frac{RB}{\sigma} \right)^p \{\log (3 \sigma^{-1}
			\eta^{-1}) \}^{p/2+1} - \frac{n}{2} t^2\gamma_n^2 + n \sqrt{2
				\eta}\right).   
		\end{split}
	\end{equation*}
	We shall now take $\gamma_n$  and $\eta$ so that
	\begin{equation}\label{gamma.crit}
		n \gamma_n^2 \geq 12 \max \left(C_p \{\log(\sigma^{-1}\eta^{-1})\}^{p+1}, C_p \left(\frac{RB}{\sigma} \right)^p \{\log (3 \sigma^{-1} \eta^{-1})  \}^{p/2+1}, n \sqrt{2 \eta} \right). 
	\end{equation}
	This will ensure that, for $t \geq 1$,
	\begin{equation}\label{hell.tail}
		\PP \left\{\mathfrak{H}_{\mathrm{fixed}}(f^{\hat{G}}, f^{G^*}) \geq
		t \gamma_n \right\}  \leq \exp \left(\frac{n \gamma_n^2}{4} (1 - 2
		t^2) \right)  \leq \exp \left(-\frac{n t^2 \gamma_n^2}{4} \right).  
	\end{equation}
	To satisfy \eqref{gamma.crit}, we first take $\eta := \gamma_n^4/288$ (so that $12 n \sqrt{2 \eta} = n \gamma_n^2$). The quantity $\gamma_n$  will then have to satisfy the two inequalities:
	\begin{equation}\label{cond.1}
		n \gamma_n^2 \geq 12 C_p \left(\log \frac{288}{\sigma \gamma_n^4} \right)^{p+1},
	\end{equation}
	and
	\begin{equation}\label{cond.2}
		n \gamma_n^2 \geq 12 C_p \left(\frac{RB}{\sigma} \right)^{p} \left(\log \frac{864}{\sigma \gamma_n^4} \right)^{p/2 + 1}. 
	\end{equation}
	It is now elementary to check that \eqref{cond.1} is satisfied whenever
	\begin{equation*}
		\gamma_n \geq \sqrt{\frac{12C_p}{n}} \left(\Log  \frac{2n^2}{\sigma C_p^2} \right)^{(p+1)/2} 
	\end{equation*}
	and \eqref{cond.2} is satisfied whenever
	\begin{equation*}
		\gamma_n \geq \sqrt{\frac{12 C_p}{n}} \left(\frac{R B}{\sigma} \right)^{p/2} \left(\Log \frac{6n^2 \sigma^{2p}}{\sigma C_p^2 (R B)^{2p}} \right)^{(p/4) + (1/2)} ,
	\end{equation*}
	where we used the notation $\Log x := \max(1, \log x)$.
	
	We may now assume $C_p \geq \sqrt{6}$. It is then easy to see that both the above inequalities and consequently both \eqref{cond.1} and \eqref{cond.2} are satisfied whenever  
	\begin{equation*}
		\gamma_n \geq \sqrt{\frac{12 C_p}{n}} \max \left(\left(\Log  \frac{n^2}{\sigma} \right)^{\frac{p+1}{2}} , \left(\frac{R B}{\sigma} \right)^{\frac{p}{2}} \left(\Log \frac{n^2 \sigma^{2p}}{\sigma (R B)^{2p}} \right)^{\frac{p}{4} + \frac{1}{2}}  \right) .
	\end{equation*}
	Using $\Log x^2 \leq 2 \Log x$ and absorbing all the $p$-dependent constants in $C_p$, we deduce that inequality \eqref{hell.tail}  holds for  $\gamma_n = \sqrt{C_p} \epsilon_n$ where $\epsilon_n$ is defined in \eqref{eq:epsilon_defn}. This completes the proof of \eqref{tailbound} (note that $\exp(-nt^2C_p\epsilon_n^2/4)$ can be bounded by $\exp(-nt^2\epsilon_n^2)$ by taking $C_p$ larger than 4). 
	
	To prove \eqref{expectationbound}, we multiply both sides of \eqref{tailbound} by $t$ and integrate from $t = 1$ to $t = \infty$ to obtain
	\begin{equation*}
		\E \left(\frac{\mathfrak{H}_{\mathrm{fixed}}^2(f^{\hat G},
			f^{G^*})}{C_p \epsilon_n^2} - 1 \right)_+ \leq
		\frac{1}{n\epsilon_n^2}, 
	\end{equation*}
	where $x_+ := \max(x, 0)$ which implies
	\begin{equation*}
		\E \mathfrak{H}_{\mathrm{fixed}}^2(f^{\hat G}, f^{G^*}) \leq C_p
		\epsilon_n^2 + \frac{C_p}{n}.  
	\end{equation*}
	This proves \eqref{expectationbound} (after changing $C_p$ to $2 C_p$) as $\epsilon_n^2 \geq n^{-1}$.  
\end{proof}

\subsection{Proof of Theorem \ref{them:randomdesign_predictionerror}} \label{appendix:random_design} 
The proof of Theorem \ref{them:randomdesign_predictionerror} uses the
following result from the theory of empirical processes which follows
from \citet[Proof of Lemma 5.16]{geer2000empirical}.
\begin{lemma}\label{thm:vdg}
	Suppose $x_1,\dots, x_n$ are independently distributed according to
	a probability distribution $\mu$ and suppose $\mathcal{G}$ is a
	class of functions on the support of $\mu$ that are uniformly
	bounded by 1. Then  
	\begin{equation}\label{vdg}
		\mathbb{P} \left\{\sup_{g \in \mathcal{G}} \left(\sqrt{\int g^2
			d\mu} - 2 \sqrt{ \frac{1}{n} \sum_{i=1}^n g^2(x_i)} 
		\right) 
		> 4 \epsilon 
		\right\} \leq 4 \exp \left(-\frac{n \epsilon^2}{768} \right)
	\end{equation}
	provided $\epsilon > 0$ satisfies
	\begin{equation}\label{vdg.con}
		n \epsilon^2 \geq 768 \log N_{[]}(\epsilon, \mathcal{G}, L_2(\mu)).  
	\end{equation}
	Here $N_{[]}(\epsilon, \mathcal{G}, L_2(\mu))$ denotes the
	$\epsilon$-bracketing number of $\mathcal{G}$ in the $L_2(\mu)$
	metric defined as the smallest number of pairs of functions $g_j^L,
	g_j^U$ satisfying $\|g_j^U - g_j^L\|_{L_2(\mu)} \leq \epsilon$ and
	the property that every $g \in \mathcal{G}$ is sandwiched between
	one such pair (i.e., $g_j^L \leq g \leq g_j^U$ for some $j$).  
\end{lemma}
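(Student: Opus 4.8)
The plan is to prove the lemma by a single-scale bracketing argument combined with a union bound and a one-sided Bernstein inequality, reducing the supremum over the (possibly infinite) class $\mathcal{G}$ to a maximum over the finitely many bracket functions. Throughout I write $\|g\|_{L_2(\mu)} := (\int g^2\,d\mu)^{1/2}$ and $\|g\|_n := (n^{-1}\sum_{i=1}^n g^2(x_i))^{1/2}$. Since only $g^2$ enters the statement, I would first reduce to nonnegative functions bounded by $1$ (replacing each $g$ by $|g|$), and correspondingly fix an $\epsilon$-bracketing $\{[g_j^L, g_j^U]\}_{j=1}^N$ with $N = N_{[]}(\epsilon,\mathcal{G},L_2(\mu))$, whose members may be truncated to take values in $[0,1]$, so that $0 \le g_j^L \le g_j^U \le 1$ and $\|g_j^U - g_j^L\|_{L_2(\mu)} \le \epsilon$.

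The heart of the argument is a deterministic sandwiching that produces the factor-$2$ slack. Fix $g \in \mathcal{G}$ lying in bracket $j$, so that $g_j^L \le g \le g_j^U$ pointwise. From $g \le g_j^U$ and nonnegativity, $g^2 \le (g_j^U)^2$, hence $\|g\|_{L_2(\mu)} \le \|g_j^U\|_{L_2(\mu)} \le \|g_j^L\|_{L_2(\mu)} + \epsilon$ by the triangle inequality. From $g \ge g_j^L$ one gets $\|g\|_n \ge \|g_j^L\|_n$. Consequently, on the event $\mathcal{A} := \{\|g_j^L\|_{L_2(\mu)} \le 2\|g_j^L\|_n + 3\epsilon \text{ for all } j\}$ we obtain $\|g\|_{L_2(\mu)} \le \|g_j^L\|_{L_2(\mu)} + \epsilon \le 2\|g_j^L\|_n + 4\epsilon \le 2\|g\|_n + 4\epsilon$, uniformly in $g$. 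Thus it suffices to show $\PP(\mathcal{A}^c) \le 4\exp(-n\epsilon^2/768)$.

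For a single bracket, set $h := g_j^L$, $a := \|h\|_{L_2(\mu)}$, and $Z_i := h^2(x_i) \in [0,1]$, which are i.i.d. with mean $a^2$ and, crucially, variance $\mathrm{Var}(Z_i) \le \E Z_i^2 \le \E Z_i = a^2$ because $Z_i \le 1$; this ``variance $\lesssim$ mean'' is exactly what turns Bernstein's bound into the multiplicative, factor-$2$ form. I would then split into two regimes. If $a \le 3\epsilon$, the event $\{a > 2\|h\|_n + 3\epsilon\}$ is empty since $\|h\|_n \ge 0$. If $a > 3\epsilon$, the event forces $n^{-1}\sum_i Z_i < (a-3\epsilon)^2/4$, a lower-tail deviation of $s := a^2 - (a-3\epsilon)^2/4$ below the mean; an elementary computation gives $s > a^2/2$ and $s > \tfrac{3}{2}a\epsilon > \tfrac{9}{2}\epsilon^2$, so the lower-tail Bernstein inequality yields the per-bracket bound $\exp(-ns^2/(2a^2 + \tfrac{2}{3}s)) \le \exp(-c\, n\epsilon^2)$ with a constant $c$ comfortably larger than $1/768$. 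A union bound over the $N$ brackets together with the hypothesis \eqref{vdg.con} (so that $N \le \exp(n\epsilon^2/768)$) absorbs the factor $N$ and delivers $\PP(\mathcal{A}^c) \le 4\exp(-n\epsilon^2/768)$, completing the proof.

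I expect the main obstacle to be the deterministic sandwiching of the second step rather than the concentration estimate: one must route the control of the population norm through the upper bracket and the lower bound on the empirical norm through the lower bracket, and treat the small-norm regime $a \le 3\epsilon$ separately, so that the factor $2$ absorbs fluctuations precisely when the signal $a$ is of the order of the noise. The probabilistic input is merely a one-sided Bernstein inequality applied to the finitely many fixed functions $\{(g_j^L)^2\}$; because we bracket at the single scale $\epsilon$, there is no chaining or symmetrization, and the constants $4$ and $768$ of \citet{geer2000empirical} are recovered with room to spare.
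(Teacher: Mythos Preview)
The paper does not give its own proof of this lemma; it simply records it as following from \citet[Proof of Lemma 5.16]{geer2000empirical}. Your argument is essentially the standard one underlying that result --- single-scale bracketing, the deterministic sandwich routing the population norm through $g_j^U$ and the empirical norm through $g_j^L$, and then a one-sided Bernstein bound on the finitely many fixed functions $(g_j^L)^2$ exploiting that for $[0,1]$-valued variables the variance is bounded by the mean --- and your arithmetic checks out with a comfortable margin over the stated constants.

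One small gap: your reduction to nonnegative functions is not quite what you wrote. Truncating the given brackets $[g_j^L, g_j^U]$ to $[0,1]$ does \emph{not} bracket $|g|$; for instance $g_j^L = -0.5$, $g = -0.3$, $g_j^U = 0.1$ gives $|g| = 0.3 > g_j^U \vee 0$. The correct construction is $\tilde{g}_j^U := |g_j^L| \vee |g_j^U|$ and $\tilde{g}_j^L := (|g_j^L| \wedge |g_j^U|)\, I\{g_j^L g_j^U \ge 0\}$, which does bracket $|g|$ pointwise and satisfies $\tilde{g}_j^U - \tilde{g}_j^L \le g_j^U - g_j^L$ pointwise, so the $L_2(\mu)$ bracket size is preserved and the rest of your argument goes through unchanged. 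In the paper's application the class already consists of the nonnegative functions $x \mapsto \tfrac{1}{2}\mathfrak{H}^2(f_x^G, f_x^{G^*})$, so this point is moot there.
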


\begin{proof}[Proof of
	Theorem~\ref{them:randomdesign_predictionerror}] 
	We shall use Lemma \ref{thm:vdg} with $\mathcal{G}$ equal to the class
	of all functions 
	\begin{equation*}
		x \mapsto \frac{1}{2}\mathfrak{H}^2(f_x^G, f_x^{G^*})
	\end{equation*}
	on the set $S_0 := \{x \in \mathbb{R}^p : \|x\| \leq B\}$ as $G$
	ranges over the class of all probability measures on $\{\beta \in
	\mathbb{R}^p : \|\beta\| \leq R\}$. Note that the function above is
	uniformly bounded by 1. The key to the application of Lemma
	\ref{thm:vdg} is to bound $N_{[]}(\epsilon, \mathcal{G}, L_2(\mu))$
	and for this, we use the inequality:   
	\begin{equation}\label{prev.brak}
		N_{[]}(\epsilon, \mathcal{G}, L_2(\mu)) \le
		N\left(\frac{\epsilon^2}{4T_{G^*}}, \mathcal{M}_R,
		\|\cdot\|_{\infty}\right),
	\end{equation} 
	where $\mathcal{M}_R$ is as in \eqref{eq:M_defn},  
	\begin{equation*}
		T_{G^*} := \int \left(\int \sqrt{f_x^{G^*}(y)} dy \right)^2
		d\mu(x)
	\end{equation*}
	and $\|\cdot\|_{\infty}$ is the $L_{\infty}$ metric on the set
	$S_0 \times \mathbb{R}$. To prove \eqref{prev.brak}, let $\eta :=
	\epsilon^2/(4T_{G^*})$  and let $\{(x, y) \mapsto h_j(x, y), j = 1,
	\dots, N\}$ be an $\eta$-covering set of 
	$\mathcal{M}_R$ under the $L_{\infty}$-metric on $S_0 \times
	\mathbb{R}$. This means that for every probability measure $G$ on 
	$\{\beta \in \mathbb{R}^p : \|\beta\| \leq R\}$, there exists $1 \leq
	j \leq N$ such that
	\begin{equation*}
		\sup_{x \in S_0, y \in \mathbb{R}}  \left|f_x^G(y) - h_j(x, y)
		\right| \leq \eta,
	\end{equation*}
	which implies that $h_j(x, y) - \eta \leq f_x^G(y) \leq h_j(x, y) +
	\eta$ for all $x \in S_0, y \in \mathbb{R}$. As a result 
	\begin{equation*}
		\frac{1}{2} \int \left(\sqrt{f_x^{G}(y)} - \sqrt{f_x^{G^*}(y)}
		\right)^2 dy = 1 - \int \sqrt{f_x^G(y)} \sqrt{f_x^{G^*}(y)} dy 
	\end{equation*}
	lies in the interval
	\begin{equation*}
		\left[1 - \int \sqrt{h_j(x, y) + \eta} \sqrt{f_{x}^{G^*}(y)} dy, 1 + 
		\int \sqrt{\left(h_j(x, y) - \eta \right)_+} \sqrt{f_x^{G^*}(y)
			dy} \right],
	\end{equation*}
	where $x_+ := \max(x, 0)$. The squared $L_2$ distance between the two
	end points of the above interval equals
	\begin{equation}\label{l2dist}
		\int \left[\int \left( \sqrt{h_j(x, y) + \eta} - \sqrt{(h_j(x, y) -
			\eta)_+} \right) \sqrt{f_x^{G^*}(y)} dy \right]^2 d\mu(x).
	\end{equation}
	Because $\sqrt{a + \eta} - \sqrt{(a - \eta)_+} \leq 2 \sqrt{\eta}$ for
	all $a > 0, \eta > 0$, we can bound \eqref{l2dist} by
	\begin{equation*}
		4 \eta \int \left( \int \sqrt{f_x^{G^*}(y)} dy \right)^2 d\mu(x) = 4
		\eta T_{G^*} = \epsilon^2
	\end{equation*}
	and this proves \eqref{prev.brak}. 
	
	The quantity $T_{G^*}$ is bounded from above by a finite constant
	depending only on $\sigma, B$ and $R$ because of the following
	argument.   
	\begin{align*}
		T_{G^*} &\leq \sup_{x : \|x\| \leq B} \left(\int \sqrt{f_x^{G^*}(y)}
		dy \right)^2 \\ &\leq \sup_{x : \|x\| \leq B} \left(\int I\{|y| < 2 B
		R\}  \sqrt{f_x^{G^*}(y)} dy + \int I\{|y| \geq 2 B R\}
		\sqrt{f_x^{G^*}(y) dy} \right)^2. 
	\end{align*}
	For $|y| < 2 B R$, we use the trivial inequality
	\begin{equation*}
		f_x^{G^*}(y) = \frac{1}{\sqrt{2 \pi} \sigma}\int  \exp
		\left(-\frac{(y - x^{\top} \beta)^2}{2 \sigma^2} 
		\right) dG^*(\beta) \leq \frac{1}{\sqrt{2 \pi} \sigma}
	\end{equation*}
	and for $|y| \geq 2 R B$, we use 
	\begin{equation*}
		f_x^{G^*}(y) = \frac{1}{\sqrt{2 \pi} \sigma}\int  \exp
		\left(-\frac{(y - x^{\top} \beta)^2}{2 \sigma^2} 
		\right) dG^*(\beta) \leq \frac{1}{\sqrt{2 \pi} \sigma} \exp
		\left(-\frac{y^2}{8 \sigma^2} \right). 
	\end{equation*}
	which is true because (note that $G^*\{\beta :
	\|\beta\| \leq R\} = 1$)
	\begin{equation*}
		|y - x^{\top} \beta| \geq |y| - |x^{\top} \beta| \geq |y| - \|x\|
		\|\beta\| \geq |y| - R B \geq |y|/2. 
	\end{equation*}
	We thus get
	\begin{equation*}
		T_{G^*} \leq \frac{1}{\sqrt{2 \pi} \sigma}\left(4 R B + 2\int_{2 R
			B}^{\infty} \exp \left(-\frac{y^2}{16 \sigma^2} dy \right) 
		\right)^2 \leq \frac{C}{\sigma} \left(RB + \sigma \right)^2
	\end{equation*}
	for a universal positive constant $C$.
	
	Using \eqref{conv.entbound}, the covering number $N(\eta,
	\mathcal{M}, \|\cdot\|_{\infty})$ is bounded by
	\begin{equation*}
		\log N(\eta, \mathcal{M}, \|\cdot\|_{\infty}) \leq C_p \max \left(
		\left\{\log(\sigma^{-1} \eta^{-1}) \right\}^{p+1}, 
		\left(\frac{RB}{\sigma} \right)^p \left\{\log (3 \sigma^{-1}
		\eta^{-1}) \right\}^{p/2 + 1} \right)
	\end{equation*}
	for a positive constant $C_p$ depending on $p$ alone. Inequality
	\eqref{prev.brak} then gives
	\begin{equation*}
		\log N_{[]}(\epsilon, \mathcal{G}, L_2(P)) \leq C_p  \max \left(  \left\{\log(\sigma^{-1}
		\epsilon^{-2} T) \right\}^{p+1}, 
		\left(\frac{R B}{\sigma} \right)^p \left\{\log (3 \sigma^{-1}
		\epsilon^{-2} T) \right\}^{p/2 + 1}   \right),
	\end{equation*}
	where $T = T_{G^*}$.
	
	The condition \eqref{vdg.con} will therefore be satisfied provided
	(below $C_p$ equals 768 multiplied by the constant $C_p$ appearing in
	the above equation)
	\begin{equation*}
		n \epsilon^2 \geq C_p  \left\{\log(\sigma^{-1}
		\epsilon^{-2} T) \right\}^{p+1} ~~ \text{ and } ~~ n \epsilon^2
		\geq C_p \left(\frac{R B}{\sigma} \right)^p \left\{\log (3 \sigma^{-1}
		\epsilon^{-2} T) \right\}^{p/2 + 1}.
	\end{equation*}
	It is clear that both of these conditions will be satisfied for
	$\epsilon^2 \geq C_p \beta_n^2$ where $\beta_n$ is given by
	\eqref{eq:beta_defn}.  
	
	Lemma \ref{thm:vdg}  then gives that, for each $t \geq 1$, 
	\begin{equation*}
		\PP \left\{  \left(\frac{1}{2} \int \mathfrak{H}^2 \left(f_{x}^{\hat{G}},
		f_x^{G^*} \right) d\mu(x) \right)^{1/2} \geq 2
		\left(\frac{1}{2n} \sum_{i=1}^n \mathfrak{H}^2
		\left(f_{x_i}^{\hat{G}}, f_{x_i}^{G^*} \right) \right)^{1/2} + 4 t
		\beta_n \sqrt{C_p} \right\} \leq \exp \left(-\frac{nt^2
			\beta_n^2}{C_p} \right). 
	\end{equation*}
	The inequalities \eqref{tailbound.random} and
	\eqref{expectationbound.random} both follow from combining the above
	inequality with \eqref{tailbound} and \eqref{expectationbound}
	respectively in Theorem \ref{mainthem}.  
\end{proof}

\begin{revise}
\subsection{Proof of Theorem \ref{thm:randomdesign_identifiability}} \label{appendix:randomdesign_identifiability}

The identifiability result (Theorem  \ref{thm:randomdesign_identifiability}) is proved using the tools of characteristic functions. A key step in the proof uses the properties of analytic functions, and we need the following basic fact in Lemma~\ref{lemma:analytic_func}.

\begin{lemma}

For any probability measures $G$ over $\left\{\beta \in \mathbb{R}^p : \|\beta\| \le R \right\}$ and $x$ is a $p$-dimensional variable, $\Expt_{\beta\sim G} e^{ix^\T\beta}$ is analytic in each component of $x$.

\label{lemma:analytic_func}
\end{lemma}

\begin{proof}[Proof of Lemma~\ref{lemma:analytic_func}]

	We prove that for each component $x_{(j)}$ of $x$, $\Expt_{\beta\sim G} e^{ix^\T\beta}$ is an analytic function in $x_{(j)}$, $j = 1, \dots, p$. For any $C^1$ closed curve $\Gamma$, because the boundedness of $G$, we can adopt the Fubini's Theorem to exchange the integral order,
	\[
	\int_{\Gamma} \Expt_{\beta \sim G} e^{ix^\T \beta} d x_{(j)} = \Expt_{\beta \sim G}\left[
		\int_{\Gamma} e^{ix^\T\beta} d x_{(j)}.	
	 \right]
	\]
	By Cauchy's integral theorem, $\int_{\Gamma} e^{ix^\T\beta} d x_{(j)} = 0$ since $e^{ix^\T\beta}$ is analytic in $x_{(j)}$. Plugging back to the integral above, 
	\[
	\int_{\Gamma} \Expt_{\beta \sim G} e^{ix^\T \beta} d x_{(j)}  = 0,
	\]
	and therefore by  Morera's theorem in complex analysis \citep[Theorem 5.1, Chapter 2]{stein2010complex},  $\Expt_{\beta\sim G} e^{ix^\T\beta}$ is analytic in $x_{(j)}$.
	
\end{proof}

\begin{proof}[Proof of Theorem~\ref{thm:randomdesign_identifiability}]

Let $p_\mu$ denote the density function of $\mu$, then 
\[
\int \frac{1}{\sigma} \phi\left(\frac{y - x^\T\beta}{\sigma} \right) d G_1(\beta) \cdot p_\mu(x) = 
\int  \frac{1}{\sigma} \phi\left(\frac{y - x^\T\beta}{\sigma} \right) d G_2(\beta) \cdot p_\mu(x). 
\]

That is, the joint distributions of  $(X,Y)$ from the following two data generating mechanisms are the same,
\begin{enumerate}
\item $X\sim \mu$, $Y = X^\T \beta + \sigma Z$, $\beta \sim G_1$, and $Z \in N(0,1)$;
\item $X\sim \mu$, $Y = X^\T \beta + \sigma Z$, $\beta \sim G_2$, and $Z \in N(0,1)$	.
\end{enumerate}

Therefore, the characteristic functions are also the same, i.e., 
\[
\int e^{i u^\T x} \Expt e^{it \sigma Z} \Expt_{\beta\sim G_1} e^{it x^\T \beta} d \mu(x)
= \int e^{i u^\T x} \Expt e^{it \sigma Z} \Expt_{\beta \sim G_2} e^{it x^\T \beta} d \mu(x)
\]
for all $u \in \RS^p$ and $t \in \RS$. By Fourier inversion theorem, we have
\[
\Expt e^{it\sigma Z} \Expt_{\beta \sim G_1} e^{it x^\T \beta} 
= \Expt e^{it\sigma Z} \Expt_{\beta \sim G_2} e^{it x^\T \beta}.
\]

Since $\Expt e^{it\sigma Z} \neq 0$, 
\[
\Expt_{\beta \sim G_1} e^{it x^\T \beta} 
=  \Expt_{\beta \sim G_2} e^{itx^\T \beta}
\]
holds for all $t\in \RS$ and all $x$ in the support of $\mu$.

By Lemma \ref{lemma:analytic_func},  both $\Expt_{\beta \sim G_1} e^{it x^\T \beta}$
and $\Expt_{\beta \sim G_2} e^{itx^\T \beta}$ are analytic functions in each component of $x$. Combining with the fact that the support of $\mu$ contains an open set, it follows from the Identity theorem of analytic functions that 
\[
\Expt_{\beta \sim G_1} e^{it x^\T \beta} 
=  \Expt_{\beta \sim G_2} e^{itx^\T \beta}
\]
holds for all $t\in \RS$ and all $x \in \RS^p$.

We can view $(tx)$ as one variable,, the above equality essentially shows that $G_1$, $G_2$ have the same characteristic functions, and thus $G_1 = G_2$.

\end{proof}
\end{revise}

\subsection{Proof of Theorem
	\ref{thm:weakconsistency}} \label{appendix:weakconsistency}

The proof of Theorem~\ref{thm:weakconsistency} relies on
Theorem~\ref{them:randomdesign_predictionerror}. It also uses the
following lemma whose proof is similar to \citet[Proposition
2.2]{beran1994minimum}. We recall that the mixture of linear
regression model under random design can be expressed as  
\begin{equation}
	Y_i = X_i^\T \beta^i + \sigma Z_i, \beta^i \sim G^*,X_i \sim \mu, Z_i\sim N(0,1).
	\label{eq:mlrmodel_random}
\end{equation}
Let $P(G^*, \mu)$ denote the joint distribution of $(X_i, Y_i)$ under
the above model. We use $\hat{G}_n$ to denote an NPMLE  given $n$ data
points. Let $\dlp$ denote  the L\'{e}vy–Prokhorov metric, which is known
to metrize the weak convergence of probability measures.

\begin{lemma}
	\label{strongid} 
	Assume the support of $\mu$ contains an open set, if
	\[
	\dlp(P(G_n, \mu ), P(G^*, \mu))\rightarrow 0,
	\]
	where $\{ G_{n}\}$ denotes a sequence of probability measures such that $G_n\{ \beta\in \RS^p: \Vert \beta \Vert \leq R\} = 1$,  then
	\[
	\dlp(G_{n},G^*) \rightarrow 0.
	\]
\end{lemma}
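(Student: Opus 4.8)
The plan is to combine a compactness reduction with a strong identifiability property of the model. Since every $G_n$ is supported on the fixed compact ball $\ball_p(0,R)=\{\beta\in\RS^p:\|\beta\|\le R\}$, the family $\{G_n\}$ is tight and hence relatively compact in the topology of weak convergence (Prokhorov's theorem); moreover the set of probability measures supported on $\ball_p(0,R)$ is compact and metrizable for this topology. It therefore suffices to show that every weakly convergent subsequence of $\{G_n\}$ has limit $G^*$, which is exactly the assertion $\dlp(G_n,G^*)\to 0$.

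So let $\{G_{n_k}\}$ be a subsequence with $G_{n_k}\to G$ weakly, where $G$ is again supported on $\ball_p(0,R)$. First I would pass to the joint laws. Writing a single observation as $(X,Y)=(X,\,X^\T\beta+\sigma Z)$, a fixed continuous function of $(X,\beta,Z)$ with $X\sim\mu$ and $(\beta,Z)\sim G_{n_k}\otimes N(0,1)$ independent, and noting $\mu\otimes G_{n_k}\otimes N(0,1)\to\mu\otimes G\otimes N(0,1)$ weakly (only the $G$-factor varies), the continuous mapping theorem gives $P(G_{n_k},\mu)\to P(G,\mu)$ weakly. On the other hand, the hypothesis $\dlp(P(G_n,\mu),P(G^*,\mu))\to 0$ forces $P(G_{n_k},\mu)\to P(G^*,\mu)$, so by uniqueness of weak limits $P(G,\mu)=P(G^*,\mu)$.

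It then remains to deduce $G=G^*$ from $P(G,\mu)=P(G^*,\mu)$, and this is where the assumption that $\mathrm{supp}(\mu)$ contains a nonempty open set $U$ enters. Since $P(G,\mu)$ has density $(x,y)\mapsto f_x^G(y)$ with respect to $\mu\otimes\mathrm{Leb}$ and $\mu$ is the common $x$-marginal, disintegration gives $f_x^G(\cdot)=f_x^{G^*}(\cdot)$ for $\mu$-almost every $x$ (equality of densities a.e.\ in $y$, hence everywhere, by continuity of Gaussian convolutions). The set of such $x$ meets every nonempty open subset of $U$, since each such subset has positive $\mu$-mass, so it is dense in $U$; as $x\mapsto f_x^G(y)-f_x^{G^*}(y)$ is continuous for each fixed $y$, we get $f_x^G(y)=f_x^{G^*}(y)$ for every $x\in U$ and every $y\in\RS$. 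Taking Fourier transforms in $y$ and cancelling the nonvanishing Gaussian factor $e^{-\sigma^2t^2/2}$, this says that for every $x\in U$ the law of $x^\T\beta$ under $\beta\sim G$ equals its law under $\beta\sim G^*$; in particular $\int(x^\T\beta)^k\,dG(\beta)=\int(x^\T\beta)^k\,dG^*(\beta)$ for every integer $k\ge 0$ and every $x\in U$ (all moments are finite by compact support). Both sides are homogeneous polynomials of degree $k$ in $x$, and a polynomial vanishing on a nonempty open set vanishes identically, so all mixed moments of $G$ and $G^*$ coincide; since a compactly supported probability measure on $\RS^p$ is determined by its moments, $G=G^*$.

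I expect the identifiability step $P(G,\mu)=P(G^*,\mu)\Rightarrow G=G^*$ to be the crux --- in particular the passage from almost-everywhere-in-$x$ equality of conditional densities to equality on the full open set $U$, followed by the deconvolution and moment-matching argument; the tightness/compactness reduction and the continuity of $G\mapsto P(G,\mu)$ are routine. This parallels the reasoning in \citet[Proof of Proposition 2.2]{beran1994minimum}.
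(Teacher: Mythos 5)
Your proof is correct, and its skeleton --- tightness of $\{G_n\}$ on the fixed ball, reduction to showing that every weak subsequential limit equals $G^*$, and an identifiability argument exploiting the open set in $\mathrm{supp}(\mu)$ --- is the same as the paper's. Where you genuinely diverge is in how the identifiability step is executed. The paper stays at the level of characteristic functions of the joint law: weak convergence gives equality of $\int e^{iu^\T x}\,\E e^{it\sigma Z}\,\E_{\beta}e^{itx^\T \beta}\,d\mu(x)$ for the subsequential limit $\tilde G$ and for $G^*$, Fourier inversion in $u$ yields $\E_{\tilde G}e^{itx^\T\beta}=\E_{G^*}e^{itx^\T\beta}$ for $x$ in $\mathrm{supp}(\mu)$, and analyticity of $x\mapsto \E e^{itx^\T\beta}$ extends this to all $x\in\RS^p$, so $\tilde G$ and $G^*$ have the same characteristic function. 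You instead first identify the joint laws ($P(G,\mu)=P(G^*,\mu)$, obtained via the continuous mapping theorem rather than via characteristic functions), disintegrate to get $f^G_x=f^{G^*}_x$ for $\mu$-a.e.\ $x$, upgrade to every $x$ in the open set $U$ by continuity in $x$, deconvolve the Gaussian in $y$, and finish by moment matching: the degree-$k$ polynomial identity $\int(x^\T\beta)^k\,dG=\int(x^\T\beta)^k\,dG^*$ on $U$ forces all mixed moments to agree, and compactly supported measures are moment-determinate. Your finish is arguably more elementary --- a polynomial vanishing on an open set vanishes identically --- and it avoids the paper's rather terse ``bounded and thus analytic'' justification, at the price of invoking densities, disintegration, and moment determinacy, which the characteristic-function route sidesteps. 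One caveat common to both arguments: the final determinacy step (moments for you, analytic continuation for the paper) tacitly uses that $G^*$ itself is compactly supported; this is not stated in the lemma but holds where the lemma is applied (Theorem~\ref{thm:weakconsistency}), and in your argument it can even be recovered from the equality of the laws of $x^\T\beta$ for $p$ linearly independent $x\in U$ --- worth a sentence either way.
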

\begin{proof}[Proof of Lemma~\ref{strongid}]
	Because $\{G_n\}$ is supported on a compact ball, $\{G_n\}$ is tight,
	and $\{G_n\}$ has a subsequence $\{G_{n_m}\}$ converging weakly
	(Theorem 3.10.3 in \citet{durrett2019probability}). Let $\tilde{G}$
	denote the limiting probability measure of the weakly convergent
	subsequence, then  
	\begin{equation*}
		\lim_{m\rightarrow \infty}E_{\beta \sim G_{n_m}}e^{itx^\T \beta}
		=\E_{\beta \sim \tilde{G}}e^{itx^\T \beta} \text{ for all } x \in
		\RS^p \text{ and } t \in
		\RS. 
	\end{equation*}
	Meanwhile, the weak convergence of $P(G_n, \mu )$ to $P(G^*, \mu)$ implies 
	\begin{equation*}
		\lim_{m\rightarrow \infty}\int e^{iu^\T x} \E e^{it\sigma Z}
		\E_{\beta \sim G_{n_m}}e^{itx^\T \beta} \diff \mu(x) = \int e^{iu^\T x} \E
		e^{it\sigma Z} \E_{\beta \sim G^*}e^{itx^\T \beta} \diff \mu(x) 
	\end{equation*}
	for all $u \in \RS^p$ and $t \in \RS$. Combining the above two
	equations, we get  
	\[
	\int e^{iu^\T x} \E e^{it \sigma Z} \E_{\beta \sim \tilde{G}}e^{itx^\T \beta} \diff\mu(x)
	= \int e^{iu^\T x} \E e^{it \sigma Z} \E_{\beta \sim G^*}e^{itx^\T \beta} \diff\mu(x)
	\text{ for all } u \in \RS^p \text{ and } t \in \RS. 
	\]
	The Fourier inversion theorem now gives,  
	\begin{equation}
		\E_{\beta\sim \tilde{G}}e^{itx^\T \beta}  = \E_{\beta \sim
			G^*}e^{itx^\T \beta} \text { for all } t \in \RS \text{ and $x$ in
			the support of $\mu$}. 
		\label{analytic}
	\end{equation}
	Both sides of equation \eqref{analytic} are bounded and thus analytic
	in each component of $x$, as previously shown in Lemma~\ref{lemma:analytic_func}. Furthermore, since the support of $\mu$ is assumed
	to contain an open set, \eqref{analytic} holds for all $x \in 
	\RS^p$. Alternatively, by viewing $(tx)$ as the argument of
	characteristic functions, \eqref{analytic} shows that $\tilde{G}$ and
	$G^*$ have the same characteristic functions and thus $\tilde{G} =
	G^*$. 
	
	Therefore, we have shown that every weakly convergent subsequence of
	$\{G_n\}$weakly  converges to $G^*$. Suppose  that $\{G_n\}$ does not
	converge weakly to $G^*$, then there exists $\epsilon>0$, for every
	$n$ there exists $n_k \geq n$ such that $d(G_{n_k},G^*) >\epsilon$. It
	is clear that any subsequence of $\{G_{n_k}\}$ cannot converge weakly
	to $G^*$. However, following the same argument before, $\{G_{n_k}\}$
	is tight and contains a weakly convergent subsequence converging to
	$G^*$ leading to a contradiction. This completes the proof of Lemma
	\ref{strongid}. 
\end{proof}

We are now ready to prove Theorem \ref{thm:weakconsistency}.

\begin{proof}[Proof of Theorem~\ref{thm:weakconsistency}]
	
	Based on \eqref{tailbound.random} in
	Theorem~\ref{them:randomdesign_predictionerror},
	$\mathfrak{H}_{\mathrm{random}}^2(f^{\hat{G}_n}, f^{G^*})$ 
	converges to $0$ in probability. We first notice that
	$\mathfrak{H}_{\mathrm{random}}^2(f^{\hat{G}_n}, f^{G^*})$ is exactly
	the Hellinger distance between $P( \hat{G}_n, \mu)$ and $P( G^*,
	\mu)$. Since convergence under Hellinger distance is stronger then
	weak convergence, we have 
	\[
	\dlp(P( \hat{G}_n, \mu), P( G^*, \mu)) \rightarrow 0
	\]
	in probability. We now invoke a classic probability result
	(Theorem~2.3.2 in \citet{durrett2019probability}): given random
	variables $\{D_n\}$ and $D$, $D_n\rightarrow D$ in probability if and
	only if for every subsequence $\{D_{n_{m}} \}$, there is a further
	subsequence $\{D_{n_{m_k}}\}$ converges almost surely to $D$.
	Consider the random sequences $\{ \dlp(\hat{G}_{n}, G^*)  \}$ and
	$\{\dlp(P(\hat{G}_n, \mu), P(G^*, \mu)) \}$, for any subsequence
	$\{\dlp(\hat{G}_{n_{m}}, G^*)\}$, there  
	is a further subsequence \[\{\dlp(P( \hat{G}_{n_{m_k}}, \mu), P( G^*,
	\mu)) \}\] that converges to $0$ almost surely because $\dlp(P( \hat{G}_n,
	\mu), P( G^*, \mu)) \rightarrow 0$ in probability and consequently
	$\dlp(\hat{G}_{n_{m_k}}, G^*) \rightarrow 0$ almost surely because of
	Lemma~\ref{strongid}. Thus we have shown that $\dlp(\hat{G}_{n},
	G^*)\rightarrow 0$ in probability. 
\end{proof}

\subsection{Metric Entropy Result: Theorem
	\ref{theorem:metric_entropy} and its proof}
\label{appendix:metric}
In this section, we prove our metric entropy results, and these
results provide key ingredients for the proof of
Theorem~\ref{mainthem} and
Theorem~\ref{them:randomdesign_predictionerror}. The main theorem of
this section is Theorem~\ref{theorem:metric_entropy}. We work here
under a more general  setting than linear regression
functions. Specifically, we use the function $r(x,\beta)$ to represent
the mean of the response $y$ given $x$ and $\beta$ so that the
conditional density function of $y$ given $x$ is  
\[
f^G_x(y) : = \int \frac{1}{\sigma}\phi\left(\frac{y -
    r(x,\beta)}{\sigma}\right) \diff G(\beta). 
\]
Although our main example is $r(x, \beta) = x^{\T} \beta$, Theorem
\ref{theorem:metric_entropy} can be used for other functions $r(x,
\beta)$  as well. 

Let $K$ denote an arbitrary compact set in $\RS^p$ and 
\begin{equation}
	\mathcal{M}_K:= \{ f^G_x(y) : G \text{ is a probability measure supported on } K \}. 
	\label{eq:Mkset}
\end{equation}
The goal of this section is to prove an upper bound on the covering
number $N(\eta, \mathcal{M}_K, \| \cdot \|_{\infty, S_0 \times \RS})$
of $\mathcal{M}_K$ under the metric $\| \cdot \|_{\infty, S_0 \times
  \RS}$: 
\begin{equation}\label{metinfs0_defn}
	\sup_{x \in S_0, y \in \RS} \left|f_x^G(y) - f_x^{G'}(y)  \right|.
\end{equation}
for an arbitrary set $S_0$ of $x$-values. General definitions of
covering numbers are given at the beginning of Subsection
\ref{subsection_mainthem}. 

For each $x$, let $
\lipsz(x)$ be defined as 
\begin{equation}
	\lipsz(x) := \sup_{\beta_1, \beta_2 \in K : \beta_1 \neq \beta_2} \frac{\left|r(x, \beta_1) - r(x, \beta_2) \right|}{\|\beta_1 - \beta_2\|} 
	\label{eq:lipsz_defn_appendix}
\end{equation}
so that
\begin{equation*}
	|r(x,\beta_1) - r(x,\beta_2) | \leq \lipsz(x) \| \beta_1- \beta_2 \| \qt{for all $\beta_1, \beta_2 \in K$}.
\end{equation*}

\begin{theorem}\label{theorem:metric_entropy}
	Suppose that, for every $x$, the function $\beta \mapsto r(x, \beta)$ is a polynomial function of degree at most $\zeta$. Then there exists a constant $C_p$ depending only on $p$ such that for every $0 < \eta < e^{-1}\sigma^{-1}$,  we have
	\begin{equation}\label{eq:metric_entropy}
		\log N(\eta, \mathcal{M}_K, \| \cdot \|_{\infty,S_0\times \RS}) \leq C_\pb \zeta^\pb N \left(\frac{\sigma}{\lipsz_{S_0}} \sqrt{2\log \frac{3}{\sigma \eta}}, K \right) \left(\log \frac{1}{\sigma \eta} \right)^{\pb+1},
	\end{equation}
	where $\lipsz_{S_0} = \sup_{x\in S_0} \lipsz(x)$. In the right
        hand side above, $N(\delta, K)$ denotes the $\delta$-covering
        number of $K$ in the usual Euclidean metric. 
\end{theorem}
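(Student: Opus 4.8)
The plan is to follow the strategy used for Gaussian location mixtures in \citet{ghosal2001entropies} and \citet{saha2020nonparametric}: after truncating in $y$, exhibit for each $G$ a function that is uniformly within $\eta$ of $f^G_x$ yet depends on $G$ only through a \emph{finite-dimensional, $x$-free} vector of moments, and then cover the bounded range of that vector. Two features force modifications of the classical argument -- the mean function $\beta\mapsto r(x,\beta)$ is a polynomial of degree $\le\zeta$ rather than the identity, and we need uniformity over all $x\in S_0$ at once -- and these are exactly what will produce the factors $\zeta^{\pb}$ and $N\big(\tfrac{\sigma}{\lipsz_{S_0}}\sqrt{2\log(3/\sigma\eta)},K\big)$ in \eqref{eq:metric_entropy}.

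First I would fix $\rho_\eta:=\tfrac{\sigma}{\lipsz_{S_0}}\sqrt{2\log(3/\sigma\eta)}$, take a minimal $\rho_\eta$-cover $B_1,\dots,B_M$ of $K$ ($M=N(\rho_\eta,K)$) with centers $\beta_1,\dots,\beta_M$, and let $A_1,\dots,A_M$ be the induced measurable partition of $K$. The radius is chosen so that, by \eqref{eq:lipsz_defn_appendix}, $|r(x,\beta)-r(x,\beta_m)|\le\lipsz_{S_0}\rho_\eta=\sigma\sqrt{2\log(3/\sigma\eta)}$ for every $x\in S_0$ and $\beta\in B_m\cap K$; a Gaussian density shifted by this amount is below $\eta$, so on each piece the bumps $\tfrac1\sigma\phi(\tfrac{y-r(x,\beta)}{\sigma})$ are ``nearly aligned.'' On $A_m$ write $\beta=\beta_m+\beta'$, $\mu_m(x):=r(x,\beta_m)$, $\delta:=r(x,\beta)-\mu_m(x)$ (a polynomial in $\beta'$ of degree $\le\zeta$, no constant term, $|\delta|\le\sigma\sqrt{2\log(3/\sigma\eta)}$), and $t:=y-\mu_m(x)$, so that
\[
\tfrac1\sigma\phi\Big(\tfrac{y-r(x,\beta)}{\sigma}\Big)=\tfrac{1}{\sqrt{2\pi}\sigma}\,e^{-t^2/2\sigma^2}\exp(u),\qquad u:=\tfrac{t\delta-\tfrac12\delta^2}{\sigma^2}.
\]
On the window $W_m(x):=\{y:|t|\le C\sigma\sqrt{\log(1/\sigma\eta)}\}$ (for a suitable absolute constant $C$) one checks $|u|\lesssim\log(1/\sigma\eta)$ uniformly over $\beta\in B_m\cap K$ and $x\in S_0$, while outside $W_m(x)$ the bump is already $\le\eta$. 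Truncating the exponential series at degree $k$ -- taken of order $\log(1/\sigma\eta)$, so that $|u|^{k+1}e^{|u|}/(k+1)!\le\eta$ -- and multiplying by $\mathbf{1}\{y\in W_m(x)\}$ yields a function $q_m(\beta,x,y)$ with $\big|\tfrac1\sigma\phi(\tfrac{y-r(x,\beta)}{\sigma})-q_m(\beta,x,y)\big|\le\eta$ for all $\beta\in A_m$, $x\in S_0$, $y\in\RS$. Since $u$ has degree $\le 2\zeta$ in $\beta'$, $q_m(\beta,x,y)=\sum_{|\gamma|\le 2k\zeta}a_{m,\gamma}(x,y)\,(\beta-\beta_m)^\gamma$ is a polynomial of degree $\le 2k\zeta$ in $\beta-\beta_m$.

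For a probability measure $G$ on $K$, set $\tilde f^G_x(y):=\sum_{m}\int_{A_m}q_m(\beta,x,y)\,dG(\beta)=\sum_{m,\gamma}a_{m,\gamma}(x,y)\,\theta_{m,\gamma}(G)$ with $\theta_{m,\gamma}(G):=\int_{A_m}(\beta-\beta_m)^\gamma\,dG(\beta)$. Because the $A_m$ partition $K$, the pointwise bound above integrates to $\|f^G-\tilde f^G\|_{\infty,S_0\times\RS}\le\eta$ (total error $\eta$, not $M\eta$, thanks to $\sum_m G(A_m)=1$), so it suffices to cover $\{\tilde f^G\}$. But $\tilde f^G$ depends on $G$ only through $\theta(G)=(\theta_{m,\gamma}(G))\in\RS^{D}$, $D=M\binom{2k\zeta+\pb}{\pb}$, this vector is independent of $x$, and $|\theta_{m,\gamma}(G)|\le\rho_\eta^{|\gamma|}$ with $\sum_m\theta_{m,0}(G)=1$, so $\theta(G)$ ranges over a bounded box in $\RS^D$. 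Bounding $\|a_{m,\gamma}\|_{\infty,S_0\times\RS}$ (a product of at most $k$ factors, each polynomial in $\sigma^{-1}$, $\log(1/\sigma\eta)$, $\rho_\eta^{-1}$ and the coefficients of $r$), one finds that a net of this box of precision $\delta'$ equal to a polynomially small multiple of $\eta$ -- which needs at most $(\mathrm{poly}((\sigma\eta)^{-1}))^{D}$ points -- induces an $\eta$-net of $\{\tilde f^G\}$. Hence $\log N(2\eta,\mathcal{M}_K,\|\cdot\|_{\infty,S_0\times\RS})\le C_\pb\,D\log(1/\sigma\eta)\le C_\pb\,N(\rho_\eta,K)(k\zeta)^{\pb}\log(1/\sigma\eta)$, and substituting $k\asymp\log(1/\sigma\eta)$ and then replacing $2\eta$ by $\eta$ gives \eqref{eq:metric_entropy}.

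I expect the main obstacle to be the bookkeeping in the last two steps: making the Taylor remainder $\le\eta$ \emph{simultaneously} over all $\beta\in B_m\cap K$ and all $x\in S_0$ while keeping $k=O(\log(1/\sigma\eta))$, and controlling $\|a_{m,\gamma}\|_\infty$ tightly enough that the auxiliary quantities $\sigma^{-1}$, $\lipsz_{S_0}$, $R$ and $\rho_\eta^{-1}$ enter the final bound only through $\log(1/\sigma\eta)$ (which is precisely what the restriction $0<\eta<e^{-1}\sigma^{-1}$ and the dimension-dependent constant $C_\pb$ are there to absorb) rather than polynomially. The remaining ingredients -- the $y$-truncation via Gaussian tails, the partition-of-unity telescoping of the approximation error, and the elementary Euclidean covering-number count for the box -- are routine.
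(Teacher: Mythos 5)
Your route is genuinely different from the paper's: you linearize each density in a finite-dimensional, $x$-free vector of truncated local moments and then cover moment space, whereas the paper never covers moments at all — its Lemma \ref{lemma:abdm} uses \emph{exact} moment matching up to degree $2m\zeta$ on each piece of the coarse partition (Carath\'eodory) together with the univariate moment-matching Lemma \ref{lemma:mml} to replace an arbitrary $G$ by a discrete $G'$ with about $\zeta^{\pb}(\log\frac{1}{\sigma\eta})^{\pb}N(\rho_\eta,K)$ atoms, and then covers atom locations in the pseudometric $\dbeta$ and weights in $L^1$, where the Lipschitz constants are the universal $\sup_z|\phi'(z)|/\sigma$ and $\sup_z|\phi(z)|/\sigma$. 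Your first steps (the per-piece expansion with $|u|\lesssim\log\frac{1}{\sigma\eta}$ on the $y$-window, truncation at $k\asymp\log\frac{1}{\sigma\eta}$, and the telescoping of the error over the partition using $\sum_m G(A_m)=1$) are sound and essentially re-derive Lemma \ref{lemma:mml} in linearized form, and the dimension count $D\asymp N(\rho_\eta,K)(k\zeta)^{\pb}$ does produce the right shape of bound.

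The gap is in the final covering step. A coordinate-wise grid on the moment box needs uniform bounds on $\|a_{m,\gamma}\|_{\infty}$, and the hypotheses do not supply them: $\lipsz_{S_0}$ controls $|r(x,\beta)-r(x,\beta_m)|$ only for $\beta\in K$, so when $K\cap B_m$ is thin (the theorem allows any compact $K$, e.g.\ finite or lower-dimensional sets) the coefficients of $\delta(x,\cdot)$ in $\beta-\beta_m$ — hence the $a_{m,\gamma}$ — can be arbitrarily large relative to $1/(\sigma\eta)$; even for a solid ball $K$, converting the sup bound on $\delta$ into coefficient bounds costs Markov-type constants in $\zeta$ beyond the allowed $\zeta^{\pb}$. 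The per-coordinate grid then has log-cardinality involving $\log$ of $R$, $\lipsz_{S_0}/\sigma$ and the coefficients of $r$, quantities that are independent of $\eta$, cannot be absorbed into $C_{\pb}$, and do not appear in \eqref{eq:metric_entropy}; the restriction $0<\eta<e^{-1}\sigma^{-1}$ does not help with this. The step can be repaired without any coefficient control: cover the image of the moment body under your linear map directly, i.e.\ use the seminorm $\theta\mapsto\sup_{x\in S_0,\,y\in\RS}\bigl|\sum_{m,\gamma}a_{m,\gamma}(x,y)\theta_{m,\gamma}\bigr|$ and the volumetric bound $(1+2\Delta/\eta)^{D}$, noting that every $\tilde f^G$ lies within $\eta$ of a genuine mixture density so that $\Delta\le 2/(\sqrt{2\pi}\sigma)+2\eta$; this yields $D\log\frac{C}{\sigma\eta}$ and hence \eqref{eq:metric_entropy}. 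As written, however, the proposal does not establish the stated bound, which is exactly the difficulty the paper's exact-matching-plus-atom/weight-covering argument is designed to avoid.
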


We prove Theorem \ref{theorem:metric_entropy} by modifying
appropriately the proof of the metric entropy results for Gaussian
location mixtures in \citet{zhang2009generalized} (see also
\citet{ghosal2007posterior} and
\citet{saha2020nonparametric}). Actually Theorem
\ref{theorem:metric_entropy} can be seen as a generalization of metric
entropy results for Gaussian location mixtures. Indeed, in the special
case when $p = 1$, $\sigma = 1$,   $S_0 = \{0\}$, $r(x, \beta) =
\beta$ and $K = [-M, M]$ (for some $M > 0$), the class $\mathcal{M}_K$
becomes 
\begin{equation*}
	\mathcal{H}_M := \left\{y \mapsto \int \phi(y - \beta) dG(\beta) : G[-M, M] = 1 \right\}
\end{equation*}
and inequality \eqref{eq:metric_entropy}  gives that the $\eta$-metric entropy of $\mathcal{H}_M$ under the $L_{\infty}$ metric on $\mathbb{R}$ is bounded by
\begin{equation*}
	C N\left(\sqrt{2 \log \frac{3}{\eta}}, [-M, M] \right) \left(\log \frac{1}{\eta} \right)^{2} \leq    C \left(1 + \frac{2M}{\sqrt{2 \log (3/\eta)}} \right) \left(\log \frac{1}{\eta} \right)^2
\end{equation*}
for all $0 < \eta < e^{-1}$. This is essentially \citet[inequality (5.8)]{zhang2009generalized}.

The proof of Theorem~\ref{theorem:metric_entropy} crucially relies on Lemma~\ref{lemma:mml} (moment matching accuracy) and Lemma~\ref{lemma:abdm} (approximation by discrete mixtures) which are given next. Lemma \ref{lemma:mml} follows almost directly from the corresponding result for Gaussian location mixtures (see \citet[Lemma 1]{jiang2009general} or \citet[Lemma D.2]{saha2020nonparametric}) but Lemma \ref{lemma:abdm} requires additional arguments. 

\begin{lemma}	\label{lemma:mml}
	Fix a pair $(x, y)$ and let $A$ be a subset of $\RS^p$ such that
	\[
	\mathring{O}((x,y), a) \subseteq A \subseteq O((x,y), ca)
	\]
	for some $a>1$ and $c \geq 1$ where
	\[
	O((x,y), a) = \{ \beta \in  K: | y - r(x,\beta) |/\sigma \leq  a \}.
	\]
	and
	\[
	\mathring{O}((x,y), a) = \{ \beta \in K: | y -r(x,\beta)| /\sigma<  a \}. 
	\]
	Let $G$ and $G'$ be two probability measures on $\RS^p$ such that for some $m \geq 1$ and all integers $0\leq k \leq 2m$, we have
	\begin{equation}
		\int_A \{r(x,\beta)\}^k \diff G(\beta) = \int_A \{r(x,\beta)\}^k \diff G'(\beta).
		\label{eq:mml_equal}
	\end{equation}
	Then 
	\begin{equation}
		|f^G_x(y) - f^{G'}_x(y)|\leq \frac{1}{2 \pi \sigma } \left( \frac{c^2a^2e}{2(m+1)}\right)^{m+1}   + \frac{2e^{-a^2/2}}{(2\pi)^{1/2}\sigma}.
		\label{eq:mml_result}
	\end{equation}
\end{lemma}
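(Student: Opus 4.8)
The plan is to control $|f^G_x(y) - f^{G'}_x(y)|$ by writing the defining integral as a contribution from $A$ plus a tail, and handling the two pieces separately. Set $g(t) := \sigma^{-1}\phi\bigl((y-t)/\sigma\bigr)$, so that $f^G_x(y) = \int g(r(x,\beta))\,dG(\beta)$, and decompose this as $\int_A g(r(x,\beta))\,dG(\beta) + \int_{A^c} g(r(x,\beta))\,dG(\beta)$, and similarly for $G'$. Since $\mathring{O}((x,y),a) \subseteq A$, every $\beta \notin A$ satisfies $|y - r(x,\beta)|/\sigma \ge a$, hence $g(r(x,\beta)) \le \sigma^{-1}\phi(a) = (2\pi)^{-1/2}\sigma^{-1} e^{-a^2/2}$; as $G$ and $G'$ are probability measures, the two tail integrals are each at most this quantity, which accounts for the second term $2e^{-a^2/2}/((2\pi)^{1/2}\sigma)$ of \eqref{eq:mml_result}.

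For the part over $A$, the crucial point is that the moment-matching hypothesis \eqref{eq:mml_equal} says precisely that $\int_A Q(r(x,\beta))\,dG(\beta) = \int_A Q(r(x,\beta))\,dG'(\beta)$ for \emph{every} real polynomial $Q$ of degree at most $2m$ — and no structural assumption on $r$ is needed for this, which is exactly why the hypothesis is phrased in terms of the moments of $r(x,\beta)$ rather than of $\beta$. I would therefore pick $Q$ to be a polynomial approximation of $g$ that is accurate on the range of values taken by $r(x,\beta)$ for $\beta \in A$. Writing $s := (y - r(x,\beta))/\sigma$, the inclusion $A \subseteq O((x,y),ca)$ gives $|s| \le ca$ there; truncating the series $e^{-s^2/2} = \sum_{j \ge 0} (-s^2/2)^j/j!$ at $j = m$ produces a polynomial of degree $2m$ in $s$, hence a polynomial $Q$ of degree $2m$ in $t = r(x,\beta)$, with Lagrange remainder at most $(s^2/2)^{m+1}/(m+1)! \le (c^2 a^2/2)^{m+1}/(m+1)!$; using the elementary bound $(m+1)! \ge ((m+1)/e)^{m+1}$ turns this into $\bigl(c^2 a^2 e/(2(m+1))\bigr)^{m+1}$. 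Dividing by $(2\pi)^{1/2}\sigma$ bounds $|g(r(x,\beta)) - Q(r(x,\beta))|$ uniformly over $\beta \in A$.

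Combining, $\bigl|\int_A g(r(x,\beta))\,d(G-G')\bigr| = \bigl|\int_A (g - Q)(r(x,\beta))\,d(G - G')\bigr| \le \int_A |g - Q|\,dG + \int_A |g - Q|\,dG'$, which is at most twice the uniform bound above; adding the tail estimate yields an inequality of the shape of \eqref{eq:mml_result}. Reading off the precise leading constant — and, if desired, sharpening it to $1/(2\pi\sigma)$ on the main term — is exactly the one-dimensional Gaussian-location-mixture computation of \citet[Lemma 1]{jiang2009general} and \citet[Lemma D.2]{saha2020nonparametric}, which one may invoke directly by applying it to the pushforwards of $G|_A$ and $G'|_A$ under $\beta \mapsto r(x,\beta)$ (finite measures on $\RS$ of equal total mass, supported in $[y - ca\sigma,\, y + ca\sigma]$, with equal moments up to order $2m$). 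I do not expect a genuine obstacle here; the only points that need care are lining up the set inclusions with the interval $|s| \le ca$ on which the Gaussian must be approximated, keeping the approximating polynomial of degree at most $2m$ so that \eqref{eq:mml_equal} applies verbatim, and using a clean Stirling-type estimate for $(m+1)!$ so that the remainder takes the closed form stated in \eqref{eq:mml_result}.
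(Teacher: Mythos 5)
Your argument is correct in substance, but it is worth comparing it with the paper's proof, which consists of a single sentence: the lemma is obtained by citing the univariate moment-matching results of \citet[Lemma 1]{jiang2009general} and \citet[Lemma D.2]{saha2020nonparametric}, with the reduction from $\beta\in\mathbb{R}^p$ to the real line being exactly the pushforward under $\beta\mapsto r(x,\beta)$ that you describe at the end (the integrand depends on $\beta$ only through $r(x,\beta)$, and \eqref{eq:mml_equal} with $k=0,\dots,2m$ says the pushforwards of $G|_A$ and $G'|_A$ have equal mass and equal moments up to order $2m$, supported in $[y-ca\sigma,\,y+ca\sigma]$). So your fallback route \emph{is} the paper's proof, while your Taylor-expansion route is a genuinely more elementary, self-contained alternative. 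Two caveats on that self-contained version. First, as you write it, it proves \eqref{eq:mml_result} only up to the absolute constant on the main term: bounding $\int_A|g-Q|\,dG+\int_A|g-Q|\,dG'$ by twice the uniform Lagrange bound gives the leading factor $2/(\sqrt{2\pi}\sigma)$, not the stated $1/(2\pi\sigma)$ (you can shave the factor $2$ by noting that the Lagrange remainder of $e^{-u}$ at order $m$ has fixed sign $(-1)^{m+1}$ for $u>0$, so $g-Q$ does not change sign on $A$, but that still leaves $1/(\sqrt{2\pi}\sigma)$). This is harmless for the paper, since Lemma~\ref{lemma:abdm} and everything downstream absorb such factors into absolute or $p$-dependent constants, but strictly speaking the sharper constant in \eqref{eq:mml_result} is only delivered by invoking the cited univariate lemma, as you acknowledge. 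Second, your tail estimate uses that every $\beta\notin A$ satisfies $|y-r(x,\beta)|/\sigma\ge a$, which follows from $\mathring{O}((x,y),a)\subseteq A$ only for $\beta\in K$; the lemma should therefore be read with $G,G'$ supported on $K$ (the paper's statement has the same looseness, and in its only application both measures are supported on $K$), or else the tail mass outside $K$ must be handled separately. With these two points noted, your decomposition, the degree-$2m$ bookkeeping in $t=r(x,\beta)$, and the use of $(m+1)!\ge((m+1)/e)^{m+1}$ are all sound.
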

\begin{proof}[Proof of Lemma~\ref{lemma:mml}]
	This result follows from the moment matching lemma for the univariate Gaussian location mixtures in  \citet[Lemma 1]{jiang2009general} or \citet[Lemma D.2]{saha2020nonparametric}. These results are stated for the $\sigma = 1$ case but the extension to arbitrary $\sigma$ is straightforward. 
\end{proof}

\begin{lemma} 	\label{lemma:abdm}
	Let $G$ be a probability measure supported on $K$. For every $a \geq 1$, there exists a discrete probability measure $G'$ supported on at most
	\begin{equation}
		(2\lfloor 13.5 a^2\rfloor \zeta+1)^\pb N(a \sigma/\lipsz_{S_0}, K)+1,
		\label{eq:abdm_ellbound}
	\end{equation}
	points in $K$ such that 
	\begin{equation}
		\sup_{(x,y) \in S_0 \times \RS 
		} |f^G_x(y) - f^{G'}_x(y)| \leq \left(1+\frac{1}{\sqrt{2\pi}}\right)\frac{e^{-a^2/2}}{ (2\pi)^{1/2} \sigma}.
		\label{eq:abdm_result}
	\end{equation}
\end{lemma}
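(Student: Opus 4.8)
\emph{The plan} is to extend the Gaussian location mixture discretization of \citet{zhang2009generalized} to the present setting, where the ``location'' $r(x,\beta)$ is a degree-$\le\zeta$ polynomial in $\beta$ and also depends on the nuisance variable $x\in S_0$. Throughout set $m:=\lfloor 13.5\,a^2\rfloor$, so that $m\ge 1$ because $a\ge 1$.

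\emph{Partition and piecewise moment matching.} First I would cover $K$ by $N:=N(a\sigma/\lipsz_{S_0},K)$ Euclidean balls of radius $a\sigma/\lipsz_{S_0}$ and disjointify them into Borel pieces $A_1,\dots,A_N$, each of diameter at most $2a\sigma/\lipsz_{S_0}$, so that for every $x\in S_0$ the oscillation of $r(x,\cdot)$ over any single $A_j$ is at most $2a\sigma$. On each $A_j$ I would replace $G|_{A_j}$ by a discrete sub-probability measure $G'_j$ on $\overline{A_j}\subseteq K$ with the same total mass $G(A_j)$ that matches all monomial moments up to degree $2m\zeta$, i.e.\ $\int\beta^\alpha\,dG'_j=\int_{A_j}\beta^\alpha\,dG$ for every multi-index $\alpha$ with $|\alpha|\le 2m\zeta$. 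Such a $G'_j$ with at most $\binom{2m\zeta+\pb}{\pb}\le(2\lfloor13.5a^2\rfloor\zeta+1)^{\pb}$ atoms exists by Carath\'eodory's theorem applied to the moment map $\beta\mapsto(\beta^\alpha)_{|\alpha|\le 2m\zeta}$ (whose image lies in the affine hyperplane where the $\alpha=0$ coordinate equals $1$, and the barycentre of the pushforward of $G|_{A_j}/G(A_j)$ lies in the compact convex hull of the image of $\overline{A_j}$). Then $G':=\sum_{j=1}^N G'_j$ is a discrete probability measure with at most $N(2\lfloor13.5a^2\rfloor\zeta+1)^{\pb}$ atoms, within the bound \eqref{eq:abdm_ellbound}. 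The crucial point is that $\beta\mapsto r(x,\beta)^k$ is a polynomial of degree at most $\zeta k\le 2m\zeta$, so the construction also forces $\int_{A_j}r(x,\beta)^k\,dG=\int r(x,\beta)^k\,dG'_j$ for every $x$, every $j$, and every $0\le k\le 2m$, hence the same identities over any union of the $A_j$, uniformly in $x$.

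\emph{The sup bound at a fixed $(x,y)$.} Fix $(x,y)\in S_0\times\RS$ and write $f^G_x(y)-f^{G'}_x(y)=\sum_{j=1}^N\Delta_j$ with $\Delta_j:=\int_{A_j}\tfrac1\sigma\phi\!\big(\tfrac{y-r(x,\beta)}\sigma\big)dG(\beta)-\int\tfrac1\sigma\phi\!\big(\tfrac{y-r(x,\beta)}\sigma\big)dG'_j(\beta)$. Call $A_j$ \emph{near} if it meets $\mathring O((x,y),a)=\{\beta\in K:|y-r(x,\beta)|<a\sigma\}$ and \emph{far} otherwise. For a far $j$, every $\beta\in\overline{A_j}$ has $|y-r(x,\beta)|\ge a\sigma$, so the Gaussian weight is at most $\phi(a)/\sigma=e^{-a^2/2}/((2\pi)^{1/2}\sigma)$ on $\overline{A_j}$; since $G|_{A_j}$ and $G'_j$ have the same mass $G(A_j)$, this gives $|\Delta_j|\le\tfrac{e^{-a^2/2}}{(2\pi)^{1/2}\sigma}\,G(A_j)$, and summing, $\big|\sum_{j\ \mathrm{far}}\Delta_j\big|\le e^{-a^2/2}/((2\pi)^{1/2}\sigma)$. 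For the near pieces, the oscillation bound gives $\mathring O((x,y),a)\subseteq\bigcup_{j\ \mathrm{near}}A_j\subseteq\bigcup_{j\ \mathrm{near}}\overline{A_j}\subseteq O((x,y),3a)$; the measures $\sum_{j\ \mathrm{near}}G|_{A_j}$ and $\sum_{j\ \mathrm{near}}G'_j$ have equal mass, are both supported on $\bigcup_{j\ \mathrm{near}}\overline{A_j}$, and match the moments of $r(x,\cdot)^k$ for $k\le 2m$ there, so the polynomial-approximation estimate behind Lemma~\ref{lemma:mml} (with $c=3$), applied to the normalised restrictions, bounds $\big|\sum_{j\ \mathrm{near}}\Delta_j\big|$ by $\tfrac1{2\pi\sigma}\big(\tfrac{9a^2e}{2(m+1)}\big)^{m+1}$ — the Gaussian-tail term appearing in that lemma is absent here since both measures live inside the sandwiching set. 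Since $m+1\ge 13.5a^2>\tfrac{9e}{2}a^2$ the base $\tfrac{9a^2e}{2(m+1)}$ is strictly below $1$, and $m+1\ge 13.5a^2$ in fact makes $\big(\tfrac{9a^2e}{2(m+1)}\big)^{m+1}\le e^{-a^2/2}$, so $\big|\sum_{j\ \mathrm{near}}\Delta_j\big|\le e^{-a^2/2}/(2\pi\sigma)$. Adding the two contributions gives, uniformly in $(x,y)$,
\[
\bigl|f^G_x(y)-f^{G'}_x(y)\bigr|\le\frac{e^{-a^2/2}}{(2\pi)^{1/2}\sigma}+\frac{e^{-a^2/2}}{2\pi\sigma}=\Bigl(1+\tfrac1{\sqrt{2\pi}}\Bigr)\frac{e^{-a^2/2}}{(2\pi)^{1/2}\sigma},
\]
which is \eqref{eq:abdm_result}.

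\emph{Where the difficulty lies.} The main obstacle is to make the near/far split \emph{uniform in $(x,y)$}: the union of near pieces must be squeezed between $\mathring O((x,y),a)$ and $O((x,y),Ca)$ for \emph{every} $(x,y)$ at once. This is exactly why the partition must control the oscillation of $r(x,\cdot)$ uniformly over $S_0$, and why we match the moments of the monomials $\beta^\alpha$ — an $x$-free requirement — rather than those of the $x$-dependent functions $r(x,\cdot)^k$, the degree bound $\deg_\beta r(x,\cdot)^k\le\zeta k$ doing the transfer. A second, milder point is that Lemma~\ref{lemma:mml} cannot be quoted verbatim for the near part: its stated bound carries an extra $2e^{-a^2/2}/((2\pi)^{1/2}\sigma)$ term that would overshoot the constant $1+1/\sqrt{2\pi}$, so one must return to its proof and use only the polynomial-approximation step, the genuine tail being already (and more tightly) accounted for by the far-piece estimate. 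The exact ``$+1$'' in \eqref{eq:abdm_ellbound}, and the elementary check that the radius $a\sigma/\lipsz_{S_0}$ (hence $c=3$) is compatible with $m=\lfloor13.5a^2\rfloor$, are routine.
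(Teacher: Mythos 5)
Your proposal is correct and follows essentially the same route as the paper's proof: the same partition of $K$ into pieces over which $r(x,\cdot)$ oscillates by at most $2a\sigma$ uniformly in $x\in S_0$ (the paper phrases this via the pseudometric $\dbeta$ and converts to the Euclidean covering number at the end, while you start directly from the Euclidean cover), the same piecewise matching of polynomial moments of degree up to $2m\zeta$ via Carath\'eodory with the same atom count, the same $c=3$ sandwiching of the near pieces, and the same use of the moment-matching estimate of Lemma~\ref{lemma:mml}. The one point where you genuinely tighten the write-up is the tail: your near/far split with equal masses on each far piece yields the tail contribution $e^{-a^2/2}/(\sqrt{2\pi}\sigma)$ with factor one, whereas the paper's proof quotes Lemma~\ref{lemma:mml} as if its tail term were $e^{-a^2/2}/(\sqrt{2\pi}\sigma)$ even though the lemma as stated carries $2e^{-a^2/2}/(\sqrt{2\pi}\sigma)$ (which taken literally would give the constant $1+2/\sqrt{2\pi}$), so your variant is the cleaner way to land exactly on $1+1/\sqrt{2\pi}$.
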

\begin{proof}[Proof of Lemma~\ref{lemma:abdm}]
	Let us introduce a pseudometric $\dbeta$ on $K$ as
	\begin{equation}
		\dbeta(\beta_1,\beta_2) = \sup_{x \in S_0} | r(x,\beta_1) - r(x,\beta_2)|/\sigma.
		\label{eq:amdm_metric}
	\end{equation}
	Fix $a \geq 1$ and let $L: = N(a,K,d)$ denote the $a$-covering number of $K$ under the pseudometric $\dbeta$.  
	Let $E_1,\dots,E_L$ denote balls of radius $a$ (with respect
        to $\dbeta$) within $K$ whose union is equal to $K$. We define
        $B_1 = E_1$ and $B_i = E_i \cap (\cup_{j=1}^{i-1} B_j)^c$ for
        $i = 2,\dots,L$. Let $m  = \lfloor 13.5 a^2\rfloor$ and let
        $T_{int}$ denote the collection of the following
        $(2m\zeta+1)^pL$-dimensional vectors:
        \begin{equation*}
\left( \int \beta_1^{k_1}\dots \beta_\pb^{k_\pb}\mathbb{I}\{ \beta \in B_i\} \diff G(\beta)\right)_{ 0 \leq k_1,\dots, k_\pb \leq2 m \zeta, 1\leq i \leq L}           
\end{equation*}
as $G$ ranges over the class of all probability  measures over $K$. By standard results, it follows that $T_{int}$ is the convex hull of 
	\[
	T: = \left\{\left( \beta_1^{k_1}\dots \beta_\pb^{k_\pb}\mathbb{I}\{ \beta \in B_i\} \right)_{ 0 \leq k_1,\dots,k_\pb \leq2 m \zeta, 1\leq i \leq L}: \beta \in  K\right\}.
	\]
	This follows, for example, from \citet[Theorem 6.3]{parthasarathy2005probability} and the fact that $T$ is closed. 
	Notice that both $T_{int}$ and $T$ lie in the Euclidean space
        of dimension $(2m\zeta +1)^\pb L$. By Carath\'eodory's
        theorem, any vector in $T_{int}$ can be written as a convex
        combination of at most $\{(2m\zeta +1)^\pb L+1\}$ elements in
        $T$. This implies that for every probability measure $G$ on
        $K$, there exists a discrete measure $G'$ which is supported
        on a discrete subset of $K$ of cardinality at most $\{(2m\zeta
        +1)^\pb L+1\}$ such that  
	\begin{equation}\label{mmat}
		\int_{B_i} \beta_1^{k_1}\dots \beta_\pb^{k_\pb} \diff
                G(\beta) = \int_{B_i} \beta_1^{k_1}\dots
                \beta_\pb^{k_\pb} \diff G'(\beta) 
              \end{equation}
 for all $0 \leq
                k_1,\dots,k_\pb \leq 2m\zeta$ and all $1 \leq i
                \leq L$.                
	Fix $x \in S_0$ and $y \in \RS$. We shall prove the bound \eqref{eq:abdm_result} for $|f_x^G(y) - f_{x}^{G'}(y)|$ by using Lemma \ref{lemma:mml}. First note that since $\mathring{O}((x,y),a)$ is contained in $K$, the sets $B_1,\dots,B_L$ cover $\mathring{O}((x,y),a)$. Let $F := \{  1\leq i \leq L:B_i \bigcap  \mathring{O}((x,y),a)\neq \emptyset\}$ so that 
	\[
	\mathring{O}((x,y),a) \subseteq \bigcup_{i\in F} B_i.
	\]
	We shall prove below that 
	\begin{equation}\label{temong}
		\bigcup_{i\in F} B_i \subseteq O((x,y), 3a), 
	\end{equation}
	which will enable us to apply Lemma \ref{lemma:mml} with $A = \bigcup_{i \in F} B_i$. To see \eqref{temong}, note that for each fixed $i \in F$, there exists $\beta_0 \in B_i$ such that $\beta_0 \in \mathring{O}((x,y),a)$, i.e., $|y - r(x, \beta_0)|/\sigma \leq a$. As the diameter of $B_i$ (under the metric $\dbeta$) is at most $2a$, it follows  that $\dbeta(\beta, \beta_0) \leq 2a$ for every $\beta \in B_i$. Consequently,
	\[
	| y - r(x,\beta)|/\sigma  \leq | y - r(x, \beta_0)|/\sigma + | r(x, \beta) - r(x, \beta_0) |/\sigma \\
	\leq a + \dbeta(\beta,\beta_0) \leq 3a.
	\]
	This proves \eqref{temong}. In order to apply Lemma \ref{lemma:mml}, we need to check that inequalty \eqref{eq:mml_equal} holds. This basically follows from \eqref{mmat} and the fact that $r(x, \beta)$ is assumed to be a polynomial function of the components of $\beta$ with degree $\zeta$ (this will ensure that the terms being integrated on both sides of \eqref{eq:mml_equal} are polynomials of components of $\beta$ with degree up to $2m\zeta$). Lemma \ref{lemma:mml} can thus be applied (with $A = \bigcup_{i \in F} B_i$ and $c = 3$), which gives 
	\[
	|f^G_x(y) - f^{G'}_x(y)|\leq \frac{1}{2 \pi \sigma} \left( \frac{9a^2e}{2(m+1)}\right)^{m+1}   + \frac{e^{-a^2/2}}{(2\pi)^{1/2}\sigma}.
	\]
	Because $m= \lfloor 13.5a^2\rfloor$, we have $m+1 \geq 13.5 a^2$ and
	\[ 
	\left( \frac{9 a^2 e}{2(m+1)}\right)^{m+1} \leq \left(\frac{e}{3}\right)^{m+1} \leq \exp( - \frac{m+1}{12}) \leq \exp\left(-\frac{27 a^2}{24}\right) \leq e^{-a^2/2},
	\]
	where we used the simple fact that $e/3\leq e^{-1/12}$. This proves \eqref{eq:abdm_result}. It remains to prove that the cardinality of the support of $G'$ is at most \eqref{eq:abdm_ellbound}. As we have already seen that the cardinality of the support of $G'$ is at most $\{(2m\zeta +1)^\pb L+1\}$, we only need to show that $L= N(a, K,d)$ is at most the Euclidean covering number $N(a \sigma/\lipsz_{S_0}, K)$. For this, note that by definition of $\lipsz_{S_0}$, we have
	\begin{equation*}
		\dbeta(\beta_1,\beta_2) = \sup_{x\in S_0} | r(x,\beta_1) - r(x,\beta_2) |/\sigma \leq \lipsz_{S_0} \sigma^{-1}\| \beta_1 - \beta_2\|,
	\end{equation*}
	for every $\beta_1, \beta_2$. This gives
	\begin{equation}\label{dbeuc} 
		N(a, K, \dbeta) \leq N(a \sigma/\lipsz_{S_0}, K), 
	\end{equation}
	which completes the proof of Lemma \ref{lemma:abdm}. 
\end{proof}

\begin{proof}[Proof of Theorem~\ref{theorem:metric_entropy}]
	Fix a probability measure $G$ that is supported on $K$. By Lemma~\ref{lemma:abdm}, for each fixed $a \geq 1$, there exists a probability measure $G'$ supported on $K$  such that
	\[
	\sup_{(x,y) \in S_0\times \RS} |f^G_x(y) - f^{G'}_x(y)| \leq \left(1+\frac{1}{\sqrt{2\pi}}\right)\frac{e^{-a^2/2}}{(2\pi)^{1/2}\sigma},
	\]
	and such that the cardinality of the support of $G'$ is at most $\ell$ where $\ell$ is given by \eqref{eq:abdm_ellbound}.
	
	Now let $\alpha = \nu = e^{-a^2/2}$. Let $s_1, \dots, s_{N_1}$ be an $\alpha$-covering of $K$ under the $\dbeta$ pseudometric (defined in \eqref{eq:amdm_metric}), where (via \eqref{dbeuc})
	\begin{equation}\label{tte1}
		N_1 := N(\alpha, K, \dbeta) \leq N(\alpha \sigma/\lipsz_{S_0}, K).
	\end{equation}
	Also let $t_1, \dots, t_{N_2}$ be a $\nu$-covering of the probability simplex $\Delta_{\ell} := \{(p_1, \dots, p_{\ell}) : p_j \geq 0, \sum_{j} p_j = 1\}$ under the $L^1$-metric $(p, q) \mapsto \sum_{j} |p_j - q_j|$ where $N_2 := N(\nu, \Delta_{\ell}, L_1)$.  We can write $G' = \sum_{i=1}^\ell w_i \delta_{a_i}$ for some $(w_1, \dots, w_{\ell}) \in \Delta_{\ell}$ and $a_1, \dots, a_{\ell} \in K$. Since  $s_1,\dots,s_{N_1}$ form an $\alpha$-covering of $K$, we  can find $\ell$ (not necessarily distinct) elements $s_{G'1},\dots, s_{G'\ell}$ from  $\{s_1,\dots,s_{N_1}\}$ such  that $\dbeta(a_i,s_{G'i})\leq \alpha, i = 1, \dots, \ell$. Letting $G''= \sum_{i=1}^\ell w_i \delta_{s_{G'i}}$, we have 
	\[
	\begin{split}
		|f_x^{G'}(y) - f_x^{G''}(y)|& = \frac{1}{\sigma} \left|\sum_{i=1}^\ell w_i \phi \left(\frac{y - r(x,a_i)}{\sigma} \right) - \sum_{i=1}^\ell w_i \phi \left(\frac{y -r(x, s_{G'i})}{\sigma} \right) \right|\\
		&\leq \frac{1}{\sigma} \sum_{i=1}^\ell w_i \cdot  \left| \phi \left(\frac{y - r(x,a_i)}{\sigma} \right) -  \phi\left(\frac{y -r(x, s_{G'i})}{\sigma} \right) \right| \\
		& \leq \frac{1}{\sigma} \sum_{i=1}^\ell w_i \cdot\sup_{z }|\phi'(z)|\cdot \dbeta (a_i, s_{G'i}) \leq \alpha \frac{e^{-1/2}}{(2\pi)^{1/2} \sigma}
	\end{split}
	\]
	for every $x \in S_0$ and $y \in \RS$. Also since $t_1,\dots, t_{N_2}$ is a $\nu$-covering of $\Delta_{\ell}$ under the $L^1$ metric, there exist $t_{G'1}, \dots, t_{G'\ell}$ from $\{t_1,\dots, t_{N_2} \}$ such that $\sum_{i=1}^\ell|t_{G'i} - w_i| \leq \nu$. Denote $G''' = \sum_{i=1}^\ell t_{G'i} \delta_{s_{G'i}} $, then for every $x\in S_0$ and any $y \in \RS$, we have
	\[
	\begin{split}
		|f_x^{G''}(y) - f_x^{G'''}(y)|& = \frac{1}{\sigma}\left|\sum_{i=1}^\ell w_i \phi\left(\frac{y - r(x,s_{G'i})}{\sigma}\right) - \sum_{i=1}^\ell t_{G'i} \phi \left( \frac{y -r(x, s_{G'i})}{\sigma} \right)\right|\\
		&\leq \frac{1}{\sigma} \sum_{i=1}^\ell |w_i - t_{G'i}|\cdot  \phi \left(\frac{y -r(x, s_{G'i})}{\sigma} \right) \leq  \frac{\nu}{\sigma} \cdot \sup_{z }|\phi(z)|   \leq \frac{\nu}{\sigma} \frac{1}{(2\pi)^{1/2}}.
	\end{split}
	\]
	Combining three inequalities together, we have
	\begin{equation}
		\begin{split}
			|f_x^{G}(y) - f_x^{G'''}(y)|&\leq 
			|f^G_x(y) - f^{G'}_x(y)| + |f_x^{G'}(y) - f_x^{G''}(y)| + |f_x^{G''}(y) - f_x^{G'''}(y)| \\
			&\leq \left(1+(2\pi)^{-1/2}\right)\frac{e^{-a^2/2}}{(2\pi)^{1/2}\sigma}+\alpha \frac{e^{-1/2}}{(2\pi)^{1/2}\sigma}+\nu \frac{1}{(2\pi)^{1/2}\sigma}
		\end{split}
		\label{eq:G_difference_3}
	\end{equation}
	for all $x\in S_0$ and $y \in \RS$. We now take $\alpha = \nu = \eta \sigma/3$ so that $a = \{2\log(\alpha^{-1})\}^{1/2} = \{2\log(3 \sigma^{-1}\eta^{-1})\}^{1/2}$. The right hand side of \eqref{eq:G_difference_3} is bounded by
	\[
	\alpha/\sigma\left( 2(2\pi)^{-1/2} + (2\pi)^{-1} + (2\pi)^{-1/2} e^{-1/2}\right)
	\leq \eta. 
	\]
	Therefore, as $G'''$ varies, the collection of functions $(x, y) \mapsto f_x^{G'''}(y)$ forms an $\eta$-covering of $\mathcal{M}_K$ under the metric $\| \cdot \|_{\infty, S_0 \times \RS}$. It remains to bound the cardinality of this collection which equals $\binom{N_1}{\ell} N_2$. Thus
	\begin{equation*}
		\log N(\eta, \mathcal{M}_K, \| \cdot\|_{\infty, S_0 \times \RS}) \leq \log \binom{N_1}{\ell} + \log N_2 .
	\end{equation*}
	By Stirling's formula, 
	\[\binom{N_1}{\ell} \leq \frac{N_1^\ell}{\ell!}\leq \left(\frac{N_1 e}{\ell}\right)^\ell.
	\] 
	By \eqref{tte1} and \eqref{eq:abdm_ellbound}, we have $N_1 \leq \ell$ so that $\log \binom{N_1}{\ell} \leq \ell$. Also $N_2$  is the $\nu$-covering number of $\Delta_{\ell}$ under the $L^1$-metric which implies, by a well known result, that $\log N_2 \leq \ell \log (1+2/v)$. We thus get  
	\begin{equation*}
		\log N(\eta, \mathcal{M}_K, \| \cdot\|_{\infty, S_0 \times \RS}) \leq \ell (\log(1+2/\nu) + 1).
	\end{equation*}
	By $1/\nu = 3 \sigma^{-1} \eta^{-1}$ and $\eta < e^{-1} \sigma^{-1}$, we get
	\begin{equation}\label{eq:entropy_ell}
		\log N(\eta, \mathcal{M}_K, \| \cdot\|_{\infty, S_0 \times \RS}) \leq \ell (\log(1+2/\nu) + 1) \leq C \ell \log (\sigma^{-1} \eta^{-1})
	\end{equation}
	for a universal constant $C$. It also follows from \eqref{eq:abdm_ellbound} that
	\begin{equation*}
		\ell = (2\lfloor 13.5 a^2\rfloor \zeta +1)^\pb N(a \sigma/\lipsz_{S_0},K) \leq C_\pb\{\log(\sigma^{-1}\eta^{-1})\}^\pb \zeta^\pb N \left(\frac{\sigma}{\lipsz_{S_0}} \sqrt{2 \log \frac{3}{\sigma \eta}},K \right). 
	\end{equation*}
	This, combined with \eqref{eq:entropy_ell}, completes the proof of Theorem \ref{theorem:metric_entropy}. 
      \end{proof}

\subsection{Proof of Theorem~\ref{thm:onedim_selfregu}}
\label{section:proof_selfregu}

We introduce the Jensen's formula, which is a classic result in complex analysis (see e.g. \citet[Chapter 5]{stein2010complex}). It is a useful tool for analyzing holomorphic functions and their zeros. The version we present here is adapted to our context. 

\begin{lemma}[Jensen's Formula]\label{lemma:jensen_formula}
	Let $\Omega$ be an open set that contains the closure of a disc
	$D_R$ and suppose that f is holomorphic in $\Omega$, $f(0)\neq 0$, and $f$ vanishes
	nowhere on the circle $C_R$. If $z_1,\dots,z_N$ denote the zeros of $f$ inside the
	disc (counted with multiplicities), then
	\[
	\log |f(0)| = \sum_{k=1}^N \log \left(  \frac{|z_k|}{R} \right) + \frac{1}{2 \pi} 
	\int_{0}^{2 \pi} \log |f (R e^{i\theta})| \diff \theta.
	\]
Further, if $\mathfrak{n}(r)$ denotes the number of zeros of $f$ in disk $D_r$, then
\begin{equation}\label{eq:jensen_corollary}
 \int_{0}^R \frac{\mathfrak{n}(r)}{r} \diff r 
 = \frac{1}{2 \pi} 
 \int_{0}^{2 \pi} \log |f (R e^{i\theta})| \diff \theta - \log |f(0)|.
\end{equation}
\end{lemma}

\begin{proof}[Proof of Theorem~\ref{thm:onedim_selfregu}]
	

We note that $x_i \neq 0$  for all $i$ because of the restriction $|x_i/x_j| \leq r_0$ for all $i$ and $j$. 
The minimum and maximum of $y_i/x_i, 1\leq i \leq n $ are
denoted by $\beta_{\min}$ and $\beta_{\max}$ respectively.

We claim a basic fact that any support point $\tilde{\beta}$ of NPMLE
$\hat{G}$ must lie in the interval $[\beta_{\min}, \beta_{\max}]$. The
validity of this fact can be shown by contradiction. If it is not
true, we can move the support point $\tilde{\beta}$ from outside of
the interval $[\beta_{\min}, \beta_{\max}]$ to $\beta_{\min}$ or
$\beta_{\max}$,  whichever that is closer to $\tilde{\beta}$. After
the move, the function   
$D(\hat{G}, \tilde{\beta})$ defined in \eqref{eq:D_G_beta} is strictly
increased, which is a contraction with the optimal condition
$D(\hat{G}, \beta) \leq 0$ for all $\beta$. Lastly, if $\beta_{\max} =
\beta_{\min}$, it means that the ratio of $y_i/x_i$ are equal to
$\beta_{\min} = \beta_{\max}$ for all $i = 1,\dots, n$. Thus the NPMLE
has only one support point at $\beta_{\min} = \beta_{\max}$ and the
conclusion of this theorem holds trivially. Therefore, we focus on the
non-degenerate case $\beta_{\max} > \beta_{\min}$ from now on.

Based on Proposition~\ref{prop:opt_condition}, all support points of
$\hat{G}$ (except those lying on the boundary of $K$) are critical
points of $\beta \mapsto D(\hat{G}, \beta)$. Therefore for $p= 1$, the
number of  support points in $\hat{G}$ is at most $2$ plus the number
of zeros of $D'(\hat{G}, \beta):= \frac{d }{d\beta} D(\hat{G},
\beta)$. For mathematical convenience, we define $g(\beta) :=
\frac{\diff D(\hat{G},\beta + \beta_{\min} -\Delta )}{\diff \beta}$
over $\beta \in [\Delta, \beta_{\max}- \beta_{\min}+\Delta]$, where
$\Delta$ is some positive number in $(\frac{1}{2}(\beta_{\max}-
\beta_{\min}), \beta_{\max}- \beta_{\min})$ such that $g(0) \neq 0$.
Such  a $\Delta$  is always feasible because the analytic function
$\frac{\diff D(\hat{G},\beta)}{\diff \beta}$ cannot be uniformly all
$0$ within the interval $[\beta_{\min} - (\beta_{\max}- \beta_{\min}),
\beta_{\min} - \frac{1}{2} (\beta_{\max}- \beta_{\min})]$. Note here
we shift the position of origin by $\beta_{\min}- \Delta$ when
defining $g(\beta)$, and the condition $g(0) \neq 0$ will be necessary
when we invoke \eqref{eq:jensen_corollary} later. Next, we will
expand  $g(\beta)$ to the complex domain and bound the number of zeros
of $g(\beta)$ over the disc of radius $\mathfrak{R} := \beta_{\max} -
\beta_{\min} +\Delta \in (2\Delta, 3 \Delta)$, which is naturally an
upper bound of the zeros over $[-\mathfrak{R}, \mathfrak{R}]$.  

By Jensen's formula \eqref{eq:jensen_corollary}, we have
\[
\int_{0}^{2\mathfrak{R}} \frac{\mathfrak{n}(r)}{r} \diff r =
\frac{1}{2\pi} \int_{0}^{2 \pi} \log |g (2\mathfrak{R} e^{i\theta})|
\diff \theta - \log |g(0)| \leq \sup_{\theta} \log |f (2\mathfrak{R}
e^{i\theta})| - \log |g(0)|. 
\]

On the other hand, by the monotonicity and non-negativity of
$\mathfrak{n}(r)$, we have 
\[
\int_{0}^{2\mathfrak{R}} \frac{\mathfrak{n}(r)}{r} \diff r  \geq
\int_{\mathfrak{R}}^{2\mathfrak{R}} \frac{\mathfrak{n}(r)}{r} \diff r
\geq \mathfrak{n}(\mathfrak{R}) \cdot
\int_{\mathfrak{R}}^{2\mathfrak{R}} \frac{1}{r} \diff r =
\mathfrak{n}(\mathfrak{R}) \cdot \log 2. 
\]

Therefore, we have
\[
\mathfrak{n}(\mathfrak{R}) \leq \frac{1}{\log 2} \left[\sup_{\theta}
  \log |g (2\mathfrak{R} e^{i\theta})| - \log |g(0)|\right] =
\frac{1}{\log 2}  \sup_{\theta} \log \left|
  \frac{g(2\mathfrak{R}e^{i\theta}) }{g(0) } \right|. 
\]

By definition, 
\[
g(\beta) = \frac{1}{n} \sum_{i=1}^n \frac{2x_i[x_i(\beta+\beta_{\min}-
  \Delta)-y_i] f^{\beta+\beta_{\min}-
    \Delta}_{x_i}(y_i)}{f^{\hat{G}}_{x_i}(y_i)}. 
\]

We note that the numerator $f^{\hat{G}}_{x_i}(y_i)$ in $g(\beta)$ does
not depend on $\beta$, therefore 
\[
\begin{split}
	\sup_{\theta} \log \left| \frac{g(2\mathfrak{R}e^{i\theta})
  }{g(0) } \right| &  \leq  
	 \sup_\theta \log \max_{1 \leq j \leq n}  \left| \frac{
                     2x_j[x_j(2\mathfrak{R} e^{i\theta}+\beta_{\min}-
                     \Delta)-y_j] f^{2\mathfrak{R}
                     e^{i\theta}+\beta_{\min}- \Delta}_{x_j}(y_j)}{
                     2x_j[x_j(\beta_{\min}- \Delta)-y_j]
                     f^{\beta_{\min}- \Delta}_{x_j}(y_j) }  \right|\\ 
	&  \leq \log \sup_{\theta} \max_{1 \leq j \leq n} \left|
          \frac{x_j(2 \mathfrak{R} e^{i\theta} + \beta_{\min} -
          \Delta)- y_j }{x_j(\beta_{\min} - \Delta) - y_j} \right| +
          \log \sup_{\theta} \max_{1\leq j \leq n}  \left|
          \frac{f^{2\mathfrak{R} e^{i\theta}+\beta_{\min}-
          \Delta}_{x_j}(y_j)}{f^{\beta_{\min}- \Delta}_{x_j}(y_j)}
          \right|. 
\end{split}	
\]

Because 
\begin{equation}
y_j/x_j\in [\beta_{\min}, \beta_{\max}], \beta_{\max}-\beta_{\min} \in
(\Delta, 2\Delta), \mathfrak{R} \in (2\Delta, 3\Delta), 
\label{eq:jensen_basicbounds}
\end{equation}
we have
\[
\begin{split}
	\log \sup_{\theta} \max_{1 \leq j \leq n} \left| \frac{x_j(2
  \mathfrak{R} e^{i\theta} + \beta_{\min} - \Delta)- y_j
  }{x_j(\beta_{\min} - \Delta) - y_j} \right| & \leq  \log
                                                \sup_{\theta} \max_{1
                                                \leq j \leq n} \left|
                                                \frac{ 2 \mathfrak{R}
                                                e^{i\theta} +
                                                \beta_{\min} - \Delta
                                                -
                                                y_j/x_j}{\beta_{\min}-
                                                \Delta -
                                                y_j/x_j}\right|\\ 
	& \leq \frac{2\cdot 3\Delta + 3\Delta}{\Delta} = \log 9. 
\end{split}
\]

Additionally, we can bound the second term as follows. 
\begin{align}
&\log \sup_{\theta} \max_{1\leq j \leq n}  \left|  \frac{f^{2\mathfrak{R} e^{i\theta}+\beta_{\min}- \Delta}_{x_j}(y_j)}{f^{\beta_{\min}- \Delta}_{x_j}(y_j)}  \right|\notag \\
 = & \log \sup_{\theta} \max_{1\leq j \leq n}  \left| \exp \left\{-\frac{1}{2\sigma^2} \left[   x_j(2 \mathfrak{R} e^{i\theta} + \beta_{\min} - \Delta)- y_j   \right]^2 + \frac{1}{2\sigma^2} \left[   x_j(\beta_{\min} - \Delta) - y_j    \right]^2  \right\} \right| \notag\\
  = & \log \sup_{\theta} \max_{1\leq j \leq n} \exp \left\{\frac{x_j^2}{2\sigma^2} \cdot 2 \mathfrak{R} e^{i \theta} \cdot [2\mathfrak{R} e^{i \theta} + 2(\beta_{\min} - \Delta - y_j/x_j) ]    \right\} \notag\\
  \leq & \log \max_{1\leq j \leq n}\exp \left\{\frac{x_j^2}{2\sigma^2} (2\cdot 3\Delta)\cdot(2\cdot3\Delta+3\Delta)  \right\} \label{eq:jensenbound_exp}\\
  = &  27 \frac{\Delta^2}{\sigma^2} \max_{1\leq j \leq n} x_j^2 \notag\\
  \leq & 27 \frac{1}{\sigma^2} (\beta_{\max} - \beta_{\min})^2 \max_{1\leq j \leq n} x_j^2 \label{eq:jensenbound_explast}
\end{align}
where \eqref{eq:jensenbound_exp} follows from  \eqref{eq:jensen_basicbounds}, and  \eqref{eq:jensenbound_explast} follows from the fact that $\Delta \leq \beta_{\max}- \beta_{\min}$.  Consider the definition of $\beta_{\max}$ and $\beta_{\min}$ as well as the restriction that $|x_i  /x_j| \leq r_0$, \eqref{eq:jensenbound_explast}  can be further bounded as
\[
\begin{split}
	27 \frac{1}{\sigma^2} (\beta_{\max} - \beta_{\min})^2 \max_{1\leq j \leq n} x_j^2  \leq &
	108 \frac{1}{\sigma^2} r_0^2 \max_{1\leq j \leq n} |y_j|^2.
\end{split}
\]

To summarize, we have shown that
\[
\mathfrak{n}(\mathfrak{R}) \leq C_0 + C_1 \frac{r_0^2}{\sigma^2} \max_{1\leq j \leq n} |y_j|^2,
\]
where $C_0 = \frac{\log 9}{\log 2}$ and $C_1 = \frac{108}{\log 2}$ are constants that do not depend on problem parameters.

Because $\Vert x_j\Vert \leq R$, $\Vert \beta_j\Vert \leq B$,
\[
\max_{1\leq j \leq n} |y_j| \leq BR + \max_{1\leq j \leq n} |z_j|,
\]
where the error term $z_j\sim N(0,\sigma^2)$. 


Furthermore, since $z_j\sim N(0,\sigma^2)$ i.i.d., it holds that $\max_{1\leq j \leq n} |z_j| \leq \sigma \sqrt{2\log n} + \sigma \sqrt{2\log (1/\delta) }$ with probability at least $1-\delta$, where the bound $\mathbb{E} [\max_{1\leq j \leq n} |z_j|]\leq \sigma \sqrt{2\log n}$ is a well established result on maxima of $n$ Gaussians, and the probabilistic statement follows from a Gaussian process tail bound. Let $\delta = n^{-\tau}$ for $\tau>1$, it follows that $\max_{1\leq j \leq n} |z_j| \leq  (\sqrt{2} + \sqrt{2\tau}) \sigma \sqrt{\log n}$. Plugging back to the bound of $\max_{1\leq j \leq n} |y_j|$ and $\mathfrak{n}(\mathfrak{R})$, we have
\begin{equation}
	\mathfrak{n}(\mathfrak{R}) \leq C_0  + C_1 r_0^2 (B^2R^2\sigma^{-2} + 8\tau \log n)
\end{equation}
with probability at least $1- n^{-\tau}$. That is, for $n$ such that $\log n > \max\{C_0, C_1 r_0^2 B^2R^2 \sigma^{-2}\}$, we have $\mathfrak{n}(\mathfrak{R}) \leq \tau r_0^2 O(\log n)$ with probability at least $1- n^{-\tau}$.
\end{proof}

\newpage
\section{Additional Numerical Results: Fitted Coefficients for
  Simulations in Subsections \ref{sec:sinusoid_simulation} and
  \ref{sec:changepoint_simulation} } \label{subsec:detailed_coef}

\begin{table}[!htb]
	\centering
	\begin{tabular}{c|ccccccc|c} 
		\toprule
		{Method} & \multicolumn{7}{c|}{$\beta$} & $\pi$ \\
		\hline
		\multirow{4}{*}{{True parameters}} 
		& -2.143 & 4.008 & -0.188 & 2.584 & 1.136 & 1.039 & 2.849 & 0.250 \\
		& 1.060 & 0.719 & -1.263 & -2.457 & -1.195 & 1.807 & -0.052 & 0.250 \\
		& -0.809 & 1.219 & 0.943 & 1.938 & 1.394 & 1.584 & -1.140 & 0.250 \\
		& -4.251 & 1.949 & 1.916 & -3.289 & 1.666 & 0.383 & 0.489 & 0.250 \\
		\hline
		\multirow{9}{*}{{NPMLE with $\hat{\sigma}$}} 
		& 0.875 & 0.809 & -0.934 & -2.099 & -0.963 & 1.707 & -0.624 & 0.154 \\
		& -0.817 & 1.449 & 1.004 & 2.107 & 1.831 & 1.528 & -1.028 & 0.148 \\
		& -3.896 & 2.466 & 2.224 & -3.479 & 0.970 & 0.323 & 0.543 & 0.146 \\
		& -1.983 & 4.046 & -1.087 & 2.385 & 1.021 & 0.991 & 3.130 & 0.120 \\
		& -3.800 & 0.118 & 3.241 & -1.425 & 0.870 & -0.386 & 1.271 & 0.099 \\
		& -1.883 & 3.265 & 0.218 & 2.546 & 1.835 & 1.538 & 2.736 & 0.096 \\
		& -0.436 & 0.704 & 2.570 & 1.140 & 0.444 & 2.159 & -1.106 & 0.094 \\
		& -0.066 & 1.253 & -4.019 & -2.822 & 0.876 & 1.069 & -0.355 & 0.085 \\
		& -1.843 & 5.972 & 0.372 & -0.421 & 1.818 & 1.815 & 0.416 & 0.057 \\
		\hline
		\multirow{11}{*}{{NPMLE with true $\sigma$}} 
		& -0.817 & 1.449 & 1.004 & 2.107 & 1.831 & 1.528 & -1.028 & 0.157 \\
		& 0.875 & 0.809 & -0.934 & -2.099 & -0.963 & 1.707 & -0.624 & 0.148 \\
		& -3.896 & 2.466 & 2.224 & -3.479 & 0.970 & 0.323 & 0.543 & 0.133 \\
		& -3.800 & 0.118 & 3.241 & -1.425 & 0.870 & -0.386 & 1.271 & 0.102 \\
		& -0.066 & 1.253 & -4.019 & -2.822 & 0.876 & 1.069 & -0.355 & 0.084 \\
		& -1.883 & 3.265 & 0.218 & 2.546 & 1.835 & 1.538 & 2.736 & 0.079 \\
		& -1.983 & 4.046 & -1.087 & 2.385 & 1.021 & 0.991 & 3.130 & 0.071 \\
		& 0.067 & 0.695 & 1.435 & 1.286 & 0.084 & 2.283 & -1.100 & 0.061 \\
		& -2.175 & 4.329 & 0.123 & 3.317 & 0.838 & 0.845 & 3.046 & 0.060 \\
		& -1.843 & 5.972 & 0.372 & -0.421 & 1.818 & 1.815 & 0.416 & 0.059 \\
		& -1.379 & 3.306 & 2.137 & -1.004 & -1.654 & 2.696 & -0.227 & 0.047 \\
		\bottomrule
	\end{tabular}
	\caption{True and fitted mixtures  coefficients  for the sinusoid example.}
	\label{table:sinusoid_results}
\end{table}

\begin{table}[!htb]
	\small
	\centering
	\begin{tabular}{c|ccccc|c} 
		\toprule
		{Method} & \multicolumn{5}{c|}{$\beta$} & $\pi$ \\
		\hline
		\multirow{4}{*}{{True parameters}} 
		& 1.376 & 0.118 & 0.002 & 0.638 & -1.553 & 0.250 \\
		& 0.073 & 1.466 & 0.414 & 0.240 & -2.588 & 0.250 \\
		& 3.048 & -0.379 & -0.345 & -1.099 & 1.837 & 0.250 \\
		& -0.737 & -3.120 & 5.936 & -0.898 & -0.392 & 0.250 \\
		\hline
		\multirow{7}{*}{{NPMLE with $\hat{\sigma}$}} 
		& 1.201 & 0.354 & 0.354 & 0.054 & 1.097 & 0.264 \\
		& 3.062 & -1.554 & 0.553 & -0.095 & -1.350 & 0.187 \\
		& -0.446 & -3.501 & 6.099 & -0.844 & -0.844 & 0.129 \\
		& 0.208 & -4.133 & 5.905 & -0.696 & -0.696 & 0.121 \\
		& -0.769 & 3.557 & -0.731 & -0.719 & -0.719 & 0.103 \\
		& -0.541 & 2.078 & 0.265 & 0.265 & -2.349 & 0.100 \\
		& 2.844 & -0.200 & -0.614 & -0.644 & -1.863 & 0.096 \\
		\hline
		\multirow{7}{*}{{NPMLE with true $\sigma$}} 
		& 1.184 & 0.314 & 0.314 & 0.314 & -2.543 & 0.263 \\
		& 3.062 & -1.554 & 0.553 & -0.095 & -1.350 & 0.177 \\
		& -0.613 & -3.437 & 6.342 & -1.068 & -0.560 & 0.161 \\
		& 3.047 & -0.240 & -1.322 & -0.343 & -0.343 & 0.123 \\
		& 0.025 & 2.457 & -0.512 & -0.874 & 1.997 & 0.121 \\
		& -0.361 & -3.425 & 5.721 & 0.286 & 0.762 & 0.096 \\
		& -0.992 & 2.472 & 0.901 & -1.282 & 1.952 & 0.059 \\
		\bottomrule
	\end{tabular}
	\caption{True and fitted mixtures coefficients for the change-point example.}
	\label{table:change-point_results}
\end{table}

\end{appendices}

\end{document}